\renewcommand{\mc}[1]{\multicolumn{1}{c}{#1}}
\DeclareMathOperator{\gain}{gain}
\DeclareMathOperator{\Var}{Var}
\newcommand{\maxweight}{\mathsf{max\text{-}weight}}
\newcommand{\dels}{\mathsf{dels}}
\newcommand{\weight}{\mathsf{weight}}
\newcommand{\totweight}{\mathsf{total\text{-}weight}}
\newcommand{\suchthat}{\ | \ }
\newcommand{\E}{\mathbb{E}}
\newcommand{\R}{\mathbb{R}}
\renewcommand{\P}{\mathbb{P}}
\newcommand{\N}{\mathbb{N}}
\newcommand{\D}{\mathcal{D}}
\newcommand{\U}{\mathcal{U}}
\newcommand{\cE}{\mathcal{E}}
\newcommand{\set}[1]{{\left\{#1\right\}}}
\newcommand{\bern}{\text{Bern}}
\newcommand{\bin}{\text{Bin}}
\newcommand{\pois}{\text{Pois}}
\newcommand{\betadist}{\text{Beta}}
\newcommand{\e}{\varepsilon}
\newcommand{\p}{\varphi}
\renewcommand{\a}{\alpha}
\renewcommand{\d}{\delta}
\newcommand{\g}{\gamma}
\newcommand{\emdash}{\,---\,}
\newif\ifcomments
\begin{document}

\RUNTITLE{Berinsky et al. Tracking Truth with Liquid Democracy}


\ARTICLEAUTHORS{%
\AUTHOR{Adam J. Berinsky}
\AFF{Department of Political Science, MIT} 
\AUTHOR{Daniel Halpern}
\AFF{\AFF{School of Engineering and Applied Sciences, Harvard University}}
\AUTHOR{Joseph Y. Halpern}
\AFF{Department of Computer Science, Cornell University}
\AUTHOR{Ali Jadbabaie}
\AFF{Institute for Data, Systems, and Society, MIT}
\AUTHOR{Elchanan Mossel}
\AFF{Department of Mathematics, MIT}
\AUTHOR{Ariel Procaccia}
\AFF{School of Engineering and Applied Sciences, Harvard University}
\AUTHOR{Manon Revel}
\AFF{Institute for Data, Systems, and Society, MIT\\ Corresponding author: mrevel@mit.edu}
} 
\TITLE{Tracking Truth with Liquid Democracy}
\ABSTRACT{%
The dynamics of random transitive delegations on a graph are of particular interest when viewed through the lens of an emerging voting paradigm, \emph{liquid democracy}. This paradigm allows voters to choose between directly voting and transitively delegating their votes to other voters, so that those selected cast a vote weighted by the number of delegations they received. In the epistemic setting, where voters decide on a binary issue for which there is a ground truth, previous work showed that a few voters may amass such a large amount of influence that liquid democracy is less likely to identify the ground truth than direct voting. We quantify the amount of permissible concentration of power and examine more realistic delegation models, showing they behave well by ensuring that (with high probability) there is a permissible limit on the maximum number of delegations received. Our theoretical results demonstrate that the delegation process is similar to well-known processes on random graphs that are sufficiently bounded for our purposes. Along the way, we prove new bounds on the size of the largest component in an infinite P\'olya urn process, which may be of independent interest. In addition, we empirically validate the theoretical results, running six experiments (for a total of $N=168$ participants, $62$ delegation graphs and over $11k$ votes collected). We find that empirical delegation behaviors meet the conditions for our positive theoretical guarantees. Overall, our work alleviates concerns raised about liquid democracy and bolsters the case for the applicability of this emerging paradigm.
}%


\KEYWORDS{liquid democracy, random graph theory, collective-decision making, social choice theory} 

\maketitle

\section{Introduction}
\label{sec:intro}
\emph{Liquid democracy} is a voting paradigm that is conceptually situated between \emph{direct democracy}, in which voters have direct influence over decisions, and \emph{representative democracy}, where voters choose delegates who represent them for a period of time. Under liquid democracy, voters have a choice: they can either vote directly on an issue like in direct democracy, or delegate their vote to another voter, entrusting them to vote on their behalf. The defining feature of liquid democracy is that these delegations are \emph{transitive}: if voter 1 delegates to voter 2 and voter 2 delegates to voter 3, then voter 3 votes (or delegates) on behalf of all three voters.

In recent years, liquid democracy has gained prominence around the world. The most impressive example is that of the German Pirate Party, which adopted the \emph{LiquidFeedback} platform in 2010~\citep{kling2015voting}. Other political parties, such as the Net Party in Argentina and Flux in Australia, have run on the wily promise 
that, once elected, their representatives would be essentially controlled by voters through a liquid democracy platform. 
Companies are also
exploring the use of liquid democracy for corporate governance; Google, for example, has run a proof-of-concept experiment~\citep{hardt2015google}. Blockchain systems have also been experimenting with related  weighted decentralized voting systems~\citep{benhaim2023scaling, li2023liquid}.

Practitioners, however, recognize that there is a potential flaw in liquid democracy, namely, the possibility of \emph{concentration of power}, in the sense that certain voters amass a relatively large number of delegations, giving them pivotal influence over the final decision. This scenario seems inherently undemocratic\emdash and it is not a mere thought experiment. Indeed, in the LiquidFeedback platform of the German Pirate Party, a linguistics professor at the University of Bamberg received so many delegations that, as noted by Der Spiegel,\footnote{ See \url{https://tinyurl.com/y52j6nfs}.} his ``vote was like a decree.'' 

\citet{kahng} examine liquid democracy's concentration-of-power phenomenon from a theoretical viewpoint and establish a troubling 
impossibility result in what has been called the \emph{epistemic} setting, that is, one where there is a ground truth.\footnote{The use of the term ``epistemic'' in this context is well-established in the social choice literature~\citep{LG01,Piv12}.}
Informally, they demonstrate that, even under the strong assumption that voters 
delegate only to more ``competent'' voters, any ``local mechanism'' satisfying 
minimal conditions will, in certain instances, be subject to  concentration of power, leading to relatively low accuracy. More specifically, \citeauthor{kahng} model the problem as a decision problem where  voters decide on an issue with two outcomes, $\{0,1\}$, where $1$ is correct (the ground truth) and $0$ is incorrect. Each of the voters $i\in \{1, \ldots, n\}$ is characterized by a \emph{competence} $p_i \in [0,1]$. The binary vote $V_i$ of each voter $i$ is drawn independently from a Bernoulli distribution, that is, each voter votes correctly with probability $p_i$. Under direct democracy, the outcome of the election is determined by a majority vote: the correct outcome is selected if and only if more than half of the voters vote for the correct outcome; 
that is, it is correct if and only if $\sum_{i=1}^n V_i \geq n/2$. Under liquid democracy, there exists a set of weights, $\weight_i$ for each $i\in[n],$ which represent the number of votes that  voter $i$ gathered transitively after delegation (if voter $i$ delegates, then $\weight_i=0$). The outcome of the election is then determined by a weighted majority; it is correct if and only if $\sum_{i=1}^n \weight_iV_i \geq n/2$. 

\citeauthor{kahng} also introduce the concept of a \emph{delegation mechanism}, which determines whether voters delegate and, if so,  to whom they delegate. They are especially interested in \emph{local mechanisms}, where the delegation 
decision of a voter depends only on their local neighborhood according to an 
underlying social network. They assume that voters delegate only to those with strictly higher competence, which excludes the possibility of cyclic delegations. To evaluate liquid democracy, \citeauthor{kahng} test the intuition that society makes more informed decisions under liquid democracy than under direct democracy (especially given the foregoing assumption about upward delegation). To that end, they define the \emph{gain} of a delegation mechanism to be the difference between the probability the correct outcome is selected under liquid democracy and the probability the correct outcome
is selected under direct democracy. A delegation mechanism satisfies 
\emph{positive gain} if its gain is strictly positive in some cases, and it satisfies \emph{do 
no harm} if, for all $\e > 0$, its gain 
is at least $-\e$ for sufficiently large instances. 
Assuming that competence after delegation remains
strictly above $1/2$, this will follow from the law of large number that applies to the weighted majority with weights relatively spread out~\citep{haggstrom2006law}. 
The main result of \citeauthor{kahng} is that local mechanisms can never satisfy these two requirements. 
\citet{caragiannis2019contribution} further strengthen this negative result by showing that there are degenerate instances where local mechanisms perform much worse than either direct democracy or dictatorship (the most extreme concentration of power).\footnote{The former constructs an instance where even with arbitrarily many voters, a constant number will receive a majority of the delegations. The group has an average competence above $1/2$. The probability liquid democracy 
gives the right answer
can be upper bounded by a constant strictly below $1$, while direct democracy is correct with probability approaching one. In the latter case, the voters' numbers and relative competence are chosen 
so
that liquid democracy 
almost always gives the incorrect answer  (as does direct voting), 
while dictatorship is correct with a constant probability.
}

These results undermine the case for liquid democracy: the
benefits of delegation 
appear to be reversed by concentration of power. However, the negative conclusion relies heavily on worst-case modeling assumptions. Our research represents a significant 
advance as it offers a comprehensive framework that not only captures 
the worst-case scenarios of previous works, 
but also provides insights into more intriguing "high probability" cases. In 
particular, in this paper, we provide
a new theoretical model and extensive experiments that show 
that liquid democracy will typically satisfy a probabilistic 
version of \emph{positive gain} and \emph{do no harm} under minimal assumptions.

\subsection{Our Contributions and Techniques}
Our contributions are the following. First, building on the work of \citet{kahng}, we provide a general framework to analyze the stochastic network dynamics of transitive delegations that captures 
liquid democracy's intricate interactions between local-delegation choices and global properties. Second, we identify large classes of delegation models where liquid democracy performs well, in that delegations induce a sufficiently small amount of concentration of power and liquid democracy almost surely results in correct outcomes. Along the way, we prove new high-probability bounds on the size of the largest component in an infinite P\'olya urn process;\footnote{An infinite 
P\'olya
urn process models an urn process where each new ball picks its urn with a probability proportional to the size of the urn or creates its own urn with constant probability.} this result may be of independent interest. Finally, we conduct the first series of lab experiments on liquid democracy that can test epistemic performance. This involved over $11,000$ votes from $168$ participants in six experimental groups, where each group had pre-existing social ties. Our novel experimental design allows us to compare the performance of direct and liquid democracy, as well as to analyze properties of real voter delegation behavior. Importantly, the behaviors we observe align with one of the models we introduce, 
thus lending support to this approach. Taken together, these results 
exhibit
a regime in which liquid democracy displays 
promising
performance. We next elaborate on some of our specific techniques. 


\subsubsection{Stochastic Delegations}
Our point of departure from the existing literature is the way we model delegation in liquid democracy. To emphasize these differences, instead of calling these delegation functions \emph{mechanisms}, we instead call them delegation 
\emph{models}, as they are intended to capture independent voter behavior rather than 
prescribing to each voter to whom they must delegate. Our delegation models are defined by $M=(q,\p),$ where $q: [0, 1] \to [0, 1]$ is a function that maps a voter's competence to the probability they delegate and $\p: [0, 1]^2 \to \R_{\ge 0}$ maps a pair of competencies to a weight. In this model, each voter $i$ votes directly with probability $1-q(p_i)$ and, conditioned on delegating with probability $q(p_i)$, delegates to voter $j \neq i$ with probability proportional to $\p(p_i,p_j).$ 
These delegation functions do not model explicit reasoning; rather,
they model behaviors that may be influenced by tacit knowledge captured by $q$ and $\p.$ A voter does not need to ``know'' the competence of another voter to decide whether to delegate; rather, the delegation \emph{probabilities} are 
influenced by competence, as captured by $\p$
note that delegation cycles are possible, and we take a worst-case approach to dealing with them: If the delegations form a cycle, then all voters in the cycle are assumed to be incorrect (vote $0$).\footnote{In LiquidFeedback, delegation cycles are, in fact, ignored.}

The most significant difference between our model of delegation and that of \citet{kahng} is that in our model, each voter has a chance of delegating to any other voter, whereas in their model, an underlying social network restricts delegation options. Our model captures a connected world where, in particular, voters may have heard of experts on various issues even if they do not know them personally. Although our model eschews an explicit social network, it can be seen as embedded into the delegation process, where the probability that $i$ delegates to $j$ takes into account the probability that $i$ is familiar with $j$ in the first place. Another difference between our model and that of \citet{kahng} is that we model the competencies $p_1,\ldots,p_n$ as being sampled independently from a distribution $\mathcal{D}$. While this assumption is made mainly for ease of exposition, it allows us to avoid edge cases and obtain robust results. 

\subsubsection{Delegation Models}
Our goal is to identify delegation models that satisfy (probabilistic versions of) positive gain and do no harm. Our first technical contribution, in \Cref{sec:corelemma}, is the formulation of general conditions on the model and competence distribution that are sufficient for these properties to hold (\Cref{lem:core}). In particular, to achieve the more difficult do no harm property, we present conditions that guarantee the maximum weight $\maxweight(G_n)$ accumulated by any voter is sub-linear with high probability and the expected increase in 
competence after
delegation is at least a positive constant times the population size. These conditions prevent extreme concentration of power and ensure that the representatives 
after delegation are
sufficiently better than the entire population to compensate for any concentration of power that does happen. 

Although the proof is straightforward, the benefit of this lemma is that it then suffices to identify models and distribution classes that verify these conditions. A delegation model $M$ and a competence distribution $\mathcal{D}$ induce a distribution over delegation instances that generates random graphs in ways that relate to well-known graph processes, which we leverage to analyze our models. Specifically, we introduce three models, all shown to satisfy do no harm and positive gain under \emph{any} continuous distribution over competence levels. The first models, \emph{upward delegation} and \emph{confidence-based delegation}, are interesting but restricted case studies 
that demonstrate the robustness of our approach. 
By contrast, the \emph{general continuous delegation} model is, as the name suggests, quite general. Moreover, it is realistic: its predictions are consistent with our experiments.

\paragraph{Upward 
Delegation:} 
In \Cref{sec:up}, we consider a model according to which the 
probability $p$ of delegation
is exogenous and constant across competencies, $q(p_i)=p$, and 
delegation can occur only to voters with strictly higher 
competence (the weight that any voter $i$ puts on another voter $j$ is $\p(p_i,p_j)=\mathbb{I}_{\{p_j-p_i>0\}}.$
This model captures the fact that there might be some reluctance to delegate regardless of the voter's competence but does assume that voters act in the interest of society by only delegating to voters that are more competent than they are. 

To generate a random graph induced by such a model, one can add a single voter at a time in order of decreasing competence and allow the voter to either not delegate (with probability $1 - p$) and create their own disconnected component, or delegate to the creator of any other component with probability proportional the size of the component. This works because delegating to any voter in the previous components is possible (since they have strictly higher competence) and would result in the votes being concentrated in the originator of that component by transitivity. Such a process exactly generates a preferential attachment graph with a positive probability of not attaching to the existing components, also called an 
infinite P\'olya urn process~\citep{simon1955class}. We can 
then show that, with high probability, no component grows too large so long as $p<1$ (see Section~\ref{subsubsec:polya} for an overview of this step). Further, continuity of the competence distribution ensures that enough lower competence voters delegate to higher competence voters to sufficiently increase the average. 

\paragraph{Confidence-Based Delegation.} In \Cref{sec:med}, we consider a model in which voters delegate with probability decreasing in their competencies and choose someone at random when they delegate. That is, the probability $q(p_i)$ that any voter $i$ delegates is decreasing in $p_i$ and the weight that any voter $i$ gives to any voter $j$ is
$\p(p_i,p_j)=1$. In other words, in this model, competence does not affect the probability of receiving delegations, only the probability of delegating.  

To generate a random graph induced by such a model, one can begin from a random vertex and study the delegation tree that starts at that vertex. A delegation tree is defined as a branching process, where a node $i$'s ``children'' are the nodes that delegated to node $i$. In contrast to classical branching processes, the probability for a child to be born increases as the number of people who already received delegations decreases. Nevertheless, we prove that, with high probability, as long as a delegation tree is no larger than $O(\log n)$, our heterogeneous branching process is dominated by a sub-critical graph branching process~\citep{alon2016probabilistic}. We can then conclude that no component has size larger than $O(\log n)$ with high probability. Next, we show that the expected competence among the voters that do not delegate is strictly higher than the average competence. 

\paragraph{General Continuous Delegation.} Finally, we consider a general model in \Cref{sec:general} where the likelihood of delegation is fixed and the weight assigned to each voter when delegating is increasing in their competence. That is, each voter $i$ delegates with probability $q(p_i)=p$ and the weight that voter $i$ places on voter $j$ is $\p(p_i,p_j),$ where $\p$ is continuous and increases in its second coordinate. Thus, in this model, the delegation distribution is slightly skewed towards more competent voters. 

To generate a random graph induced by such a model, we again consider a branching process, but now voters $j$ and $k$ place different weights on $i$ per $\p$. Therefore, voters have a \emph{type} that governs their delegation behavior; this allows us to define a multi-type branching process with types that are continuous in $[0,1]$. The major part of the analysis is a proof that, with high probability, as long as the delegation tree is no larger than $O(\log n)$, our heterogeneous branching process is dominated by a sub-critical Poisson multi-type 
branching process. In a manner similar to Confidence-Based 
Delegation, we also show that there is an expected increase in competence 
after delegation.


\subsubsection{Component Sizes in Infinite P\'olya Urn Processes}
\label{subsubsec:polya}


Recall that to prove that upward delegation satisfies do no harm, we show that the largest component in an 
infinite P\'olya urn process
is sub-linear with high probability (\Cref{lem:polya}). We briefly expand on the proof as this result was, to the best of our knowledge, not previously known in the random graph literature, and may be of independent interest. 
We begin by focusing on the first $t^\gamma$ bins (for a suitably chosen $\gamma$ depending on the attachment probability $p$) and derive an upper bound on the expected size of these bins. This allows us to use Markov's inequality and union bound over all bins to show that simultaneously all of them are sublinear in size with high probability. 

Second, we take care of the remaining bins by observing that each additional bins's growth is isomorphic to a classic P\'olya urn process with two bins, whose limiting dynamic follows a Beta distribution. We analyze the rate of convergence, which allows us to give sufficiently strong bounds using Chebyshev's inequality after exactly $t-t^\gamma$ steps, and union bound over all of these bins, concluding that all are sublinear with high probability. 


\subsubsection{Consistency With Experiments} Lastly, we conduct six experiments to statistically estimate the functions $q$ and $\p$, and test the overall effectiveness of liquid democracy.
Participants were presented with several yes or no questions on 
various topics.  We call the set of questions related to each topic a \emph{task}. For each  task, 
participants
could either choose to vote directly or delegate their vote (for all questions) to another participant. They only saw the questions in a task if they chose to vote directly. In a later phase, they were asked to answers the questions they had delegated (and not seen) to see how they would have voted. This setup allows us to do a few things. First, it induces a matched-pair
design where, for each task and experiment, we can compare the accuracy of voting under liquid and direct democracy. Second, we use the answers to all questions to estimate participants'
competencies. Using this information, we study how delegation behavior depends on competence and investigate 
whether it is 
consistent
with the theoretical findings.


Results suggest that (i) competence is inversely correlated with the chance of delegation, and (ii) the likelihood of delegating to another voter increases with their competence. The results, therefore, support the assumptions and predictions made by the confidence-based and general continuous-delegation models. 
Taken together, these results 
exhibit
a regime in which liquid democracy is overall more likely to pinpoint the 
truth than direct
democracy.

\subsection{Related work}
The most closely related paper is that of \citet{kahng}, which was discussed in detail above. It is worth noting, though, that they complement their negative result with a positive one: when the mechanism can restrict the maximum number of delegations (transitively) received by any voter to $o(\sqrt{\log n})$, do no harm and positive gain are satisfied. Imposing such a restriction would require a central planner that monitors and controls delegations. \citet{golz} build on this idea: they study liquid democracy systems where voters may nominate multiple delegates and a central planner 
chooses a single delegate in order to minimize the maximum weight of any voter. Similarly, \citet{brill2018pairwise} introduces a process that allows voters to specify ordinal preferences over delegation options and possibly restricting or modifying delegations in a centralized way.  \citet{caragiannis2019contribution}, and then \citet{becker2021can} also consider central planners; they show that, for given competencies, the problem of choosing among delegation options to maximize the probability of a correct decision is hard to approximate. In any case, implementing these proposals would require a fundamental rethinking of the practice of liquid democracy. By contrast, our positive results show that \emph{decentralized} delegation models can be inherently self-regulatory, which supports the effectiveness of the current practice of liquid democracy. 

More generally, there has been a significant amount of theoretical research on liquid democracy in recent years. To give a few examples: \citet{green} studies whether it is rational for voters to delegate their vote from a utilitarian viewpoint; \citet{christoff} examine a similar question but in the context of voting on logically interdependent propositions; \citet{bloembergen2019rational, zhang2021power} and \citet{dhillon2023information} study liquid democracy from a game-theoretic viewpoint.



Next, our work builds on the random graph literature, as our delegation processes are related to well-known stochastic graph processes. Upward delegation can be viewed as a generalization of the preferential attachment model where agents do not attach to the existing component(s) with a fixed probability. Classical preferential attachment models assume that a new node attaches to 
an
existing node $n_0$ with probability (parameterized by an \textit{attachment function}) depending on the degree of $n_0$ \citep{barabasi1999emergence,durrett2007random}. 
In our framework, a new component may be created with sfixed probability,
a setup introduced by \citet{simon1955class} and usually referred to as an infinite P\'olya urn process. Others have studied the distribution of degrees~\citep{drinea2001variations}, the distribution of the number of components with $k$ people at time $t$~\citep{chung2003generalizations}, and the conditions for the emergence of infinite components \citep{collevecchio2013preferential}. However, to the best of our knowledge, the existing results do not allow us to derive bounds on the size of the largest component with high probability after a finite amount of time.


  

In terms of our experiments on liquid democracy, ours is the first paper to conduct experiments with human subjects. Previous papers have studied different aspects of liquid democracy through experiments in corporate \citep{hardt2015google} and political environments 
Independent of and essentially cooncurrent with our work,
\citet{campbell2022liquid} tested a game-theoretic formulation of liquid democracy. Unlike our experiments, they used online platforms to gather participants who did not know each other. Participants were assigned a probability of being correct and asked whether they would want to delegate to others, with experts (those with the highest probability of being correct) being publicly known. The delegations were randomly assigned to the pre-determined experts in one set-up, and through the random dot kinematogram task in another one. The group sizes considered  are $5$ people with one expert, $15$ people with $3$ experts and $125$ people with $25$ experts. While this study 
reveals
interesting connections between individuals' perceived competence and delegation behavior, it cannot investigate how experts are (or are not) identified endogenously through interpersonal knowledge embedded in a social networks, since the participants do not know each other.

Last, our work relates to recent advances in managerial studies that consider novel forms of governance, such as corporate governance~\citep[e.g.,][]{huang2023thy}, blockchain technologies~\citep[e.g.,][]{benhaim2023scaling, li2023liquid}, and prediction markets~\citep[e.g.,][]{chen2008modeling,atanasov2017distilling}.

\section{Model}
\label{sec:model}
There is a set of $n$ voters, denoted $[n] = \{1, \ldots, n\}$. We assume
voters are making a decision on a binary issue 
with possible answers 0 and 1; there is a correct alternative ($1$)
and an incorrect alternative ($0$). Each voter $i$ has a \emph{competence
level} $p_i \in [0, 1]$ which is the probability that $i$ votes
correctly. We denote the vector of competencies by $\vec{p}_n = (p_1, \ldots, p_n)$. When $n$ is clear from the context, we sometimes
drop it from the notation.

\paragraph{Delegation graphs}
A \emph{delegation graph} $G_n = ([n], E)$ on $n$ voters is a directed
graph with voters as vertices and a directed edge $(i, j) \in E$
denoting that $i$ delegates their vote to $j$.
Again, if $n$ is clear from context, we occasionally drop it from the
notation.
The outdegree of a vertex in the delegation graph is
at most $1$ since each voter can delegate to at most one
person. Voters that do not delegate have no outgoing edges.
In a delegation graph $G_n$, the
\emph{delegations received} by a voter $i$, $\dels_i(G_n)$, is
defined as the total number of people that (transitively) delegated to
$i$ in $G_n$, (i.e., the total number of ancestors of $i$ in $G_n$). The
\emph{weight} 
of a voter $i$, $\weight_i(G_n)$, is $\dels_i(G_n)+1$ (the number of delegation they received plus their own weight) if $i$ votes directly,
and $0$ if $i$ delegates.  We define
$\maxweight(G_n) = \max_{i \in [n]} \weight_i(G_n)$ to be the largest weight of any voter and define
$\totweight(G_n) = \sum_{i = 1}^n \weight_i(G_n)$. Since each vote is counted at most once, we have that
$\totweight(G_n) \le n$. However, note that if delegation edges form a
cycle, then the weight of the voters on the cycle and voters
delegating into the cycle are all set to $0$ and hence will not be
counted. In particular, this means that $\totweight(G_n)$ may be
strictly less than $n$.\footnote{This is a worst-case approach where
cycles can only hurt the performance of liquid democracy, since this assumption is equivalent to assuming that all voters on the cycles
vote incorrectly.}  

\paragraph{Delegation instances}
We call the tuple $(\vec{p}_n, G_n)$ a \emph{delegation instance}, or
simply an instance, on $n$ voters. Let $V_i = 1$ if voter $i$ would vote correctly if $i$ did vote, and
$V_i=0$ otherwise.
Fixed competencies $\vec{p}_n$
induce a probability measure $\P_{\vec{p}_n}$ over the $n$ possible
binary votes $V_i$, where $V_i \sim \bern(p_i)$. Given votes 
$V_1, \ldots, V_n$, we let $X^D_n$ be the number of correct votes
under direct democracy, that is, $X^D_n = \sum_{i = 1}^n V_i$. We let
$X^F_{G_n}$ be the number of correct votes under liquid democracy with
delegation graph $G_n$, that is, $X^F_{G_n} = \sum_{i = 1}^n
\weight_i(G_n) \cdot V_i$. The probability that direct democracy and liquid democracy are correct
are $\P_{\vec{p}_n}[X^D_n > n/2]$ and  $\P_{\vec{p}_n}[X^F_{G_n} > n/2]$, respectively. 

\paragraph{Gain of a delegation instance} We define the \emph{gain} of an instance as
$$\gain(\vec{p}_n, G_n) = \P_{\vec{p}_n}[X^F_{G_n} > n/2] - \P_{\vec{p}_n}[X^D_n > n/2].$$ In words, it is the difference between the probability that liquid democracy is correct and the probability that majority is correct.

\paragraph{Randomization over delegation instances}
In general, we assume that both competencies and delegations are chosen randomly. Each voter's competence $p_i$ is sampled i.i.d.\@ from a fixed distribution $\D$ with support contained in $[0, 1]$. Delegations will be chosen according to a \emph{model} $M$. A model $M = (q, \p)$ is composed of two parts. The first $q: [0, 1] \to [0, 1]$ is a function that maps competencies to the probability that the voter delegates. The second $\p: [0, 1]^2 \to \R_{\ge 0}$ maps pairs of competencies to a weight.
A voter $i$ with competence $p_i$ will choose how to delegate as follows:
\begin{itemize}
    \item[--] With probability $1 - q(p_i)$ they do not delegate.
    \item[--] With probability $q(p_i)$, $i$ delegates; $i$ places
      weight $\p(p_i, p_j)$ on each voter $j \ne i$ and randomly
      sample another voter $j$ to delegate to proportional to these
      weights. In the degenerate case where $\p(p_i, p_j) = 0$ for all $j \ne i$, we assume that $i$ does not delegate. 
\end{itemize}

A competence distribution $\D$, a model $M$, and a number $n$ of voters induce a probability measure $\P_{\D, M,n}$ over all
instances $(\vec{p}_n, G_n)$ of size $n$.

We can now
redefine the \emph{do no harm (DNH)} and \emph{positive gain (PG)}
properties from \citet{kahng} in a probabilistic way.  

\begin{definition}[Probabilistic do no harm]
A model $M$ satisfies \emph{probabilistic do no harm} with respect to a  class $\mathfrak{D}$ of distributions if, for all distributions
$\D \in \mathfrak{D}$ and all $\e, \d > 0$, there exists $n_0 \in \N$
such that for all $n \ge n_0$, $$\P_{\D, M,n}[\gain(\vec{p}_n, G_n)
  \ge  - \e] > 1 - \d.$$ 
\end{definition}

\begin{definition}[Probabilistic positive gain]
A model $M$ satisfies \emph{probabilistic positive gain} with
respect to a class $\mathfrak{D}$ of distributions if there exists a distribution $\D \in \mathfrak{D}$ such that for all $\e, \d > 0$, there exists $n_0 \in N$ such that for all $n \ge n_0$, $$\P_{\D, M,n}[\gain(\vec{p}_n, G_n) \ge 1 - \e] > 1 - \d.$$
\end{definition}

\footnote{Note 
that positive gain and do no harm relate to the notion of concentration of the weighted sum $\sum_{i=1}^n \weight_iV_i$. Indeed, the probability of direct democracy being correct approaches $1$ as $n$ increases when the average competence is strictly above 
$1/2$.
As a result, do no harm is satisfied by a delegation model exactly when the probability that liquid democracy is correct also approaches $1$. This happens when the competence 
after delegation
remains strictly above
$1/2$, and the weighted sum $\sum_{i=1}^n \weight_iV_i$ concentrates. 
Positive gain also holds
if there exists a setup where the average group competence is strictly below 
$1/2$,
and the average competence 
after delegation
remains strictly above 
$1/2$
and the weighted sum $\sum_{i=1}^n \weight_iV_i$ concentrates. In turn, these established benchmarks are directly mapped to existing results in social choice theory on the convergence of weighted majorities~\cite{haggstrom2006law}.}

\subsection{Core Lemma}
\label{sec:corelemma}

Next, we give a key lemma, which provides
sufficient conditions for a model $M$ to satisfy probabilistic do
no harm and probabilistic  positive gain with respect to a class
$\mathfrak{D}$ of distributions.  This lemma will form the basis of
all of our later results. Since the result follows from relatively straightforward concentration inequalities, we defer the proof to Appendix~\ref{app:core-proof}.
\begin{lemma}
\label{lem:core}

If $M$ is a  model, $\mathfrak{D}$ a class of distributions, $n$ a number of persons, and
for all distributions $\D \in \mathfrak{D}$, there is an $\a \in (0,
1)$ and $C: \N \to \N$ with $C(n) \in o(n)$ such that 
    \begin{align}
   	 &\P_{\D, M,n}\left[\maxweight(G_n) \le C(n)\right] = 1 -
      o(1)\label{constr:1} \\ 
     &\P_{\D, M,n}\left[\sum_{i = 1}^n \weight_i(G_n) \cdot p_i - \sum_{i = 1}^n p_i \ge 2 \a n \right] = 1 - o(1) \label{constr:2},
    \end{align}
    then $M$ satisfies probabilistic do no harm.
    If in addition, there exists a distribution $\D \in \mathfrak{D}$ and an $\a \in (0, 1)$ such that
    \begin{equation}
    \label{constr:3}
    \P_{\D, M,n} \left[\sum_{i = 1}^n  p_i + \a n \le n/2 \le \sum_{i = 1}^n \weight_i(G_n) \cdot p_i -  \a  n\right] = 1- o(1),
\end{equation}
then $M$ satisfies probabilistic positive gain.
\end{lemma}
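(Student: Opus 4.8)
The plan is to show that each of the three numbered conditions controls one source of error in a direct concentration argument comparing $X^F_{G_n}$ to $n/2$ and $X^D_n$ to $n/2$. Throughout, I condition on the delegation instance $(\vec p_n, G_n)$ lying in the "good" event where all the high-probability statements \eqref{constr:1}, \eqref{constr:2} (and, for positive gain, \eqref{constr:3}) hold simultaneously; since each holds with probability $1-o(1)$, their intersection does too, so it suffices to prove a deterministic-in-the-graph statement about $\P_{\vec p_n}[\cdot]$ conditioned on this good event, and then absorb the remaining $o(1)$ into $\delta$.

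For \textbf{do no harm}: fix such a good instance. Under fluid democracy $X^F_{G_n} = \sum_i \weight_i(G_n) V_i$ is a sum of independent (not identically distributed) bounded random variables, with $\E[X^F_{G_n}] = \sum_i \weight_i(G_n) p_i \ge \sum_i p_i + 2\alpha n$ by \eqref{constr:2}. The key point is that each summand lies in $[0, C(n)]$ with $C(n) = o(n)$, so the variance proxy in a Hoeffding/Bernstein bound is $\sum_i \weight_i(G_n)^2 \le C(n)\sum_i \weight_i(G_n) \le C(n)\cdot n = o(n^2)$; hence $X^F_{G_n}$ concentrates around its mean at scale $o(n)$, and in particular $\P_{\vec p_n}[X^F_{G_n} \le \sum_i p_i + \alpha n] = o(1)$ for $n$ large. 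Meanwhile $X^D_n = \sum_i V_i$ is a sum of independent Bernoullis with mean $\sum_i p_i$, and by Hoeffding $\P_{\vec p_n}[X^D_n \ge \sum_i p_i + \alpha n] = o(1)$ as well. Off these two bad events, $X^D_n < \sum_i p_i + \alpha n < X^F_{G_n}$, so whenever direct democracy is correct ($X^D_n > n/2$) we get $X^F_{G_n} > n/2$ too; hence $\P_{\vec p_n}[X^F_{G_n} > n/2] \ge \P_{\vec p_n}[X^D_n > n/2] - o(1)$, i.e. $\gain \ge -o(1)$ on the good instance. Choosing $n_0$ so the various $o(1)$ terms (including the failure probability of the good event) are below the target $\e$ and $\d$ finishes it.

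For \textbf{positive gain}: pick the distribution $\D$ and $\alpha$ furnished by \eqref{constr:3}, and condition on the good event where \eqref{constr:1} and \eqref{constr:3} both hold. Now \eqref{constr:3} gives simultaneously $\sum_i p_i + \alpha n \le n/2$ and $\sum_i \weight_i(G_n) p_i - \alpha n \ge n/2$. By the same Hoeffding argument as above (using $C(n)=o(n)$ for the fluid side), $X^F_{G_n} \ge \E[X^F_{G_n}] - \alpha n \ge n/2$ with probability $1-o(1)$, so fluid democracy is correct with probability $1-o(1)$; hence $\P_{\vec p_n}[X^F_{G_n}>n/2] \ge 1 - o(1)$. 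That alone yields $\gain \ge 1 - o(1) - \P_{\vec p_n}[X^D_n>n/2]$, which is not quite enough, so I also note that $X^D_n \le \E[X^D_n] + \alpha n = \sum_i p_i + \alpha n \le n/2$ with probability $1-o(1)$, i.e. direct democracy is correct with probability $o(1)$; combining, $\gain \ge 1 - o(1)$ on the good instance, and again the overall $o(1)$ (plus the good-event failure) can be pushed below $\e,\d$ by taking $n$ large.

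The main obstacle — really the only non-bookkeeping step — is getting a clean concentration bound for $X^F_{G_n}$ despite the summands having wildly different and possibly large weights. The resolution is exactly the observation that condition \eqref{constr:1} is tailored for: with all weights bounded by $C(n) = o(n)$ and total weight at most $n$, Hoeffding's inequality gives a deviation bound of the form $\exp(-\Omega(\alpha^2 n^2 / (C(n) n))) = \exp(-\omega(1))$, which is $o(1)$; one just has to be careful that $\alpha$ is a fixed constant (it is, by hypothesis) so that this exponent genuinely diverges. Everything else is a matter of organizing the $o(1)$ terms and quantifiers to match the $\forall \e,\d\,\exists n_0$ form of the two definitions.
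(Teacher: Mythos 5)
Your proposal is correct and follows essentially the same route as the paper's proof: condition on the intersection of the high-probability events, control $X^D_n$ by Hoeffding and $X^F_{G_n}$ by a variance-proxy bound $\sum_i \weight_i(G_n)^2 \le C(n)\cdot n = o(n^2)$ made possible by the max-weight condition, and then translate the resulting $o(1)$ failure probabilities into the $\forall \varepsilon,\delta\,\exists n_0$ form. The only (harmless) deviations are cosmetic: you conclude do-no-harm by the pointwise comparison $X^D_n < \sum_i p_i + \alpha n < X^F_{G_n}$ on the same vote realization rather than the paper's case split on whether $\sum_i p_i + \alpha n \le n/2$ or $\sum_i \weight_i(G_n)p_i - \alpha n > n/2$, and you use Hoeffding with per-variable ranges where the paper uses Chebyshev.
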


In words, condition~\eqref{constr:1} ensures that, as the number of voters grows large, the weighted number of correct votes under liquid democracy will concentrate around its expectation, $\sum_{i = 1}^n \weight_i(G_n) \cdot p_i$. Standard concentration results already imply this holds for direct democracy. Condition~\eqref{constr:2} ensures that these expectations are sufficiently separated. So with high probability, liquid democracy will have more correct votes than direct democracy, which is sufficient to guarantee DNH. Finally, Condition~\eqref{constr:3} ensures that in some cases, the expectations for direct and liquid votes will be below and 
over
half the voters, respectively, which after applying concentration means there will likely be a large gain.

In the following sections, we investigate natural delegation
models and identify conditions such that the models satisfy
probabilistic do no harm and probabilistic positive gain. In all instances, we will invoke \Cref{lem:core} after showing that its sufficient
conditions are satisfied. 

\section{Strictly Upward Delegation Model}\label{sec:up}

We now turn to the analysis of a simple model that assumes that
voters either do not delegate with fixed exogenous probability or delegate to voters that have a competence greater than
their own. 

Formally, for a fixed $p \in [0,1]$ we let $M^{U}_{p}=(q, \p)$ be the
model consisting of $q(p_i)=p$ for all $p_i \in [0, 1]$, and
$\p(p_i, p_j) = \mathbb{I}_{\{p_j > p_i\}}$ for all $i, j  
\in [n]$. That is, voter $i$ delegates with fixed probability $p$ and
puts equal weight on all the more competent voters. In other words, if voter $i$ delegates, then $i$ does so to a more
competent voter chosen uniformly at random. Note that a voter with maximal
competence will place $0$ weight on all other voters, and hence is
guaranteed not to delegate. We refer to $M_p^U$ as the \emph{Upward Delegation Model} parameterized by $p$.  

\begin{theorem}[Upward Delegation Model]
For all $p \in (0,1)$, $M^U_{p}$ satisfies probabilistic do no harm and probabilistic positive gain with respect to the class $\mathfrak{D}^C$ of all continuous distributions.
\label{thm:UDM}
\end{theorem}

The proof of the theorem relies on novel bounds we drive on the largest bin size in an infinite 
P\'olya urn 
process~\citep{simon1955class,chung2003generalizations}. We first formally define the process and present our bound in \Cref{lem:polya}. A 
\emph{P\'olya urn process with attachment probability 
$p$} begins at time $t = 1$ with one ball in one bin. At each timestep $t > 1$, a new ball arrives. With probability $1 - p$, a new bin is created and the new ball is placed in that bin; with probability $p$, the ball joins an existing bin, and it does so with probability proportional to the number of balls in the bins, i.e., if there are three bins containing $1$, $2$, and $3$ balls respectively, it joins each with probability $1/6, 2/6,$ and $3/6$ respectively. We then have the following.

\begin{lemma}
    \label{lem:polya}
    For all $p \in (0, 1)$ and $t \ge 1$, let $L^p_t$ be the random variable denoting the maximum number of balls in any bin after running the infinite 
P\'olya
urn process with new-bin probability $p$ for $t$ steps. Then, there exists $\delta < 1$ depending only on $p$ such that for all $T \ge 1$, $\Pr[L^p_T \le T^\delta] = 1 - o(1).$
\end{lemma}
\proof{Proof}
Fix the parameter $p \in (0, 1)$. Choose $\g$ to be a constant such that $3/4 < \g < 1$; note that $p + (1 - p)\g < p + (1 - p) = 1$. Choose $\d$ (for the lemma statement) such that $p + (1 - p)\g < \d < 1$. Notice that we can choose $\g$ and $\d$ such that $\d$ is arbitrarily close to $3/4 + p/4$.

Let $B^{(k)}$ denote the $k$-th bin.  Let $U_t^{(k)}$ be the size of $B^{(k)}$ at time
$t$. Since there are at most $t$ bins by time $t$, notice that $L^p_t = \max(U_t^{(1)}, \ldots, U_t^{(t)})$. In general, our approach will be to analyze bins separately and show that
$U^{(k)}_T$ remains below $T^\delta$ with high
enough probability so that we can union bound over all possible $k \le
T$. That is, we will show 
$$ \sum_{k = 1}^T
    \Pr[U_T^{(k)} > T^\delta] = o(1),
$$
which also implies $\Pr[L^p_T > T^\delta] = o(1)$.
Hence, it will be useful to consider this process more formally from the perspective of the $k$th bin, $B^{(k)}$. The $k$th
bin $B^{(k)}$ is ``born'' at some time $t \ge k$, the $k$th time in which a ball does not join a pre-existing bin, at which point $U_t^{(k)} = 1$ (prior
to this, $U_t^{(k)} = 0$). More specifically, the first bin $B^{(k)}$ is
guaranteed to be born at time $t = 1$ and for all other $k > 1$, $B^{(k)}$
will be born at time $t \ge k$ with probability $\binom{t - 1}{k -
  1}(1 - p)^{k}p^{t - k}$, although these exact probabilities will be
unimportant for our analysis. Once born, we have the following
recurrence on $U_t^{(k)}$ describing the probability $B^{(k)}$ will be
chosen at time $t$:
$$
U_t^{(k)} = 
\begin{cases}
    U_{t-1}^{(k)} + 1 & \text{with probability } \frac{p \cdot U_{t-1}^{(k)}}{t-1} \\
        U_{t-1}^{(k)} & \text{with probability } 1- \frac{p \cdot U_{t-1}^{(k)}}{t-1}.
\end{cases}
$$
Let $W_t^{(k)}$ be the process for the size of bin that is born at time $k$. That is, $W_k^{(k)} = 1$, and for $k > t$,
$W_t^{(k)}$ follows the exact same recurrence as
$U_t^{(k)}$. Note that since the $k$th bin $B^{(k)}$ can only be born at time $k$ or later, we have that $W_t^{(k)}$
stochastically dominates $U_t^{(k)}$ for all $k$ and $t$. Hence, it suffices to show that
\begin{equation}
    \label{eq:w-union}
    \sum_{k = 1}^T \Pr[W_T^{(k)} > T^\delta] = o(1).
\end{equation}

We split our analysis into two parts: the first consider the first
$T^\g$ bins, while the second considers the last $T-T^\g$ bins.

We first show that 
$\sum_{k = 1}^{T^\g} \P[W^{(k)}_T > T^\d] = o(1)$. Note that the expectation of $W_n^{(k)}$
\begin{equation}
    \E[W_n^{(k)}] = \frac{\Gamma(n+p)\Gamma(k)}{\Gamma(p+k)\Gamma(n)}
\label{eq:UDM-exp} 
\end{equation}
for all $k \le n$, where $\Gamma$ represents the Gamma function. We relegate the argument for \Cref{eq:UDM-exp} to Appendix~\ref{app:lemma_proof}. Using this along with Gautchi's inequality~\citep{gautschi1959some},
    $(t+p-1)^p \leq \frac{\Gamma(p+t)}{\Gamma(t)} \leq(t+p)^p$,
to approximate the $\Gamma$ terms, we can apply Markov's inequality and use algebra to get $\sum_{k = 1}^{n^\g} \P[W^k_n > n^\d] = o(1)$. We again relegate these arguments to Appendix~\ref{app:lemma_proof}. 

Now consider the final $T-T^\g$ components. We  will
prove that
$\Pr[W_T^{(T^\g + 1)} > T^\d] = o(1/T)$. Since $W_T^{(k)}$
stochastically dominates $W_T^{(k')}$ for all $k' \ge k$, this implies
that $\Pr[W_T^{(k)} > T^\d] = o(1/T)$ for all $k \ge T^\g + 1$. Hence, 
$$\sum_{k = T^\g + 1}^T \Pr\left[W_T^{(k)} > T^\d \right]= o(1).$$

To do this, we compare the $W_t^{(T^\g + 1)}$ process to another
process, $V_t$.
We define $V_0=1$, and for $t > 0$, take $V_t$ to satisfy the following
recurrence: 
$$
V_t = 
\begin{cases}
    V_{t-1} + 1 & \text{with probability } \frac{ V_{t-1}}{t + n^\g} \\
    V_{t-1} & \text{with probability } 1- \frac{ V_{t-1}}{t + n^\g}.
\end{cases}
$$
This is identical to the $W$ recurrence with $t$ shifted down by $n^{\gamma} + 1$ except without the $p$ factor. Hence, $V_{T - T^\g + 1}$ clearly stochastically dominates $W_T^{(T^\g + 1)}$. For convenience in calculation, we will instead focus on bounding $V_T$ which itself stochastically dominates $V_{T - T^\g + 1}$.

Next, note that the $V_t$ process is isomorphic to the
following classic P\'olya urn process. We begin with
two bins,
one with a single ball and the other with $n^\g$ balls. At each time,
a new ball is added to one of the two bins with probability
proportional to the bin size. The process $V_t$ is isomorphic to the
size of the one-ball urn after $t$ steps. Classic results tell us
that for fixed starting bin sizes $a$ and $b$, as the number of steps
grows large, the possible proportion of balls in the $a$-bin follows a
$\betadist(a, b)$ distribution~\citep{markoff1917quelques,
  eggenberger1923statistik, Polya1930quelques, munford1978urn,
  mahmoud2009Polya}. 

The mean and variance of such a Beta distribution would be sufficient to prove our necessary concentration bounds; however, for us, we need results after exactly $T - T^\g$ steps, not simply in the limit. Hence, we will be additionally concerned with the speed of convergence to this Beta distribution.

Let $X_T = \frac{V_T}{T}$ and $Z_T \sim \betadist(1, T^\g)$.
From \citet{janson2020rate}, we know that the rate of convergence is such that, for any $p \ge 1$ 
\begin{equation}
    \ell_p(X_T, Z_T) = \Theta(1/T)
    \label{eq:svante}
\end{equation}
where $\ell_p$ is the \emph{minimal $L_p$ metric}, defined as
$$\ell_p(X, Y) = \inf \set{\E[| X' - Y'|^p]^{1/p} \suchthat X'
  \overset{d}{=} X, Y' \overset{d}{=} Y},$$ which can be thought of as
the minimal $L_p$ norm over all possible couplings between $X$ and $Y$. For our purposes, the only fact about the $\ell_p$ metric we
will need is that $\ell_p(X, 0) = \E[|X|^p]^{1/p}$ where $0$ is the
identically $0$ random variable. Since $\ell_p$ is in fact a metric,
the triangle inequality tells us that $\ell_p(0, X_n) \le \ell_p(0, Z_n)
+ \ell_p(Z_n, X_n)$, so, combining with \eqref{eq:svante}, we have that 
\begin{equation}
    \E[|X_T|^p]^{1/p} \le \E[|Z_T|^p]^{1/p} + \Theta(1/T)
    \label{eq:p-distance}
\end{equation}
for all $p \ge 1$. 

Note that since $Z_T \sim \betadist(1, T^\g)$,
$$\E[Z_T] = \frac{1}{1 + T^\g} =\Theta(T^{-\g})$$
and
$$\Var[Z_T] = \frac{T^\g}{(2 + T^\g)(1 + T^\g)^2} = \Theta(T^{-2\g}).$$
Given these results, we are ready to prove that $V_T$ is smaller than $T^\d$ with probability $1-o(1/T).$ Precisely, we want to show that
$\Pr[X_T \geq T^{\d-1}] = o(1).$
By Chebyshev's inequality,
$$\Pr[X_T \geq T^{\d - 1}] \leq \frac{\Var[X_T]}{(T^{\d - 1} - \E[X_T])^2}.$$

Inequality~\eqref{eq:p-distance} with $p=1$ along with the fact that $X_T$ and $Z_T$
are always nonnegative implies that $\E[X_T] \leq \E[Z_T] + \Theta(1/T) =
O(T^{-\g})$. Hence, $T^{\d - 1} - \E[X_T] = \Omega(T^{\d - 1})$ since
$\d - 1 > -1/2 > -\g$. We can therefore write: 
\begin{equation}
    \left(T^{\d - 1} - \E[X_T]\right)^2 = \Omega(T^{-2(\d - 1)}).
    \label{eq:expbound}
\end{equation}

Inequality~\eqref{eq:p-distance} with $p=2$ implies that $\sqrt{\E[X_T^2]} \leq
\sqrt{\E[Z_T^2]} + \Theta(1/T).$ Hence, 
\begin{align*}
\E[X_T^2]
&\leq (\Theta(1/T) + \sqrt{\E[Z_T^2]})^2\\
&\le (\Theta(1/T) + \sqrt{\E[Z_T]^2 + \Var[Z_T]})^2\\
&\le (\Theta(1/T) + \sqrt{\Theta(T^{-2\g})})^2\\
&= (\Theta(1/T) + \Theta(T^{-\g}))^2\\
&= \Theta(T^{-\g})^2\\
&=  \Theta(T^{-2\g}).
\end{align*}

Next, note that $\Var[X_T] \leq \E[X_T^2],$ so
\begin{equation}
    \Var[X_T] = O(T^{-2\g})
    \label{eq:varbound}
\end{equation}
as well.
Combining \eqref{eq:expbound} and \eqref{eq:varbound}, we have that
\begin{equation*}
    \Pr[X_T \geq T^{\d - 1}]
    \leq \frac{\Var[X_T]}{(T^{\d - 1} - \E[X_T])^2} = O\left(T^{-2\g + 2(1-\d)}\right).
\end{equation*}

Since $-2\g + 2(1 - \d) < 1$, given our assumption that
$3/4 < \g < \d$, it follows that
$\Pr[X_T \geq T^{\d - 1}] = o(1/T)$, which allows us
to conclude that 
$$\sum_{k = T^\g + 1}^T \Pr[W_T^{(k)} > T^\d] = o(1).$$
 
Since we showed earlier that $\sum_{k = 1}^{T^\g} \Pr[W_T^{(k)} >
T^\d] = o(1)$, we have that 
$$
    \sum_{k = 1}^{T} \Pr[W_T^{(k)} > T^\d] = o(1),
$$
as needed.
\Halmos\endproof

We are now ready to prove the theorem about Upward Delegation.

\proof{Proof of \Cref{thm:UDM}}
  To prove the theorem, we will prove that the Upward Delegation
 Model with respect to $\mathfrak{D}^C$
  satisfies \eqref{constr:1}, \eqref{constr:2}, \eqref{constr:3}, which implies that
  the model 
    satisfies probabilistic do no harm and positive gain by \Cref{lem:core}. We show \eqref{constr:1} here and relegate \eqref{constr:2} and \eqref{constr:3} to Appendix~\ref{app:proof_udm}
\medskip

\emph{Upward Delegation satisfies \eqref{constr:1}}
\smallskip

To do this, we will simply show that the component sizes in $G_n$ sampled according to $\P_{D, M, n}$  have the 
same distribution as the bin sizes in a P\'olya  urn 
process with attachment probability $p$, and hence $\maxweight(G_n)$ follows the same distribution as $L^p_n$. Once we have shown this, \eqref{constr:1} follows immediately from \Cref{lem:polya} as $n^\delta \in o(n)$.

To that end, fix some sampled competencies $\vec{p}_n$. Recall that each entry
$p_i$ in $\vec{p}_n$ is sampled i.i.d.\@ from $\D$, a continuous distribution. Hence, almost surely, no two competencies are equal.  From
now on, we condition on this probability $1$ event.  Now
consider sampling the delegation graph $G_n$. By the design of the
model $M^U_p$, we can consider a random process for generating $G_n$ that is isomorphic to sampling according to $\P_{D,M,n}$
as follows: first, order  the
competencies $p_{(1)} > p_{(2)} > \cdots > p_{(n)}$ (note that such
strict order is possible by our assumption that all competencies are different)
and rename the voters such that voter $i$ has competence
$p_{(i)}$; then construct $G_n$ iteratively by adding the voters one
at a time in decreasing order of competencies, voter $1$ at time
$1$, voter $2$ at time $2$, and so on.  

We start with the voter with the highest competence, voter $1$.  By the choice of $\p$, voter $1$ places weight $0$ on every other voter and hence by definition does not delegate. This voter forms the first component in the graph $G_n$, which we call $C^{(1)}$. Then, we add voter $2$ who either delegates to voter $1$ joining component $C^{(1)}$ with probability $p$, or starts a new component $C^{(2)}$ with probability $1-p$. Next, we add voter $3$. If $2 \in C^{(1)}$ (that is, if $2$ delegated to $1$), $3$ either delegates to $1$ (either directly or through $2$ by transitivity) with probability $p$ or she starts a new component $C^{(2)}$. If $2 \in C^{(2)}$, then $3$ either delegates to $1$ with probability $p/2$ and is added to $C^{(1)}$, or delegates to $2$ with probability $p/2$ and is added to $C^{(2)}$, or starts a new component $C^{(3)}$. In general, at time $t$, if there are $k$ existing components $C^{(1)}, \ldots, C^{(k)}$, voter $t$ either joins each component $C(j)$ with probability $\frac{p |C(j)|}{t-1}$ or starts a new component with probability $1-p$. To construct $G_n$, we run this process for $n$ steps. Notice that this is identical to the 
P\'olya 
urn process with bins $B^{(k)}$ and balls replaced with components $C^{(k)}$ and voters being run for $n$ steps, as needed.
\Halmos\endproof

\section{Confidence-Based Delegation Model}\label{sec:med}

We now explore a model according to which voters delegate with probability that is strictly decreasing (or, monotonically decreasing, that is 
$x < y$ implies
$f(x) > f(y)$)
in their competence and when they do decide to delegate, they do so by picking a voter uniformly at random. This models the case where voters do not need to know anything about their peers'
competencies, but do have some sense of their own competence, and 
delegate accordingly. 

Formally, for any $q$, let $M^C_q=(q, \p^1)$ where $\p^1(p_i, p_j) = 1$ for all $i, j \in [n]$. Voter $i$ puts equal weight on all the voters and hence samples one uniformly at random when they delegate. We refer to $M^C_q$ as the Confidence-Based Delegation Model.

\begin{theorem}[Confidence-Based Delegation Model]
All models $M^C_q$ with monotonically decreasing $q$ satisfy probabilistic do no harm and probabilistic positive gain with respect to the class $\mathfrak{D}^C$ of all continuous distributions.
\label{thm:CBM}
\end{theorem}

\proof{Proof}
    We show that the Confidence-Based Model satisfy
\eqref{constr:1} and \eqref{constr:2} here, and relegate showing \eqref{constr:3} to Appendix~\ref{app:proof_cdm}.
\medskip

\emph{Confidence-Based Delegation satisfies \eqref{constr:1}}
\smallskip

Fix some distribution $\D \in \mathfrak{D}^C$. We show there exists
$C(n) \in O(\log n)$ such that \eqref{constr:1} holds. 
    
    Note that when sampling an instance $(\vec{p}_n, G_n)$, the
    probability an arbitrary voter $i$ chooses to delegate is
    precisely $p := \E_{\D}[q]$. To see this, consider how a voter $i$ chooses whether to delegate: they first sample a competence $p_i \sim \D$ and then sample whether or not to delegate from $\bern(q(p_i))$. Treating this as a single process, it is clear that the overall probability of choosing to delegate is exactly $\E_{\D}[q]$ by integrating out the competence.
    
    Further, since $\D$ is
    continuous and $q$ is monotonically decreasing, $p \in (0,
    1)$. When a voter does decide to delegate, they do so by picking
    another voter uniformly at random. Hence, we can consider the
    marginal distribution of delegation graphs directly (ignoring the
    competencies). We will show that when sampling a delegation graph,
    for any specific voter $i$, with probability $1 - o(1/n)$,
    $\dels_i(G_n) \le C(n)$, which implies $\weight_i(G_n) \le
    C(n)$. A union bound over all $n$ voters implies $\maxweight(G_n)
    \le C(n)$ with probability $1 - o(1)$. 
    
    To that end, we will describe a branching process similar to the
    well-known \emph{graph branching
    process}~\citep{alon2016probabilistic}, which has the property
    that the distribution of its size exactly matches the distribution
    of $\dels_i(G_n)$ for an arbitrary voter $i$. We will compare this
    process to a known graph branching process that has size at most
    $O(\log n)$ with high probability. We will show our process is
    sufficiently dominated such that it too has size at most $O(\log
    n)$ with high probability. The branching process works as
    follows. Fix our voter $i$. We sample which other
    voters end up in $i$'s ``delegation tree'' (i.e., its ancestors in
    $G_n$) dynamically over a sequence of time steps. As is standard
    for these processes, all voters $V$ will be one of three types, live, dead, or neutral. Dead voters are those whose
        ``children'' (i.e., voters who delegate to them) we have already
    sampled.  Live
    voters are voters who have decided to delegate, but whose children have
    not yet been sampled. 
   Neutral voters are still in the ``pool''
    and have yet to commit to a delegation. At time zero, $i$ is a
    live voter, there are no dead voters, and all other voters $V
    \setminus \set{i}$ are neutral. At each time step, we take some
    live voter $j$, sample which of the neutral voters choose to
    delegate to $j$, add these voters as live vertices, and update $j$ as dead. The procedure ends when there are no more live vertices,
    at which point the number of delegations received by $i$ is simply
    the total number of dead vertices. 
    
    Let us now describe this more formally. Following the notation of
    \citet{alon2016probabilistic}, let $Z_t$ denote the number of
    voters we sample to delegate at time $t$. Let $Y_t$ be the number
    of live vertices at time $t$; we have that $Y_0 = 1$. At time $t$,
    we remove one live vertex and add $Z_t$ more, so we have the recursion $Y_t = Y_{t - 1} - 1 + Z_t$. We let $N_t$ be the number
    of neutral vertices at time $t$. We have that $N_0 = n - 1$,
    and $N_t = N_{t - 1} - Z_t$. Note that after $t$ time steps, there
    are $t$ dead vertices and $Y_t$ live ones, so this is equivalent
to $N_t = n - 1 - t - Y_t$. To sample $Z_t$, we fix some live voter $j$ and ask how many of the neutral voters chose to delegate
    to $j$, conditioned on them not delegating to any of the dead
    voters. Note that when sampling at this step, there are $t - 1$
    dead voters and conditioned on the neutral voters not delegating
    to the dead ones, the probability they delegate to any of the
    other $n - t$ individuals (not including themselves) is exactly
    $\frac{p}{n - t}$, equally split between them for a total
    delegation probability of $p$. Hence $Z_t \sim \text{Bin}(N_{t-1},
    \frac{p}{n - t}) \sim \text{Bin}(n - t - Y_{t-1}, \frac{p}{n -
      t})$. We denote by $\mathfrak{X}^D_{n, p}$ the random variable that counts the size of this branching process, i.e., the
    number of time steps until there are no more live vertices. Note
    that the number of delegations received by any voter has the same
    distribution as $\mathfrak{X}^D_{n, p}$. 
    
    Choose some constant $p'$ such that $p < p' < 1$. We will be
    comparing the $\mathfrak{X}^D_{n, p}$ to a graph branching process
    $\mathfrak{X}^G_{n, p'}$. The graph branching process is nearly
    identical, except the probability each of the neutral vertex joins
    our component is independent of the number of dead vertices and is
    simply $\frac{p'}{n}$. In other words, $Z_t \sim \bin(N_{t-1},
    \frac{p'}{n})$. A key result about this branching process is the
    probability of seeing a component of a certain size $\ell$ decreases exponentially with $\ell$. In other words, there is some
    constant $c$ such that 
    $$
        \P_{\D,M_q^C,n}[\mathfrak{X}^G_{n, p'} \le c \log(n)] = 1 - o(1/n).
        $$

   Take $C(n) = c \log(n)$. Note that as long as $t$ is
        such that $\frac{p}{n - t} \le \frac{p'}{n}$, the sampling in the
    delegation branching process is dominated by the sampling in this graph branching process. Hence, as long as  $\frac{p}{n - C(n)} \le
    \frac{p'}{n}$, $\P[\mathfrak{X}^D_{n, p} \le c \log(n)] \ge
    \P[\mathfrak{X}^G_{n, p'} \le c \log(n)]$. Since $C(n) \in O(\log
    n)$, this is true for sufficiently large $n$, so for such $n$,
    $\P[\mathfrak{X}^D_{n, p} \le c \log(n)] = 1 - o(1/n)$. By a union
    bound over all $n$ voters, this implies the desired result. 
\medskip  

\emph{Confidence Based Delegation satisfies \eqref{constr:2}}
\smallskip

Let $\bar{q}$ be such that $\bar{q}(x) = 1 - q(x)$, so $\bar{q}$
represents the probability someone with competence $x$ does not delegate. Notice that $\E_{\D}[\bar{q}]$ is exactly the probability an arbitrary voter will not delegate. Let $q^+(x) = \bar{q}(x) x$ and let  $$\mu^* = \frac{\E_{\D}[q^+]}{\E_{\D}[\bar{q}]}.$$ Expanding the definition, we see that $\mu^*$ is exactly the expected value of a voter's competence, conditioned on them not voting. 
Let $\mu_\D$ the mean of
the competence distribution $\D$. We first show that 
$\mu^* > \mu_\D$.
Indeed, since both $x$ and $\bar{q}(x)$ are strictly increasing
functions of $x$, the Fortuin–Kasteleyn–Ginibre (FKG)
inequality~\citep{fortuin1971correlation} tells us that
$\E_{\D}[q^+] > \E_{\D}[\bar{q}] \cdot \E_{\D}[x] = \E_{\D}[\bar{q}]
\cdot \mu_{\D}.$ 
This implies that the expected competence conditioned on not delegating is strictly higher than the overall expected competence.

Next, we will show 
that for any constant $\g > 0$, with high probability, both $\sum_{i =
  1}^n p_i \le (\mu + \g) n$ and $\sum_{i = 1}^n \weight_i(G)p_i \ge
(\mu^* - \g) n$. If we choose $\g = (\mu^* - \mu)/3$ and $\a = \g /2$, it follows that, with high probability,  
$$\sum_{i = 1}^n \weight_i(G)p_i -  \sum_{i = 1}^n p_i \ge 2 \a n,$$
implying that \eqref{constr:2} is satisfied. 

Since the $p_i$s are bounded independent variables, it follows
directly from Heoffding's inequality that $\sum_{i = 1}^n p_i \le n(\mu
+ \g)$ with high probability, so we now focus on showing $\sum_{i =
  1}^n \weight_i(G) \cdot p_i \ge (\mu^* - \g)n$ with high
probability. To do this, we will first show that, with high probability, the
delegation graph $G$ satisfies $\dels_i(G) \le C(n)$ for all $i$ and
$\totweight(G) \ge n - C(n)\log^2 n$. 

We showed in the earlier part of this proof that $\dels_i(G) \le
C(n)$ with high probability. We will now prove that $\P_{\D,M_q^C,n}[\totweight(G)
  \geq n - C(n)\log^2 n\suchthat\dels_i(G) \le C(n)] = 1- o(1).$ To do
this, we will first bound the number of voters that, with high
probability, end up in cycles. Fix a voter $i$ and sample $i$'s delegation tree. Voter $i$
will only end up in a cycle if $i$ chooses to delegate to someone in
this delegation tree. Since we are conditioning on $\dels_i(G)
\le C(n)$, the maximum size of this tree is $C(n)$. Hence, the total
$\p$ weight that voter $i$ places on someone in the tree is at most
$C(n)$, while the total weight they place on all voters is $n-1.$
Hence, the probability that $i$ delegates to someone in their tree can
be at most $p \cdot C(n)/(n-1)$. Since this is true for each voter
$i$, the expected number of voters in cycles is at most
$np\frac{C(n)}{(n - 1)} \in O(\log n).$ By Markov's inequality, the
probability that more than $\log^2n$ voters are in cycles is at most
$np\frac{C(n)}{(n - 1)\log^2 n} = O(1/\log n) = o(1).$ 

Next, since we have conditioned on $\dels_i(G) \le C(n),$ no single
voter, and in particular no single voter in a cycle, can receive more
than $C(n)$ delegations. So conditioned on the high probability event
that there are at most $\log^2 n$ voters in cycles, there are at most
$C(n)\log^2n$ voters that delegate to those in cycles. This implies that $\totweight(G) \ge n - C(n)\log^2n + \log^2n$ with high probability.

We now show that, conditioned on the graph satisfying these properties,
the instance $(\vec{p}, G)$ satisfies $\sum_{i = 1}^n \weight_i(G)
\cdot p_i \ge n(\mu^* - \g)$ with high probability. Note
that the competencies satisfy that those that don't
delegate are drawn i.i.d.\@ from the distribution of competencies
conditioned on not delegating, which has mean $\mu^*$. Fix an
arbitrary graph $G$ satisfying the properties. Suppose $M$ is the set
of voters that do not delegate. Note that for each $i \in M$,
$\weight_i(G) \le C(n)$, by assumption. Further $\sum_{i \in M}
\weight_i(G) \ge n - C(n) \log^2(n)$. Hence, when we sample the
non-delegator $p_i$s, $\E[\sum_{i \in M} \weight_i(G) \cdot p_i] \ge
(n - C(n)\log^2(n)) \cdot \mu^*$. Moreover, $$\Var[\sum_{i \in M}
  \weight_i(G) \cdot p_i] \le \sum_{i \in M} \weight_i(G)^2 \le C(n)
\cdot n.$$ This follows from the fact that $\Var[p_i]\leq 1$ and that we have fixed the graph $G$ and hence $\weight_i(G)$ for each $i$, so
these terms can all be viewed as constants.  In addition, we know
that, for each voter $i$, 
$\weight_i(G) \le C(n)$, and $\sum_{i = 1}^n \weight_i(G) \le
n$. Hence, we can directly apply Chebyshev's inequality: 
\begin{align*}
    \P_{\D,M_q^C,n}\left[\sum_{i \in M} \weight_i(G)p_i < n(\mu^* - \g)\right] 
    &< \frac{\Var[\sum_{i \in M} \weight_i(G)p_i]}{ (\E[\sum_{i \in M} \weight_i(G)p_i] - n(\mu^* - \g))^2} \\
    &\le \frac{nC(n)}{(\g n - C(n)\log^2(n) \mu^*)^2} \\
    &\in o(1),
\end{align*}
where the final step holds because the numerator is $o(n^2)$ and the denominator is $\Omega(n^2)$. Hence, $\sum_{i \in M} \weight_i(G)p_i \ge n(\mu^* - \g)$ with high probability, as needed.

To summarize, we have proved that,
conditioned on $\dels_i(G) \le C(n)$ for all $i$ and $\totweight(G)
\ge n - C(n)\log^2 n, \sum_{i = 1}^n \weight_i(G) \cdot p_i \ge
n(\mu^* - \g / 3)$ occurs with high probability. Given that,
conditioned on $\dels_i(G) \le C(n), \totweight(G) \ge n - C(n)\log^2
n$ occurs with high probability and that $\dels_i(G) \le C(n)$ occurs
with high probability, we can conclude by the chain rule that the
intersection of these events hold with high probability. Given that
the probability of any of this event is greater than the probability
of the intersection, we can conclude that $\sum_{i = 1}^n \weight_i(G)
\cdot p_i \ge n(\mu^* - \g / 3)$ occurs with probability $1-o(1),$ as
desired. 
\Halmos\endproof

\section{Continuous General Delegation Model}
\label{sec:general}
Finally, we study a model in which voters delegate with fixed probability, and they
do so by picking a voter according to a continuous increasing delegation
function. This is a general model in which delegations can either go to more or less competent neighbors but where more competent voters are more likely to be chosen over less competent ones.  

Formally, let $M^S_{p, \p} =(q^p, \p)$ where $q^p$ is a constant function equal to $p$, that is, $q^p(x) = p$ for all $x \in [0, 1]$, and $\p(x,y)$ is non-zero, continuous, and increasing in $y$. We then have the following.

\begin{theorem}[Continuous General Delegation Model]
All models $M^S_{p, \p}$ with $p \in (0, 1)$ and $\p$ that is non-zero, continuous, and increasing in its second coordinate satisfy probabilistic do no harm and probabilistic positive gain with respect to the class $\mathfrak{D}^C$ of all continuous distributions.
\label{thm:SPM}
\end{theorem}
\proof{Proof}
A majority of the proof is related to Appendix~\ref{app:proof_spm}. In the following, we show the beginning of the proof, which describes the setup for proving \eqref{constr:1}.

Fix $M^S_{p, \p}$ and and $\D \in \mathfrak{D}^C$. Note that since $\p$ is continuous and always positive on the compact set $[0, 1]^2$,
$\p$ is in fact uniformly continuous and there are bounds
$L, U \in \R^+$ such that $\p$ is bounded in the interval $[L,
  U]$. Additionally, we can assume without loss of generality that for all $x \in [0, 1]$, $\E_{\D}[\p(x, \cdot)] = 1$. Indeed, $\E_{\D}
[\p(x, \cdot)]$ is a positive, continuous function of $x$, so
replacing $\p$ by $\p'(x, y) = \p(x, y)/\E_{\D}[\p(x, \cdot)]$
induces the same model and satisfies the desired property. 

\medskip

\emph{The Continuous General Delegation Model satisfies \eqref{constr:1}.}
\smallskip

Our goal is to show there is some $C(n) \in O(\log n)$ such that, with
high probability, no voter receives more than $C(n)$ delegations. To
do this, just as in the proof of \Cref{thm:CBM}, we consider a branching process of the delegations received beginning with some voter
$i$. We will show that under minimal conditions on the sampled competencies (which all occur with high probability), this
branching process will be dominated by a well-known \emph{subcritical
multi-type Poisson branching process}~\citep{bollobas2007phase}, which
has size $O(\log n)$ with high probability. 

For a fixed competence vector $\vec{p}_n$, the branching process for
the number of delegations received by a voter $i$ works as follows. We keep track of three sets of voters: those that are live at time $t$
($L_t$), those dead at time $t$ ($D_t$), and those 
neutral at time $t$ ($N_t$). Unlike in the proof of \Cref{thm:CBM},
where it was 
sufficient to keep track of the number of voters in each category,
here we must keep track of the voter identities as well, as they do not all delegate with the same probability. At time zero, the only live
voter is voter 
$i$ and the rest are neutral, so $L_0 = \set{i}$, $D_0 = \emptyset$, and $N_0 = [n] \setminus \set{i}$. As long as there are still live voters,
we sample the next set of delegating voters $Z_t$ in time $t$ by
choosing some live voter $j \in R_{t - 1}$ and sampling its
children. Once $j$'s children are sampled, $j$ becomes dead, and $j$'s
children become live. All voters that did not delegate and were not delegated to remain neutral. The children are sampled
independently; the probability they are included is the probability they delegate to $j$ conditioned on them not delegating to the dead
voters in $D_{t -1}$. For each voter $k \in N_{t - 1}$, $k$ will be
included with probability 
$$
    p \cdot \frac{\p(p_k, p_j)}{\sum_{k' \in [n] \setminus (D_{t - 1} \cup \set{k})} \p(p_k, p_{k'})}.
$$
This is precisely the probability $k$ delegates to $j$ conditioned on them not delegating to any voter in $D_{t - 1}$. We continue this process until there are no more live voters, at which point the number of delegations is simply the number of dead voters, or equivalently, the total number of time steps. We denote by $\mathfrak{X}^D_{\vec{p}_n, i}$ the size of the branching process parameterized by competencies $\vec{p}_n$ and a voter $i \in [n]$. 

Our goal will be to compare $\mathfrak{X}^D_{\vec{p}_n, i}$ to the
outcome of a well-known multi-type Poisson
branching process. In  this branching process, there are a fixed finite
number $k$ of types of voters.\footnote{In the literature, these are often called \emph{particles}, but to be consistent with our other branching
processes, we call them voters here.} The process itself is
parameterized by a $k \times k$ matrix $M$, where $M_{\tau \tau'}$ is the expected number of children of type $\tau'$ a voter of type $\tau$
will have. The process is additionally parameterized by the type $\tau \in [k]$ of the starting voter. The random variable $Y_t$ keeps
track of the number of live voters of each type; it is 
a vector of length $k$, where the $\tau$th entry is the number of live voters of type $\tau$. Hence,
$Y_0 = e_\tau$, the (basis) vector with a 1 in entry $\tau$ and an
entry 0 for all other types.  
We sample children by taking
an arbitrary live voter of type $\tau'$ (the $\tau'$ component in
$Y_{t - 1}$ must be positive, indicating that there is such a voter), and
sampling its children $Z_t$, which is also a vector of length $k$,
each entry indicating the number of children of that type. The vector
$Z_t$ is sampled such that the $\tau''$ entry is from the
$\pois(M_{\tau'\tau''})$ distribution. That is, children of different
types are sampled independently from a Poisson distribution, with the
given expected value. We have the recursion $Y_t = Y_{t - 1} + Z_t -
e_{\tau'}$. 

Note that this means that there is no ``pool'' of voters to choose from;
in fact, it is possible for this process to grow unboundedly large (see~\citep[Section 11.6]{alon2016probabilistic} for the
classical description of the single-type Poisson branching
process). Nonetheless, this process will still converge often enough
to remain useful. 
We denote by $\mathfrak{X}^P_{M, \tau}$ the random variable that gives the size of this branching process, parameterized by
expected-children matrix $M$ and starting voter type $\tau \in [k]$. 
Such a branching process is considered \emph{sub-critical} if the
largest eigenvalue of $M$ is strictly less than
$1$~\citep{bollobas2007phase}. In such a case, if we begin with voter
of any type $\tau \in [k]$, the probability of the branching process surviving $\ell$ steps decreases exponentially in $\ell$. Hence, there
is some $c$ such that for all $\tau \in [k]$, 
$$\P[\mathfrak{X}^P_{M, \tau} \le c \log (n)] = 1 - o(1/n).$$

To compare these branching processes, we  make a sequence
of adjustments to the original branching process that at each step
creates a dominating branching process slightly closer in flavor to
the multi-type Poisson. In the end, we will be left with a
sub-critical multi-type Poisson process that we can bound. 

Fix some $\e > 0$, which is a parameter in all of our steps. Later, we will choose $\e$ to be sufficiently small (specifically,
such that $p\frac{(1+\e)^3}{1-2\e}<1$) to ensure that the Poisson
branching process is sub-critical. To convert from our delegation branching process to the Poisson branching process, we take a voter's
type to be
their competence (which completely characterizes their delegation
behavior). However, to compare to the Poisson process, there
must be a finite number of types. Hence, we partition the
interval $[0,1]$ into $B$ buckets, each of size $1/B$, such that
voters in the same bucket delegate and are delegated to
``similarly''. We choose $B$ large enough such that all points in
$[0, 1]^2$ within a distance of $\sqrt{2}/B$ of each other differ in
$\p$ by at most $L \cdot \e$. (Recall that the range of $\p$ is in the interval $[L,U]$.)
This is possible since $\p$ is uniformly
continuous. Further, this implies any points $(x, y), (x', y')$ within
a square with side length $1/B$ have the property that $\p(x, y) \le
\p(x', y') + L \cdot \e \le (1 + \e) \cdot \p(x', y')$. Note that $B$
depends only on $\p$ and $\e$, and hence is a constant with respect to
the number of voters $n$. 

We say a voter $i$ is of type $\tau$ if $\frac{\tau-1}{B} <
p_i\leq\frac{\tau}{B}$ for $1\leq\tau\leq B$ (with a non-strict
inequality for $\tau = 1$, so $0$ is of type $1$). Let $S_{\tau} =
(\frac{\tau-1}{B}, \frac{\tau}{B}]$ be the set of competencies of type $\tau$ (except that, in the case that $\tau = 1$, we take $S_{1}$ to be
the closed interval $[0,\frac{1}{B}]$). Let $\pi_{\tau} = \D[S_\tau]$ be the probability
that a voter has type 
$\tau$. Since the types form a partition of $[0, 1]$, we have that
$\sum_{\tau = 1}^B \pi_\tau = 1$. 

For any two types $\tau, \tau',$ we define
\begin{equation*}
    \p'(\tau, \tau') = \sup_{(x,y)\in S_{\tau}\times S_{\tau'}}\p(x,
    y).\footnote{Note that, because $\p$ is increasing in its second
    coordinate, one can actually write $\Tilde{\p}(\tau, \tau') =
    \sup_{x\in S_{\tau}}\p(x, \frac{\tau'}{B})$.}  
\end{equation*}
We abuse notation by extending $\p'$ to operate directly on competencies in $[0, 1]$ by first converting competencies to types and then applying $\p'$.
Then, $\p'$ has the property that for any $p_i, p_j \in [0, 1]$, $$\p(p_i, p_j) \le \p'(p_i, p_j) \le (1 + \e)\p(p_i, p_j).$$
We have that for all $\tau$, if $x \in S_\tau$, then $$\sum_{\tau' =1
}^B \p'(\tau, \tau') \pi_{\tau'} = \E_{\D}[\p'(x, \cdot)] \le (1
+ \e)\cdot \E_{\D}[\p(x, \cdot)] = (1 + \e).$$ Hence, we define 
$$
    \tilde{\p}(\tau, \tau') = \p'(\tau, \tau') \cdot \frac{(1 +
      \e)}{\sum_{\tau'' = 1}^B \p'(\tau, \tau'')\pi_{\tau''}}. 
$$
We again abuse notation to allow $\tilde{\p}$ to operate directly on competencies. We have that $\tilde{\p}(x, y) \ge \p'(x, y) \ge \p(x, y)$ for all competencies $x, y \in [0, 1]$ and further, for all $\tau$, $\sum_{\tau' = 1}^B \tilde{\p}(\tau, \tau') \pi_{\tau'} = 1 + \e$.

The Poisson branching process we will eventually compare to is one with $B$ types parameterized by the expected-children matrix $M$, where 
$$M_{\tau \tau'} = p\frac{(1+\e)^2}{1-2\e}\Tilde{\p}(\tau, \tau').$$
First, we show that $M$ has largest eigenvalue strictly less than
$1$ (for our choice of $\e$), so that the branching process will be
subcritical. Indeed, $M$ has only positive entries, so we need only
exhibit an eigenvector with all nonnegative entries such that the
associated eigenvalue is strictly less than $1$. The Perron-Frobenius theorem tells us this eigenvalue must be maximal. 

The remainder of this proof can be found in Appendix~\ref{app:proof_spm}. At a high level, we give details for proving the Poisson process is subcritical, as well as completing the comparison of between the original delegation process and this one. The comparison makes use of the concentration of the number of voters in each bucket. The proofs of \eqref{constr:2} and \eqref{constr:3} follow a similar structure to Confidence-based, however, they are quite a bit more intricate due to the inter dependencies between competence level and delegation probability.
\Halmos\endproof


\section{Liquid Democracy in Experiments}
In six experiments, we statistically estimate the functions $q$ and $\p$ to assess the real-world implications of our theoretical findings. These experiments rely on a novel design measuring the 
vote  n anon-strategic, non-incentivized liquid democracy stting, while simultaneously estimating voters' competence. Our empirical result are 
consistent
exhibits
a regime in which liquid democracy enhances 
collective intelligence,
leveraging interpersonal knowledge embedded in social networks and identifying diverse sets of experts. Data and code are available at \url{http://tinyurl.com/osf-liqdem}\footnote{Full link: \url{https://osf.io/skxwg/?view_only=3671d431bcfd4a9cb94ded5aa86a0a95}}.

\subsection{Experimental Design}
\subsubsection{Experiments and Material}
We conducted $E=6$ experiments after an initial pre-test\footnote{A description of the setup and results from the pre-study can be found in Appendix~\ref{app:pre} and initial results can be found in~\cite{revelliquid}.} between March 21st and November 27th, 2022.\footnote{Our protocol E-3948 was approved and exempted by the university Committee on the Use of Humans as Experimental Subjects.}  In each experiment $e$, a group of participants\footnote{Note that liquid 
democracy depends on the potential for beneficial delegation. It is therefore necessary to work with participants that have at least a passing familiarity with each other.
Experiments were conducted in places such as classrooms and company workshops, where pre-existing group structures guaranteed such conditions. While significant preparation was needed to ensure correct experimental set-ups for these 
environments, this design did have the benefit of producing high-quality data with few missing entries and minimal drop-out.} performed $|\mathcal{T}_e|$ tasks\footnote{$|\mathcal{T}_e| = 4$, except for experiment $e=6$ in which $|\mathcal{T}_e| = 12$: the final experiment was conducted over a longer period of time, allowing more tasks to be completed.} Each task consisted of $8$ questions on the corresponding 
topic
that were primarily taken and adapted from the work of \citet{simoiu2019studying}.\footnote{To be consistent with the theoretical setup under study, we converted all categorical questions into binary questions. For example, for a question from \citet{simoiu2019studying} of the form ``Where is this famous landmark from?'' with four options (Italy, Tibet, Greece, or Brazil), we selected a possible answer (e.g., Brazil) to reformulate the question as: ``Is this famous landmark from Brazil?'' In more detail, we first randomly selected which questions would be correct 
(to avoid the sense
that most questions are incorrect) and then, for the incorrect ones, drew a wrong option at random. We found multiple inconsistencies in the \citet{simoiu2019studying} data that we corrected, and the prediction questions pertained to events that had passed, so these were replaced with new ones.} 
A total of $N=168$ individuals participated. They hailed from over $30$ countries; $33\%$ were female, $1\%$ were non-binary, $64\%$ were male, and $2\%$ preferred to self-describe. Each experiment $e$ had a number of participants $N_e$ ranging between $14$ to $50$. A description of the settings and group sizes are presented in  Appendix~\ref{app:participation}, and the survey material can be found in Appendix~\ref{app:survey}


\subsubsection{Survey Flow}
Participants began by
providing informed consent and inputting their name. Next, they completed the following steps.

\emph{First experimental stage:} Participants 
were
presented with a task and could either answer a series of questions related to that theme or delegate the task to a peer. For instance, a task read: ``You will be shown images of architectural landmarks from around the world, and asked to select the country where the landmark is located,'' followed by ``Do you want to vote yourself or delegate your vote to a trusted peer?'' If they chose to vote themselves, they were taken to the $8$ questions contained in the task. If they chose to delegate, they were asked to select the name of their delegate and then immediately directed to the next task. Importantly, when deciding whether or not to delegate, participants did not see the questions.

\emph{Second experimental stage:} Participants were then asked to answer ``additional questions.'' These were all the questions corresponding to tasks they had chosen to delegate in the first stage. We collected this data at the end of the experiment so as not to prime the participants on the exercise.\footnote{We validated this approach with a robustness check on the time spent by participants as a function of how often they delegated (see 
Appendix \ref{app:rob}).}

Finally, optional background questions were asked on the last page. Note that the order in which tasks, questions within each task, and the ``True/False'' options appeared were all randomized. The entire flow is summarized in \Cref{Fig:surveyflow} in \Cref{app:survey}.

\subsubsection{Data Collected}\label{sec:data}
Let $[N]$ be the set of $N$ participants and let $[E]$, the set of $E$ experiments. Each experiment $e\in [E]$ has $N_e$ participants so that $N = \sum_{e\in[E]}N_e.$ $[N_e]$ denotes the subset of voters in experiment $e$ and $\mathcal{T}$ is the set of tasks surveyed ($|\mathcal{T}|=15$). For each task $t\in \mathcal{T}$ there is a set $R_t$ of $8$ corresponding questions. We let $R = \bigcup_t R_t$ be the set of all questions. For each participant $i$, $e(i) \in [E]$ is the experiment they participated in; for each question $r$, $t(r) \in \mathcal{T}$ is its corresponding task.

In the experiments, we collect (i) the direct vote to each question $i$ answered $v_{i, r} \in \{0,1\}$ where $1$ means correct and $0$ means incorrect and (ii) the binary signal $\delta_{i, t}$ equal to $1$ if $i$ delegated on task $t$ and $0$ otherwise (note that $\delta_{i, t}$ is constant at the task level), along with which voter they delegated to. From this collected data, we can compute $w_{i, t}$, the weight of voter $i$ on task $t$. This is $i'$s total weight after adding up all transitive delegations; it is set to $0$ when $i$ delegates. \Cref{Fig:graph_main} provides an example of a collected delegation graph.

In rare cases, a delegation could not be included for a couple of possible reasons. First, if a participant delegated to somebody who did not complete the survey. In this case, we would simply ignore the delegation (assuming they directly voted). Second, in an instance of a cycle (e.g., participant $i$ delegated to participant $j$ who delegated to participant $i$). These were also ignored (i.e., assumed that no voter on the cycle delegated). In many real-world implementations, such participants would be notified of the cycle and asked to choose a new delegate or vote directly.

\begin{figure}[htb]
  \centering
  \includegraphics[width=.9\linewidth]{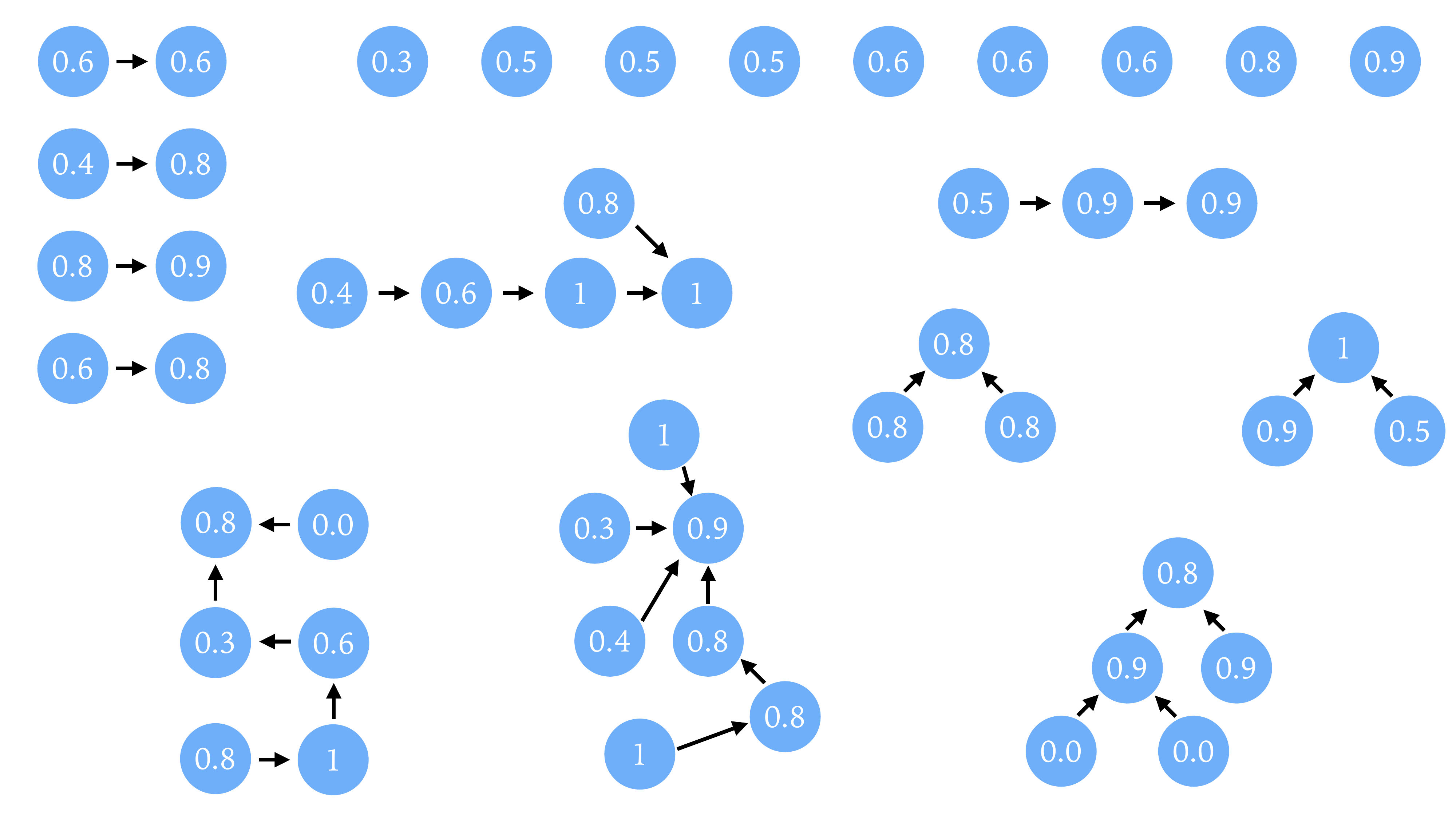}
  \caption{Delegation graphs for task $T_7$ ("You will be given upcoming European men soccer games and asked to predict the games' outcome.") from Experiment $6.$ Each node is a voter and the node's number represents the rounded expertise $\eta_{i, t}$ of a given voter $i$ for task $t,$ computed using Item Response Theory, see \Cref{app:IRT}.}
  \label{Fig:graph_main}
\end{figure}

\subsection{Delegation and Competence Statistics}
Over the $1096$ (participant/task) pairs, we observed a total of $505$ delegations meaning participants delegated  $47\%$ of the time ($std=0.49$) 
The rate varied across experiments from $32\%$ ($std=0.49$) in experiment $2$ to $54\%$ ($std=0.50$) in experiment $5$ and across tasks from  $22\%$ ($std=0.50$) in task $T_8$ to $80\%$ ($std=0.40$) in task $T_{15}.$ Among those who voted directly, $15\%$ received only one delegation besides their own (hence had weight $2$ in the decision), $6\%$ received two delegations, and just about $5\%$ received five or more delegations.
However, in one experiment, 
over half the votes were delegated to a single participant. Additionally, throughout all the experiments, we observed only four delegation cycles, and all were only of size two (where $a$ delegates to $b$, and $b$ delegates back to $a$). These occurred in Experiment $4$ with $N_4=27,$ and in Experiment $6$ with $N_6=50.$ Examples of additional delegation graphs can be found in Appendix~\ref{app:graph-examples}.

\subsubsection{Estimating Competence}\label{comp}
In order to evaluate how delegation behavior relates to competence, we need to estimate participants' competence. We denote by $\eta_{i, t}$ the estimated competence of participant $i$ in task $t.$ Naively, participants' competence per task could be estimated 
by averaging the number of correct
answers given on all $8$ questions of that task,  $\eta^{\text{naive}}_{i, t} = \frac{\sum_{r \in R_t} v_{i, r}}{|R_t|}.$  However, such a computation does not account for the questions' heterogeneity. 
We thus estimate
$\eta_{i, t}$ using the Item Response Theory framework (IRT)~\citep{lalor2023py}, which
provides a widely used parametric model to estimate competence $\eta_{i, t}$ and question difficulty from repeated measurements. We explain the parametric estimation in Appendix~\ref{app:IRT}.\footnote{While $\eta^{\text{naive}}_{i, t}$ takes on one of nine values (multiples of $1/8$), $\eta_{i, t}$ (computed using IRT) is a continuous variable that can take on arbitrary values in $\mathbb{R}.$ We normalize so that $\eta_{i, t} \in [0,1]$, and assume this to be the competence, the probability of being correct. Note that these different methods yield a correlation between $\eta^{\text{naive}}_{i, t}$ and $\eta_{i, t}$ of more than $94\%.$}

\subsubsection{Gender-based statistics} While we might worry that delegation patterns vary across gender due to significant differences in confidence~\cite[e.g.,][]{ellis2016women,sarsons2021confidence}, we 
actually
find no significant differences in these experiments, neither in measured competence in tasks nor in propensity to delegate. ANOVA tests for the propensity to delegate (resp. competence) across gender shows no significant differences with  $p=0.464$ (resp. $p=0.112$). Tukey tests for pairwise mean comparison further validate the absence of significant differences across the different genders (see Appendix~\ref{app:tukey}).

\subsection{Estimating the Probability of Delegating as a Function of Competence}\label{sec:method_q}
We now turn to estimating $q$ and $\p$ as a function of voters' competence. Recall that $q(\eta)$ represents the probability that somebody of competence $\eta$ chooses to delegate. We have observations $\delta_{i,t}$ encoding participant $i'$s delegation choice for task $t$, and an estimate $\eta_{i, t}$ of $i$'s competence on task $t.$ We use these to estimate the relationship between competence $\eta_{i, t}$ and the probability of delegating $q(\eta_{i, t}).$ 

\subsubsection{Methods} To estimate $q$, we fit a logistic model, regressing $\delta_{i,t}$ against $\eta_{i, t}$. The following equation shows the relationship we wish to fit, where $\alpha_0$ is the intercept and $\beta^q$ is the effect size we measure:
\begin{equation}
    \log\left(\frac{\Pr[\delta_{i, t}=1]}{1 - \Pr[\delta_{i, t}=1]}\right) = \alpha_0 + \beta^q \eta_{i, t} + \varepsilon_{i}.
\label{equ:q0}
\end{equation}
To account for potential correlation in the error term within participants' answers, when estimating the parameters in \Cref{equ:q0}, we cluster standard errors at the participant level. We also test for the data normality; results for these test can be found in Appendix~\ref{app:norm}.

We repeat the procedure above
on data sets filtered by task, this time having a distinct $\beta_t$ for each task $t$ to measure the task-specific estimates. Additionally, we run these with fixed-effects for individuals and tasks to more directly measure the impact of competence (rather than just looking at population trends). Additional details and results can be found in Appendix~\ref{app:q}.

\subsubsection{Results} 

We find $\beta^q = -2.24,$ with standard error $s.e.=0.42$, statistics $z=-7.12$ and p-value $p=10^{-7}.$ In turn,  we estimate that
$q(\eta_{i, t}) = \widehat{\Pr}[\delta_{i, t}=1] = \frac{1}{1 + \exp^{-(-1.39-2.24\times\eta_{i, t})}},$
suggesting that the probability of delegating decreases with competence. We can also test for monotonic dependence through a model-free method using a Pearson correlation test and its associated p-value. We find a correlation coefficient of $-0.17$ and $p<5\times10^{-8}.$

\subsection{Estimating Weight Function Used to Delegate}\label{sec:method_phi}
Recall that in the theoretical model, a voter with competence $\eta_1$ delegates to another with competence $\eta_2$ with probability proportional to $\varphi(\eta_1, \eta_2)$.

\subsubsection{Methods}
We first bucket the observed competence levels into $B$ clusters $c_1, \ldots, c_B$. We assume that $\varphi$ is constant across inputs in the same bucket, and fit it based on bucket ``centers,'' $\eta_1, \ldots \eta_B$, which are simply taken to be the mean values of the competences in each bucket, i.e., $\eta_\ell = \frac{\sum_{i, t: \eta_{i, t} \in c_\ell} \eta_{i, t}} {|\{(i, t) \mid \eta_{i, t} \in c_\ell\}|}$. This means we can estimate $\varphi(x, y)$ using the number of delegations from any competence $x'$ to competence $y'$ where $x'$ and $y'$ fall in the same bucket as $x$ and $y$, respectively. Finally, we 
determine the Kendall tau rank
correlation coefficient 
between
$\varphi(x, y)$ and $y$ with its associated p-value to test for monotonic relation between $\varphi$ and its second coordinate.


\paragraph{Bucketing strategies.} We discritize the segment $[0,1]$ into $B$ buckets. We do so using several methods (to ensure the robustness of our approach); we describe here the $k$-means clustering procedure and discuss the rest in Appendix~\ref{app:bucket-strategies}.

To bucket using $k$-means, we optimize for $B$ clusters, $c_1,\ldots,c_B$, that minimize $\sum_{k=1}^B\sum_{\eta_{i,t}\in c_k} \left(\eta_{i,t}-\frac{\sum_{\eta_{i,t}\in c_k}\eta_{i,t}}{|c_k|}\right)^2.$ In words, we compute a partition of the $[0,1]$ segment such that the total squared distance from elements to their cluster centers is minimized. We use the standard $k$-means clustering algorithm to find the clusters~\citep{hartigan1979k}.

\paragraph{Estimation of $\varphi$ for a given delegation graph.}
Next, we wish to fit a function $\varphi.$ For given experiment $e$ and task $t$, we estimate $\varphi_{e, t}(\eta_\ell, \eta_k)$ for each $\ell, k \in [B],$ so for conciseness, we write  $\varphi^{\ell}_{e, t}(\eta_k) := \varphi_{e, t}(\eta_\ell, \eta_k)$.\footnote{Note that because the number of participants in each bucket changes for different experiments/tasks, it is difficult to fit a single function. Instead, we first 
found
the most likely $\varphi$ to have generated each experiment/task and then 
combined
these to find an overall best fit.}

Observe that the experiment can then be viewed as a multinomial trial: from the perspective of an $\ell-$participant, there are $B$ choices to pick from, where the probability of picking a $k-$participant is proportional to all the $\varphi^{\ell}_{e, t}(\eta_k)$ for $k\in[B]$ and the number of participants of each competence level. We observe instances of these choices, the $z^\ell_k$ that are the observed number of times that someone of type $\ell$ delegated to someone of type $k.$ It then suffices to find the maximum likelihood estimators for $\varphi^{\ell}_{e, t}(\eta_k)$ as a function of $z^\ell_k$, $n_k$ and $n_{\ell}.$ We provide more technical details in Appendix~\ref{app:mle-phi}.

\paragraph{Testing for monotonic dependence of $\varphi$ in its second coordinate.}\label{phi:test}
Finally, we test for potential monotonic dependence of $\varphi_{e, t}^{\ell}(\eta_k)$ as a function of $\eta_k$ 
visualizing the $\varphi_{e, t}^{\ell}(\eta_k)$ in \Cref{Fig:phiQ} and computing 
the Kendall tau rank
correlation coefficient between $\varphi_{e, t}^{\ell}(\eta_k)$ and $\eta_k$ for a fixed $\ell.$ and its associated p-value. The Kendall tau rank correlation coefficient evaluates the similarity between two vectors of rank -- its form and significance is detailed in  Appendix~\citeauthor{app:phi-kendall-tau}.

\subsubsection{Results} We show here the results for using k-means bucketing with $B=4$.\footnote{This was the optimal number found using the \emph{Kneedle algorithm}~\citep{satopaa2011finding}} Additional results for other strategies and other numbers of buckets can be found in Appendix~\ref{app:buckets-results}. Descriptions of these four buckets can be found in \Cref{tab:bucket-stats}.

\begin{table}[htb]
    \centering
    \caption{Bucket Descriptions}
    \label{tab:bucket-stats}
    \begin{tabular}{cccc}
    \toprule
        Bucket & Interval & Mean competence & Proportion of participants \\
        \midrule
        $c_1$ &$[0.00, 0.514]$ &$0.43$&$16\%$ \\
        $c_2$ &$[0.515, 0.674]$&$0.60$& $ 32\%$\\
        $c_3$ & $[0.677, 0.814]$&$0.75$& $35\%$\\
        $c_4$ & $[0.818, 1.00]$&$0.88$& $17\%$\\
        \bottomrule
    \end{tabular}
    
\end{table}

In 
\Cref{Fig:phiQ}, the blue crosses in each column show the $\varphi^{\ell}_{e, t}(\eta_k)$ for all $(e, t)$ (for a particular $\eta_\ell$ in the four plots on the left and for all combined in the right plot) as a function $\eta_k$. The pink points represent the average across all experiments and tasks for a given $\eta_k,$ and the regression line corresponds to an ordinary least square regression on the mean values. We show this both pooled only for those in the same bucket, as well as all grouped together.



\begin{figure}[htb]
\centering

\begin{subfigure}[b]{.55\textwidth}
\includegraphics[width=\linewidth]{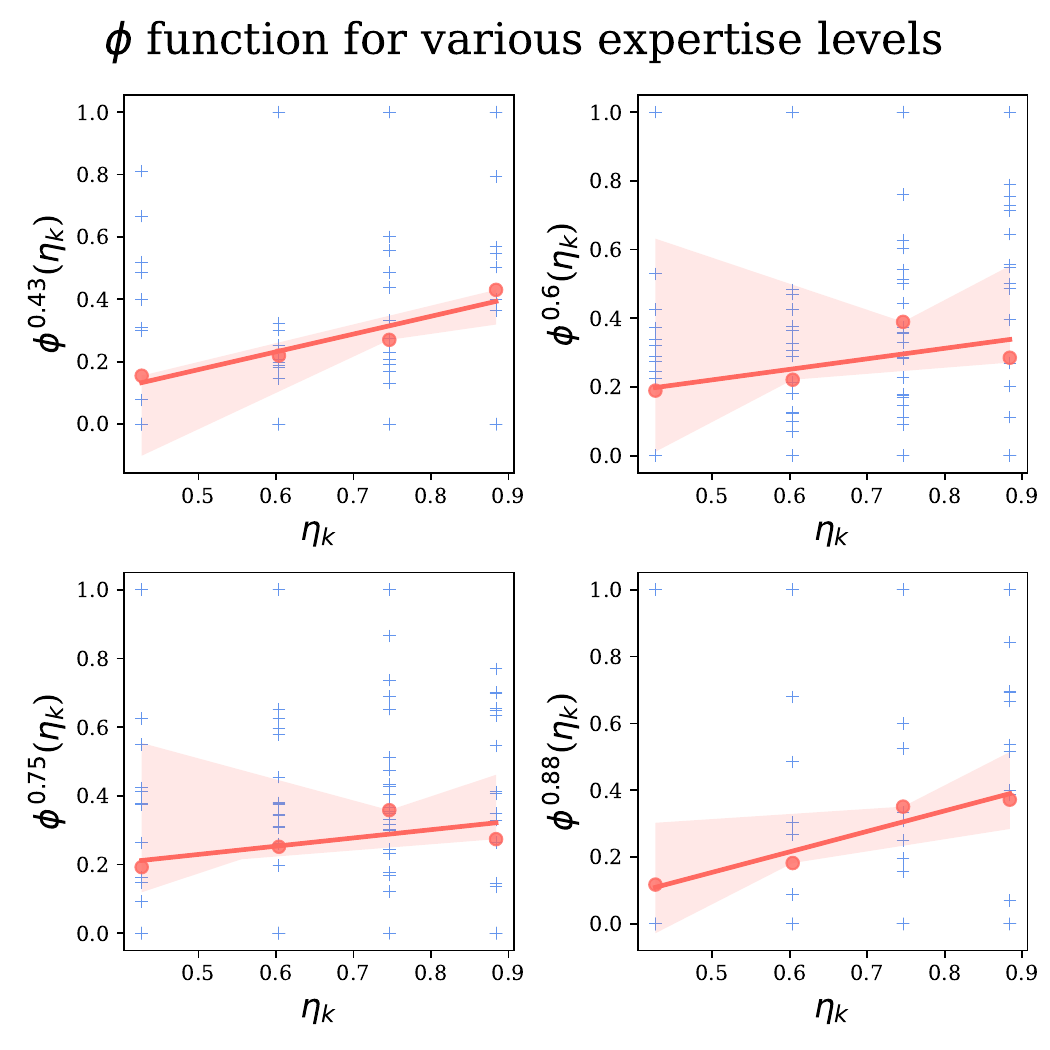}
\caption{Estimated values separated by input bucket.}
\end{subfigure}
\begin{subfigure}[b]{.44\textwidth}
    \includegraphics[width=\linewidth]{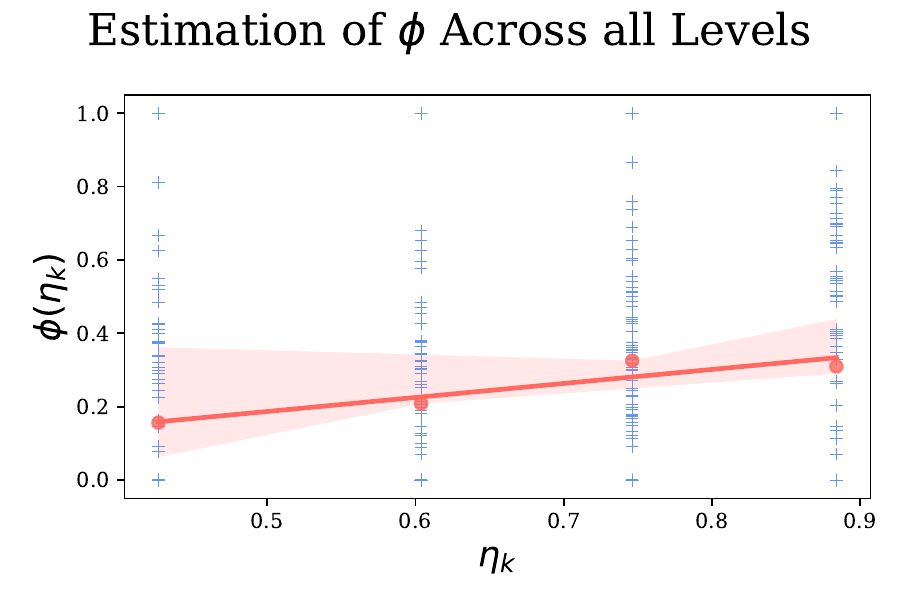}
    \vspace{3.3em}
    \caption{\small Estimated values pooled together.}
    \end{subfigure}
  \caption{Pooled estimates of $\varphi^{\ell}_{e, t}$, both for each bucket individually, and grouped together. The blue crosses show the values computed for $\varphi_{e,t}^{\ell}(\eta_k)$. The pink dots show the average across all values for that $\eta_k$, and the pink lines correspond to a linear regression over the mean values. We observe increasing trends across the board, with slope (coefficient of determination) being $0.53 (0.90), 0.28 (0.46), 0.29 (0.47)$ and $0.60 (0.92)$, respectively, for individual buckets, and $0.38 (0.85)$ for the pooled test. The shaded area represents the $95\%$ confidence interval.\label{Fig:phiQ}}
\end{figure}

To test the significance of the trends observed in \Cref{Fig:phiQ}, we test whether the Kendall 
tau
rank correlation coefficient between $\varphi_{e, t}^{\ell}(\eta_k)$ and $\eta_k$ signals significant associations, both at the overall level and when fixing $\ell,$ or $\eta_{\ell},$ the first coordinate in $\varphi_{e, t}(\eta_{\ell}, \eta_k).$ \Cref{phi:test_table} shows the resulting correlation coefficients and significance tests. Both the trends observed in \Cref{Fig:phiQ} and in \Cref{{phi:test_table}} confirm that there is 
a
statistically significant increase in $\varphi_{e, t}(\eta_{\ell}, \eta_k)$ as a function $\eta_{\ell}$ across all 
expertise levels,confirming that voters behave according to 
the general continuous delegation model. We further note that these significant trends are valid at the granularity of three of the four buckets (the third bucket $c_3$ 
exhibits
non-statistically significant positive 
Kendall
tau rank correlation.)
We also run the same tests partitioned into tasks. The results can be found in Appendix~\ref{app:phi-task-specific}. We last check that these results are not sensitive to the bucketing strategy in~\ref{app:buckets-results}.

\begin{table}[htb]
\caption{Summary of correlation effects}\label{phi:test_table}

  \centering
{  \footnotesize\begin{adjustbox}{angle=0}\begin{tabular}{@{\extracolsep{5pt}}lccccc} 
\\[-1.8ex]\hline 
\hline \\[-1.8ex] 
& \multicolumn{1}{c@{}}{Overall} & 
\multicolumn{3}{c@{}}{For fixed $\ell$}\\ 
\cmidrule(lr){2-2}
\cmidrule(lr){3-6} \\[-2ex] 
& & $c_1$ & \mc{$c_2$} & \mc{$c_3$} & \mc{$c_4$}\\ 
\midrule
Correlation & $0.17^{****}$ & 
 $0.29^{**}$ & $0.12^{*}$ & $0.11$ & $0.28^{***}$\\ 
  P-value & $2\times 10^{-5}$ & 
  $2\times10^{-2}$& $9\times10^{-2}$ & $1\times10^{-1}$ & $3\times10^{-3}$\\ \hline \\ [-1.8ex] \textit{Note:}  & \multicolumn{5}{r}{$^{*}$p$<$0.1; $^{**}$p$<$0.05; $^{***}$p$<$0.01; $^{****}$p$<$0.0001} 
\end{tabular}\end{adjustbox}}
\end{table}
\subsection{Experimental Conclusions}
We found that
voters' likelihood to delegate decreases with their competence (as suggested by the confidence-based model). In addition, voters are more likely to delegate to someone of increasing competence (as suggested by the general continuous model). Note that, in fact, the general continuous model can be generalized to allow monotonically decreasing $q$ as well (indeed, it suffices to consider the expectation of $q(p_i)$ 
taken over the distribution
of competence as 
the
constant probability of voting). 
Our empirical 
results are hence
consistent
with a general continuous delegation model. Unsurprisingly, the 
upward-delegation
model we described based on \cite{caragiannis2019contribution,kahng} that 
leads to a catastrophic concentration of power is not consistent with experimental data: 
voters do not delegate only to higher-competence agents, and we do not observe constant 
delegation competence.

\subsection{Additional Results}
In the Appendix~\ref{app:additional}, we run additional tests and check other properties of the collected data. These include comparing the frequency of correctness between liquid and direct democracy (Appendix~\ref{app:frequency}), analyzing the increase in competence (Appendix~\ref{app:increase}), and analyzing the concentration of power,  using both the maximum weight (Appendix~\ref{app:maxweight}) and the power of small coalitions (Appendix~\ref{app:coalition}). Note that 
the latter results are particularly relevvant to the
concentration of power in liquid democracy, a topic 
that was extensively discussed in previous work. Recall that, while previous work exhibited scenarios in which concentration of power occurs, our work is concerned with measuring whether such 
an extreme concentration of power is likely. In all but 
one of the $32$ instances, we found no evidence of concentration of power. Anecdotally, we further observe that a number roughly the square root of the number of voters controls half of the votes.

\section{Discussion}
\label{sec:disc}


Our paper relies on a set of assumptions and modeling choices that are worth discussing.

First, a prominent feature of our model is that 
there is no underlying social network, that is, there is no restriction on whom a voter may delegate to. As we explained in Section~\ref{sec:intro}, we believe this is realistic in a variety of scenarios. But we can, in fact, extend our results to a model where a directed social network is first sampled, and then a $(q, \p)$-model is followed. The social network must be sampled such that the neighbors of each voter are chosen uniformly at random, although the number of such neighbors could follow any small-tailed distribution.
Intuitively, delegation proportional to weighting the neighbors of $i$ (rather than the entire population) is equivalent to a possibly different weighting over the entire population. \footnote{This extension does not carry over to undirected networks, since if voters have a small number of neighbors, we would expect many 2-cycles to form after delegation, which, under the worst-case cycle approach, may not be canceled out by the overall increase in competence.}An open research direction is to consider graph topologies not covered by these dynamics.

Second, building on \citet{kahng}, we assume that there exists a true best alternative. Needless to say, this assumption is necessary if we wish to ``defend'' liquid democracy against their conclusions. It is also an extremely well-studied assumption that dates back to the 
18th century~\citep{Young88}. The existence of a ground truth is easily justified in the contexts of prediction markets or corporate governance, where alternative policies can be measured in terms of concrete metrics like ``estimated revenue in five years,'' and these metrics can be communicated to voters. That said, some decisions inherently rely on other subjective criteria that we do not capture.


Third, again like previous papers~\citep{kahng,caragiannis2019contribution,becker2021can}, we assume that voters vote independently. Admittedly, this is not a realistic assumption; relaxing it, as it was relaxed for the classic Condorcet Jury Theorem~\citep{haggstrom2006law, nitzan2017collective}, is a natural direction for future work.


Fourth, our models do not take strategic behaviors into account. In the same vein, our experiments 
do not involve
explicit incentives and, while we do not have reasons to believe that significant strategic voting occurs, studying it 
these issues is beyond the scope of this work. It would be of interest to extend our results to a more game-theoretic setting, and relate them to work focusing on
game-theoretic issues in 
liquid democracy~\citep{bloembergen2019rational,zhang2021power,dhillon2023information}. Along those lines, experiments with monetary incentives would be interesting.

Fifth, our framework for stochastic delegations open interesting directions for more research. For instance, one could 
characterize all the delegation models satisfying positive gain and do no harm. One may also consider more general 
local-delegation models
that would depend on the competences of all voters.

More generally, our work aims to provide a better understanding of a prominent shortcoming of liquid democracy: concentration of power. But there are others. For example, any voter can see the complete delegation graph under current liquid democracy systems\emdash a feature that helps voters make informed delegation decisions (because one's vote can be transitively delegated). 
This may lead to voter coercion, however, and the tradeoff between transparency and security is poorly understood.

Finally, to summarize, we have introduced a general framework to investigate stochastic dynamics in liquid democracy and 
proved
new conditions for the convergence of weighted majorities; we then 
identified
regimes in which liquid democracy leads to correct outcomes with high probability. 
In that sense, our work is to liquid democracy what the Condorcet Jury Theorem is to direct democracy. 
There are many reasons to be excited about the potential of liquid democracy~\citep{blum2016liquid}. We believe that our results provide another such reason and hope that our techniques will be useful in continuing to build the theoretical and empirical understanding of this compelling paradigm.




\bibliographystyle{informs2014}
\bibliography{abb,sample-bibliography}

\newpage
\begin{center}
    \LARGE \textbf{Tracking Truth with Liquid Democracy - Appendix} \\[1.5em]
    \large Adam J. Berinsky, Department of Political Science, MIT \\[1em]
    \large Daniel Halpern,  School of Engineering and Applied Sciences, Harvard University\\[1em]
    \large Joseph Y. Halpern, Department of Computer Science, Cornell University\\[1em]
    \large Ali Jadbabaie, Institute for Data, Systems, and Society, MIT\\[1em]
    \large Elchanan Mossel, Department of Mathematics, MIT\\[1em]
    \large Ariel Procaccia, School of Engineering and Applied Sciences, Harvard University\\[1em]
    \large Manon Revel, Institute for Data, Systems, and Society, MIT (Corresponding author: mrevel@mit.edu)\\[1em]
\end{center}

\begin{APPENDIX}{}


\section{Hoeffding's Inequality}\label{app:proofHO}
Throughout many of the proofs, we will make use of the following well-known concentration inequality~\citep{H63}:
\begin{lemma}[Hoeffding's Inequality]
\label{lem:hoeffding}

Let $Z_1, \cdots , Z_n$ be independent, bounded
random variables with $Z_i \in [a, b]$ for all i, where $-\infty < a \leq b < \infty$. Then
\begin{equation*}
    \P\left[\frac{1}{n}\sum_{i=1}^n{Z_i - \E[Z_i]}\geq t\right] \leq \exp\left(-\frac{2nt^2}{(b-a)^2}\right)
\end{equation*}

and 

\begin{equation*}
    \P\left[\frac{1}{n}\sum_{i=1}^n{Z_i - \E[Z_i]}\leq -t\right] \leq \exp\left(-\frac{2nt^2}{(b-a)^2}\right)
\end{equation*}
for all $t\geq0.$
\end{lemma}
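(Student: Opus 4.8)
The plan is to prove both tail bounds by the standard exponential-moment (Chernoff) method, and to obtain the lower-tail inequality from the upper-tail one by applying the latter to the variables $-Z_1, \ldots, -Z_n$, which are independent and lie in $[-b, -a]$, an interval of the same width $b - a$. So I focus on the upper tail. Writing $S = \sum_{i=1}^n (Z_i - \E[Z_i])$, for any $\lambda > 0$ Markov's inequality applied to the nonnegative variable $e^{\lambda S}$ gives
\begin{equation*}
\P\left[\frac{1}{n} S \ge t\right] = \P\left[e^{\lambda S} \ge e^{\lambda n t}\right] \le e^{-\lambda n t}\, \E\!\left[e^{\lambda S}\right].
\end{equation*}
By independence of the $Z_i$, and hence of the centered variables $Z_i - \E[Z_i]$, the moment generating function factorizes as $\E[e^{\lambda S}] = \prod_{i=1}^n \E[e^{\lambda (Z_i - \E[Z_i])}]$, so it suffices to control each factor.

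The crux is the following estimate, usually called \emph{Hoeffding's Lemma}: if $X$ satisfies $\E[X] = 0$ and $X \in [a, b]$ almost surely, then $\E[e^{\lambda X}] \le \exp(\lambda^2 (b - a)^2 / 8)$ for every $\lambda \in \R$. Applying it to each centered variable $Z_i - \E[Z_i] \in [a - \E[Z_i],\, b - \E[Z_i]]$, an interval still of width $b - a$, yields $\E[e^{\lambda S}] \le \exp(n \lambda^2 (b-a)^2 / 8)$, and therefore
\begin{equation*}
\P\left[\frac{1}{n} S \ge t\right] \le \exp\!\left(-\lambda n t + \frac{n \lambda^2 (b-a)^2}{8}\right).
\end{equation*}

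Proving Hoeffding's Lemma is the main obstacle, and I would argue it as follows. By convexity of $x \mapsto e^{\lambda x}$, for each $x \in [a, b]$ the value $e^{\lambda x}$ is bounded above by the chord through $(a, e^{\lambda a})$ and $(b, e^{\lambda b})$; taking expectations and using $\E[X] = 0$ to kill the linear term gives $\E[e^{\lambda X}] \le \frac{b}{b-a} e^{\lambda a} - \frac{a}{b-a} e^{\lambda b}$. Setting $h = \lambda(b-a)$ and $\rho = -a/(b-a) \in [0,1]$, the right-hand side equals $e^{\psi(h)}$, where $\psi(h) = -\rho h + \log(1 - \rho + \rho e^{h})$. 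A direct computation shows $\psi(0) = 0$ and $\psi'(0) = 0$, while $\psi''(h) = q(1-q)$ with $q = \rho e^h / (1 - \rho + \rho e^h) \in (0,1)$, hence $\psi''(h) \le 1/4$ for all $h$. Taylor's theorem with Lagrange remainder then gives $\psi(h) = \tfrac{1}{2} h^2 \psi''(\xi) \le h^2/8 = \lambda^2 (b-a)^2 / 8$, which is exactly the claimed bound.

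Finally, I optimize the free parameter. The exponent $-\lambda n t + n\lambda^2 (b-a)^2/8$ is a convex quadratic in $\lambda$, minimized at $\lambda^\star = 4t / (b-a)^2 > 0$ (the degenerate case $t = 0$ is trivial, as the stated bound is then $1$). Substituting $\lambda^\star$ collapses the exponent to $-2 n t^2 / (b-a)^2$, yielding precisely the upper-tail inequality; applying this to $-Z_i$ delivers the lower tail, completing the proof. The only subtlety worth care is confirming the second-derivative bound $\psi'' \le 1/4$, since the sharp constant $2$ in the final exponent depends entirely on it.
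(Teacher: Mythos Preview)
Your proof is correct and is the standard Chernoff-bound argument for Hoeffding's inequality: Markov on $e^{\lambda S}$, factorization by independence, Hoeffding's Lemma via convexity and the $\psi'' \le 1/4$ computation, then optimization over $\lambda$, with the lower tail obtained by reflection. There are no gaps.

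As for comparison with the paper: the paper does not actually prove this lemma. It is stated as a well-known concentration inequality with a citation to Hoeffding's original 1963 paper and is used as a black box throughout. So your proposal is not redundant with anything in the paper\emdash you have supplied a self-contained proof where the authors simply quote the result. If anything, your write-up is more detailed than what a typical reader would need here, but it is accurate and complete.
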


\section{Missing Proofs}
\subsection{Proof of Lemma \ref{lem:core}}\label{app:core-proof}

We establish the two properties separately.

\medskip
\emph{Probabilistic do-no-harm:}
We first show that a model $M$ that satisfies
conditions~\eqref{constr:1} and \eqref{constr:2}
satisfies probabilistic do no harm. Fix an arbitrary competence
distribution $\D \in \mathfrak{D}$ and let $\a$ and $C$ be such that
\eqref{constr:1} and \eqref{constr:2} are satisfied. Without loss of
generality, 
suppose that $C(n) \le n$ for all $n$, as replacing any larger values of $C(n)$ with $n$ will not affect \eqref{constr:1} (since
$\maxweight(G_n) \le n$ for all graphs $G_n$ on $n$ vertices). Fix
$\e, \d > 0$. We must identify some $n_0$ such that for all $ n \ge
n_0$, $\P_{\D, M,n}[\gain(\vec{p}_n, G_n) \ge  - \e] > 1 - \d$. 

We will begin by showing there exists $n_1 \in \N$ such that for all instances $(\vec{p}_n, G_n)$ on $n \ge n_1$ voters, if both
\begin{align}
    &\maxweight(G_n) \le C(n) \mbox{ and } \label{eq:cn}\\
    &\sum_{i = 1}^n \weight_i(G_n) \cdot p_i - \sum_{i = 1}^n p_i \ge 2 \a n, \label{eq:exp}
\end{align}
then
\begin{equation}
    \label{eq:gain}
    \gain(\vec{p}_n, G_n) \ge -\e.
\end{equation}

Since \eqref{eq:cn} and \eqref{eq:exp} each hold with probability $1 - o(1)$ by 
\eqref{constr:1} and \eqref{constr:2}, for sufficiently large $n$, say
$n \ge n_2$, 
they will each occur with probability at least $1 - \d / 2$. Hence, by
a union bound, for all $n \ge n_2$, they both occur with probability at least
$1 - \d$. By taking $n_0 = \max(n_1,n_2)$, this implies that
probabilistic do no harm is satisfied.

We now prove that, for sufficiently large $n$, \eqref{eq:cn} and \eqref{eq:exp}
imply \eqref{eq:gain}.
First, we will show that
\begin{equation}
    \label{eq:gain_suff}
    \gain(\vec{p}_n, G_n) \ge - \P_{\vec{p}_n}[X^D_n > X^F_{G_n}].
\end{equation}
Indeed, we have that
\begin{align*}
   \P_{\vec{p}_n}[X^D_n > n/2 ] &= \P_{\vec{p}_n}[X^D_n > n/2, X^F_{G_n} > n/2 ] + \P_{\vec{p}_n}[X^D_n > n/2, X^F_{G_n} \leq n/2 ] \\
   &\leq  \P_{\vec{p}_n}[X^F_{G_n} > n/2 ] + \P_{\vec{p}_n}[X^D_n > X^F_{G_n}]
\end{align*}
where the first 
transition holds by the law of total probability, 
and the second because the corresponding events are contained in each other. That is, $$\set{X^D_n > n/2, X^F_{G_n} > n/2} \subseteq \set{X^F_{G_n} > n/2}$$ and $$\set{X^D_n > n/2, X^F_{G_n} \leq n/2} \subseteq \set{X^D_n > X^F_{G_n}}.$$
Re-arranging the terms above yields \eqref{eq:gain_suff}.

Hence, for our purpose, it suffices to show that \eqref{eq:cn} and \eqref{eq:exp} imply $\P_{\vec{p}_n}\left[X^D_n > X^F_{G_n}\right] \le \e.$ Intuitively, we will use \eqref{eq:exp} to show the expected value of $X^D_n$ is well below the expected value of $X^F_{G_n}.$ Then we will show both $X^D_n$ and $X^F_{G_n}$ concentrate well around their means, where for the latter we will need \eqref{eq:cn}. Together, these observations imply that $X^F_{G_n} > X^D_n$ with high probability.

Fix an instance $(\vec{p}_n, G_n)$ on $n$ voters satisfying \eqref{eq:cn} and \eqref{eq:exp}. We will show that
for sufficiently large $n$,                                 
    \begin{equation}
    \label{eq:direct}
        \P_{\vec{p}_n}\left[X^D_n < \sum_{i = 1}^n p_i + \a  n \right] > 1 - \e / 2
    \end{equation}
    and
    \begin{equation}
        \label{eq:liquid}
        \P_{\vec{p}_n}\left[X^F_{G_n} > \sum_{i = 1}^n \weight_i(G_n) \cdot p_i - \a n \right] > 1 -  \e / 2.
    \end{equation}
Note that since \eqref{eq:exp} holds for this instance, $\sum_{i = 1}^n p_i + \a  n \le \sum_{i = 1}^n \weight_i(G_n) \cdot p_i - \a n$. Therefore, when both events 
whose probability is considered in \eqref{eq:direct} and \eqref{eq:liquid} hold, $X^D_n \le X^F_n$. Hence, 
\begin{align*}
    \P_{\vec{p}_n}[X^D_n \le X^F_{G_n}]
    &\ge \P_{\vec{p}_n}\left[X^D_n < \sum_{i = 1}^n p_i + \a  n, X^F_{G_n} > \sum_{i = 1}^n \weight_i(G_n) \cdot p_i - \a n \right] > 1 - \e
\end{align*}
where the last inequality holds by a union bound.
This implies that $\P_{\vec{p}_n}[X^D_n \le X^F_{G_n}] < \e$, as needed.



It remains to be shown that \eqref{eq:direct} and \eqref{eq:liquid} hold for
        sufficiently large $n$. For \eqref{eq:direct}, this follows
        directly from Hoeffding's inequality
        (\Cref{lem:hoeffding} in Appendix~\ref{app:proofHO}). To prove \eqref{eq:liquid}, first note
        that, as shown in \citet{kahng}, 
	\begin{align*}
		\Var_{\vec{p}_n}\left[X^F_{G_n}\right]
		&= \sum_{i = 1}^n \weight_i(G_n)^2  \cdot p_i (1 - p_i)\\
		&\le \frac{1}{4} \cdot \sum_{i = 1}^n \weight_i(G_n)^2\\
		&\le \frac{1}{4} \cdot \sum_{i = 1}^{\lceil n/C(n) \rceil} C(n)^2\\
		& < nC(n) \in o(n^2),
	\end{align*}
	where the first inequality holds because $p(1 - p)$ is upper
        bounded by $1/4$, the second because $\sum_{i = 1}^n
        \weight_i(G_n) \le n$ with each $\weight_i(G_n) \le C(n)$ so
        the value is maximized by setting as many terms to $C(n)$ as
                possible, and the final inequality holds because $C(n) \le n$. 
	
	Hence, by Chebyshev's inequality,
	$$
	    \P_{\vec{p}_n}\left[X^F_{G_n} \le \E_{\vec{p}_n}\left[X^F_{G_n}\right] - \a n \right]
	    \le \frac{\Var_{\vec{p}_n}[X^F_{G_n}]}{( \a n)^2}.
	$$
	This bound is $o(1)$ because the numerator is $o(n^2)$ and the
        denominators is $\Omega(n^2)$. This implies that for sufficiently large
        $n$, it will be strictly less than $\e/2$, so       \eqref{eq:liquid} holds. 
	
	\medskip
	\emph{Probabilistic positive gain:}
	Fix a distribution $\D
        \in \mathfrak{D}$ and an $\alpha \in (0,1)$ such that
        \eqref{constr:3} holds.  We want to show that $M$ 
        satisfies probabilistic positive gain. 
        Since $\D \in \mathfrak{D}$, it
        also satisfies \eqref{constr:1} for some $C$. We show below that
there exists an $n_3$ such that all
 instances $(\vec{p}_n, G_n)$ with $n \ge n_3$ voters satisfying
        \eqref{eq:cn} for which $\sum_{i = 1}^n  p_i + \a n \le n/2 \le
        \sum_{i = 1}^n \weight_i(G_n) \cdot p_i -  \a  n$, 
        we have that $\gain(\vec{p}_n,
        G_n) \ge 1- \e$. As with the DNH part of the proof, since the
        events of \eqref{constr:1} and \eqref{constr:3} each hold with probability
        $1 - o(1)$, for sufficiently large $n$, say $n \ge n_4$, they
        each occur with probability at least $1 - \d / 2$. Hence,
        by a union bound, for all $n \ge n_4$, they both occur with
        probability $1 - \d$. For $n_0 = \max(n_3,n_4)$,
        probabilistic positive gain is satisfied. 
	
	It remains to show that that if 
         \eqref{constr:1} and \eqref{constr:3}  hold for a specific
        instance $(\vec{p}_n, G_n)$, then $\gain(\vec{p}_n, G_n) \ge
        1- \e$ for sufficiently large $n$. Since $\D \in
        \mathfrak{D}$,  \eqref{eq:direct} and \eqref{eq:liquid} are both
        satisfied for sufficiently large $n$. When $$\sum_{i = 1}^n
        p_i + \a n \le n/2 \le \sum_{i = 1}^n p_i - \weight_i(G_n)
        \cdot \a  n$$ is satisfied as well, we get that
        $\P_{\vec{p}_n}\left[X^D_n > 
          n/2 \right] < \e / 2$ and $\P_{\vec{p}_n}\left[X^L_{G_n} > n/2
          \right] > 1 - \e / 2$, so $\gain( \vec{p}_n, G_n) > 1 - \e$
        is immediate. \qed

\subsection{Missing Details from the Proof of Lemma~\ref{lem:polya}}\label{app:lemma_proof}
The first missing details was the proof of \Cref{eq:UDM-exp},
\[
    \E[W_T^{(k)}] = \frac{\Gamma(T+p)\Gamma(k)}{\Gamma(p+k)\Gamma(T)}
\]
for all $k \le T$, where $\Gamma$ represents the Gamma function. The second was showing
\[
    \sum_{k = 1}^{T^\g} \P[W^{(k)}_T > n^\d] = o(1).
\]

Recall that $W_k^{(k)} = 1$ and we have the following recurrence for all $t > k$:  
$$
W_t^{(k)} = 
\begin{cases}
    W_{t-1}^{(k)} + 1 & \text{with probability } \frac{p \cdot W_{t-1}^{(k)}}{t-1} \\
        W_{t-1}^{(k)} & \text{with probability } 1- \frac{p \cdot W_{t-1}^{(k)}}{t-1}.
\end{cases}
$$
 By the tower property of expectation, for all $t \ge k + 1$,
\begin{align*}
    \E[W_t^{(k)}]
    &= \E[\E[W_t^{(k)} \suchthat W_{t-1}^{(k)}]]\\
    &=  \E[W_{t-1}^{(k)}(1 + \frac{p}{t-1})]\\
        &= \E[W_{t-1}^{(k)}](1 + \frac{p}{t-1}).
\end{align*}
Thus, by a straightforward induction argument and the fact that $\E[W_k^{(k)}] = 1$,
\[
    \E[W_T^{(k)}] = \E[W_k^{(k)}] \prod_{i=k}^{T-1} (1+\frac{p}{i}) = \prod_{i=k}^{T-1} (1+\frac{p}{i}).
\]
Expanding this, we have
\begin{align*}
    \prod_{i=k}^{T-1} (1+\frac{p}{i})
    &= \prod_{i=k}^{T-1} \frac{i+p}{i}\\
    &= \frac{1}{\prod_{i = k}^{T - 1} i} \cdot \prod_{i=k}^{T-1} (i + p)\\
    &=  \frac{(k-1)!}{(T-1)!} \cdot \frac{\prod_{i=0}^{T-1} (i+p)}{\prod_{i=0}^{k-1} (i+p)}\\
    & =  \frac{(k-1)!}{(T-1)!}\frac{\frac{\Gamma(p+T)}{\Gamma(p)}}{\frac{\Gamma(k + p)}{\Gamma(p)}}\\
    &= \frac{\Gamma(T+p)\Gamma(k)}{\Gamma(p+k)\Gamma(T)},
\end{align*}
where the fourth equality holds because
$\Gamma(x+1)=x\Gamma(x)$ for all $x \in \R$, and the last uses the fact
that $\Gamma(n)=(n-1)!$ for all $n\in\N$. 
This proves \eqref{eq:UDM-exp}.

For the second missing detail, can now use Markov's inequality to show that for all $k$,
\begin{equation*}
  \Pr\left[W_T^{(k)} > T^\d \right]
  \le \frac{\E[W_T^{(k)}]}{T^\d}
  = \frac{1}{T^{\d}} \cdot \frac{\Gamma(T+p)}{\Gamma(T)} \cdot  \frac{\Gamma(k)}{\Gamma(k + p)}.
\end{equation*}
Hence,
$$\sum_{k = 1}^{T^\g} \Pr[W_T^{(k)} > T^\d]
\le \sum_{k=1}^{T^\g} \frac{1}{T^{\d}} \cdot \frac{\Gamma(T+p)}{\Gamma(T)} \cdot  \frac{\Gamma(k)}{\Gamma(k + p)}\\
= \frac{1}{T^{\d}} \cdot \frac{\Gamma(T+p)}{\Gamma(T)} \cdot  \sum_{k=1}^{T^\g} \frac{\Gamma(k)}{\Gamma(k + p)}.$$
What remains to be shown is that 
$$\frac{1}{T^{\d}} \cdot \frac{\Gamma(T+p)}{\Gamma(T)} \cdot  \sum_{k=1}^{T^\g} \frac{\Gamma(k)}{\Gamma(k + p)} = o(1).$$
To do this, we will use Gautschi's inequality~\citep{gautschi1959some} which states that for all $x > 0$, since $p \in (0, 1)$,
\[
(x+p-1)^p \leq \frac{\Gamma(p+x)}{\Gamma(x)} \leq(x+p)^p
\]
We then have that
\begin{align*}
    \frac{1}{T^{\d}} \cdot \frac{\Gamma(T+p)}{\Gamma(T)} \cdot  \sum_{k=1}^{T^\g} \frac{\Gamma(k)}{\Gamma(k + p)}
    &\le \frac{(T + p)^p}{T^{\d}} \cdot    \sum_{k=1}^{T^\g} \frac{1}{(k + p - 1)^p}\\
    &= \frac{(T + p)^p}{T^{\d}} \cdot \left(\frac{1}{p^p} + \frac{1}{( 1 + p)^p} + \sum_{k=3}^{T^\g} \frac{1}{(k + p - 1)^p}\right)\\
    &\le \frac{(T + p)^p}{T^{\d}} \cdot \left(\frac{1}{p^p} + \frac{1}{( 1 + p)^p} + \sum_{k=3}^{T^\g} \frac{1}{(k - 1)^p}\right)\\
    &= \frac{(T + p)^p}{T^{\d}} \cdot \left(\frac{1}{p^p} + \frac{1}{( 1 + p)^p} + \sum_{k=2}^{T^\g - 1} \frac{1}{k^p}\right)\\
    &\le \frac{(T + p)^p}{T^{\d}} \cdot \left(\frac{1}{p^p} + \frac{1}{( 1 + p)^p} + \sum_{k=2}^{T^\g} \frac{1}{k^p}\right)\\
    &\le \frac{(T + p)^p}{T^{\d}} \cdot \left(\frac{1}{p^p} + \frac{1}{( 1 + p)^p} + \int_{1}^{T^\g} \frac{1}{x^p} \, \mathop{dx}\right)\\
    &= \frac{(T + p)^p}{T^{\d}} \cdot \left(\frac{1}{p^p} + \frac{1}{( 1 + p)^p} + \frac{x^{1 - p}}{1 - p} \Big|_{1}^{T^\g}\right)\\
    &= \frac{(T + p)^p}{T^{\d}} \cdot \left(\frac{T^{\g (1 - p)}}{1 - p} +\frac{1}{p^p} + \frac{1}{( 1 + p)^p} - \frac{1}{1 - p}\right).
\end{align*}
Notice that asymptotically, this upper bound is $O(T^{-\d +
  p + \g \cdot (1 - 
  p)})$. By our choice of $\d$, $\d > p + \g \cdot (1 - p)$, so this implies that it is is $o(1)$, as desired.
  
\subsection{Missing Details from the Proof of Theorem~\ref{thm:UDM}}\label{app:proof_udm}

The missing details were showing that the Upward Delegation Mechansim satisfied \eqref{constr:2} and \eqref{constr:3}.

\medskip

\emph{Upward Delegation satisfies \eqref{constr:2}}
\smallskip

We will show there exists  $\a \in (0,1)$ such that $\sum_{i = 1}^n
\weight_i(G_n) \cdot p_i - \sum_{i = 1}^n p_i \ge 2 \a n$ with high probability, so \eqref{constr:2} is satisfied. Note that in the present scheme, cycles are impossible, so do need to
worry about ignored voters. 

Since $\D$ is a continuous distribution, there exists $a < b$ such that $\pi_a := \D[\set{p: p < a}] > 0$ and $\pi_b := \D[\set{p: p > b}] > 0$.
Let $N_{a,n}(\vec{p}_n)$ be the number of voters in $\vec{p}_n$ with
competence $p_i < a$ and 
$N_{b, n}(\vec{p}_n)$ be the number of voters with competence $p_i > b$. When we
sample competencies, since each is chosen independently, $N_{a, n} \sim
\text{Bin}(n, \pi_a)$ and $N_{b, n} \sim \text{Bin}(n, \pi_b)$. By Hoeffding's inequality
(\Cref{lem:hoeffding}) and the union bound, with probability $1 - o(1)$,
there will be at least $\pi_a /2 \cdot n$ voters with competence $p_i
< a$ and $\pi_b / 2 \cdot n$ voters with competence $p_i > b$. Indeed, 
\begin{equation}\label{eq:step1}
\begin{aligned}
    \D^n[N_{a,n} > \frac{n\pi_a}{2}, N_{b,n} > \frac{n\pi_b}{2}] 
    &= 1 - \D^n[\{N_{a,n} \leq \frac{n\pi_a}{2}\} \cup \{N_{b,n} \leq \frac{n\pi_b}{2}\}]\\
    &\geq 1 - (\D^n[N_{a,n} \leq \frac{n\pi_a}{2}] + \D^n[N_{b,n} \leq
      \frac{n\pi_b}{2}])\\ 
    &\geq 1 - \exp(-\frac{n\pi_a^2}{2}) - \exp(-\frac{n\pi_b^2}{2}),
\end{aligned}
\end{equation}
where the first line comes from De Morgan's law, the second from the
union bound, and the last from Heoffding's inequality (\Cref{lem:hoeffding}).

Conditioned on this occurring, each voter with competence $p_i < a$ has probability at least $p \pi_b / 2$ of delegating to a voter
with competence at least $b$. As they each decide to do this independently, the number $N_{ab,n}$ of $n$ voters deciding to do this
stochastically dominates a random variable following the
$\text{Bin}(\pi_a / 2 \cdot n, p \cdot \pi_b / 2)$ distribution. We can again apply Hoeffding's inequality to conclude that with probability
$1 - o(1)$, at least $\pi_a \cdot \pi_b \cdot p / 8 \cdot n$ voters do
so.  
Indeed, 
\begin{equation}
\label{eq:step2}
    \begin{aligned}
      \D[N_{ab,n} > \frac{np\pi_a \pi_b}{8}\mid N_{a,n} > \frac{n\pi_a}{2},
        N_{b,n} > \frac{n\pi_b}{2}] &\geq  
    \D[\text{Bin}(\frac{n\pi_a}{2}, \frac{p\pi_b}{2}) > \frac{np\pi_a \pi_b}{8}]\\
        &\geq 1 - \exp(-\frac{np^2\pi_a\pi_b^2}{4}),
\end{aligned}
\end{equation}
where the first inequality holds because $N_{ab,n}$ stochastically dominates the corresponding binomial random variable
and the second  holds
by Hoeffding's inequality. Finally, using \eqref{eq:step1} and
\eqref{eq:step2}, we have 
\begin{equation*}
\label{eq:conclu}
    \begin{aligned}
    \D[N_{ab,n} > \frac{np\pi_a \pi_b}{8}]
    &\geq \D[N_{ab,n} > \frac{np\pi_a \pi_b}{8} \suchthat N_{a,n} > \frac{n\pi_a}{2}, N_{b,n} > \frac{n\pi_b}{2}]\\
    &\qquad \cdot \D[N_{a,n} > \frac{n\pi_a}{2}, N_{b,n} > \frac{n\pi_b}{2}]\\
        &\geq 1 - o(1).
\end{aligned}
\end{equation*}

Under these upward delegation models, delegations can only increase the total competence of all voters. Hence,
\begin{align*}
    \sum_{i = 1}^n \dels_i(G_n) \cdot p_i - \sum_{i = 1}^n p_i \geq (b-a)N_{ab.n}.
\end{align*}
Each of these $\pi_a \cdot \pi_b \cdot p / 8 \cdot n$ voters results
in a competence increase of at least $b - a$. Hence, under these high
probability events, the total competence increase is at least $(b - a)
\cdot \pi_a \cdot \pi_b \cdot p / 8 \cdot n$. Indeed, since
$\D[N_{ab,n} > \frac{np\pi_a \pi_b}{8}]=1-o(1)$, this implies
$\D[\sum_{i = 1}^n \dels_i(G_n) \cdot p_i - \sum_{i = 1}^n p_i  >
  \frac{np\pi_a \pi_b}{8}] = 1 - o(1)$. By choosing $\a =
\frac{p\pi_a\pi_b}{8}(b - a)$, we see that there is an $\a \cdot n$
increase in competence with high probability, as needed. 

We have proved that for any continuous distribution $\D$, and for well-behaved realizations of $\vec{p}_n$ which occur with high probability, the graph generated from the random delegation process yields an increase in the expected sum of the votes of at least $\a \cdot n$. We can then conclude that $M^U_p$ satisfies \Cref{constr:2} with respect to the class of continuous distributions.
\medskip

\emph{Upward Delegation satisfies \eqref{constr:3}}

\smallskip
We now show that there exists a distribution $\D$ such that $\sum_{i =
  1}^n  p_i + \a n \le n/2 \le \sum_{i = 1}^n \weight_i(G_n) \cdot p_i
-  \a  n$ with probability $1 - o(1)$ for some $\a > 0$. This 
implies that the model satisfies probabilistic positive gain by
\Cref{lem:core}, and will conclude the proof.  

We take $\D$ to be $\D_\eta$,
the uniform distribution $\U[0, 1-2\eta]$ for some small $0<\eta<p/512.$  Let $\a = \eta/2$. 
Clearly, $\mu_{\D_\eta}$, the mean of $D_\eta$, is $1/2-\eta.$ Since each $p_i \overset{i.i.d.}{\sim}\D_\eta$,
the $p_i$s are bounded independent random variables with mean
$1/2-\eta$, so Hoeffding's inequality directly implies that $\sum_{i = 1}^n
p_i \leq n/2 - n\eta/2 = n/2 - n\a$ with high probability. 

Now consider $\cE_F$, the event consisting of instances
$(\vec{p}_n,G_n)$ such 
that $\sum_{i =
  1}^n  \weight_i(G_n) \cdot p_i \geq n/2 + n\a.$ We denote by $\cE_D$
the event that $\sum_{i = 1}^n  p_i \geq n/2 - 3n\eta/2.$ The same reasoning as before implies that $\Pr_{\D_\eta,M^U_p,n}(\cE_D) = 1-o(1)$. 

Let $a=1/4-\eta/2$ and $b=1/2-\eta$, so we have that $\pi_a := \D_\eta[p_i < a] = 1/4$ and $\pi_b :=
\D_\eta[p_i > b] = 1/2.$ 
We proved in the preceding derivation that
$\sum_{i = 1}^n \weight_i(G_n) \cdot p_i - \sum_{i = 1}^n
p_i > \frac{np\pi_1\pi_b}{8} = \frac{np}{64}(1-\eta)$ with high probability. Hence, if both this and $\cE_D$ occur, which is the case with
high probability, by the union bound, it follows that $\sum_{i
  = 1}^n \weight_i(G_n) \cdot p_i > n/2 + n(\frac{p}{128}(1-\eta) -
3\eta/2)$ with high probability.  

Since $\eta<\frac{p}{512} < 1/2$, we have that $$\frac{p}{128}(1-\eta) - 3\eta/2> \frac{p}{256} - 3\eta/2 > 2\eta - 3\eta/2 = \eta/2 = \a,$$
and we can conclude that $\cE_F$ occurs with high probability. Hence, $M^U_p$ satisfies \Cref{constr:3}.

\subsection{Missing Details from the Proof of Theorem~\ref{thm:CBM}}\label{app:proof_cdm}
We show that the Confidence-based Model satisfies \eqref{constr:3}.

\medskip

\emph{Confidence-Based Delegation satisfies \eqref{constr:3}}
\smallskip

We finally show there exists a distribution $\D$ such that $\sum_{i =
  1}^n  p_i + \a n \le n/2 \le \sum_{i = 1}^n \weight_i(G_n) \cdot p_i
-  \a  n$ with probability $1 - o(1).$ This implies that the
model $M_q^C$ satisfies probabilistic positive gain by \Cref{lem:core}.

Using the notation of the analogous proof in \Cref{sec:up}, let
$\D_\eta = \U[0, 1-2\eta]$ for $\eta \in [0, 1/2)$. Note that as a 
function of $\eta$, $\frac{\E_{\D_\eta}[q^+]}{\E_{\D_\eta}[\bar{q}]}$,
  the expected competence conditioned on not delegating, is continuous. Moreover, if $\eta = 0$,  then
$$\frac{\E_{\D_0}[q^+]}{\E_{\D_0}[\bar{q}]} > \mu_{\D_0} = 1/2.$$
Hence, for sufficiently small $\eta > 0$, 
$$\frac{\E_{\D_\eta}[q^+]}{\E_{\D_\eta}[\bar{q}]} > 1/2 > \mu_{\D_\eta}.$$

We choose $\D_\eta$ to be our distribution for this choice of
$\eta$. As in the previous 
section, let $\mu^*_{\D_\eta} =
\frac{\E_{\D_\eta}[q^+]}{\E[\bar{q}]}$. Note that $\mu_{\D_\eta} = 
1/2 - \eta$. Let $\g = \min(\frac{1/2 - \mu_{\D_\eta}}{2}, \frac{\mu^*_{\D_\eta} -
  1/2}{2})$ and $\a = \g$. By the earlier argument for
\eqref{constr:2}, we have that with high probability 
$$
    \sum_{i = 1}^n p_i \le n(\mu + \g) \le n/2 - \a n 
$$
and
$$\sum_{i = 1}^n \weight_i(G)p_i \ge n(\mu^* - \g) \ge n/2 + \a n.$$

By the union bound, we have that both occur simultaneously with high
probability, so \eqref{constr:3} holds.

\subsection{Missing Details from the Proof of Theorem~\ref{thm:SPM}}\label{app:proof_spm}
We first show the remainder of the General Continuous Model satisfying \eqref{constr:1}.

The eigenvector we consider is $\vec{\pi} = (\pi_1, \ldots,
\pi_B)$ (which has nonnegative entries, as each $\pi_\tau$ is a
probability). We show it has eigenvalue
$p\frac{(1+\e)^3}{1-2\e}$, strictly less than $1$ due to our choice of
$\e$. We show it has eigenvalue
$p\frac{(1+\e)^3}{1-2\e}$, strictly less than $1$ due to our choice of
$\e$. Indeed, we have that 
$$
(M\vec{\pi})_\tau = \sum_{\tau'=1}^B \pi_{\tau} \tilde{\p}(\tau, \tau')\pi_{\tau'}=\pi_{\tau}p\frac{(1+\e)^3}{1-2\e}
$$
by the definition of  $\Tilde{\p}.$ Hence, $\vec{\pi}$ is our desired eigenvector.

Since $\mathfrak{X}^P_{M, \tau}$ is sub-critical for all $\tau$, we
have that there is some $c$ such that for all $\tau \in [B]$,
$\P[\mathfrak{X}^P_{M, \tau} \le c \log(n)] = 1 - o(1/n)$. We take $C(n) = c \log(n)$. 

Now we consider our branching process, $\mathfrak{X}^D_{\vec{p},
  i}$. To make the comparison, we will need some minimal
concentration properties. We first show that the sampled
competencies $\vec{p}$ satisfy these properties with high probability, and then show that, conditioned on these properties, the branching process
$\mathfrak{X}^D_{\vec{p}, i}$ is easily comparable to a Poisson
process. The properties are the following:
\begin{enumerate}
    \item For each voter $i \in [n]$, $\sum_{j \ne i} \p(p_i, p_j) \ge (1 - \e) \cdot n$.
    \item For each type $\tau \in [B]$, the number of voters of type $\tau$, $|\set{i \suchthat p_i \in S_\tau}| \le (1 + \e) \pi_\tau n$.
\end{enumerate}

For the first property, fix the competence $p_i$ of
a single voter $i$. Then when sampling the $p_j$s, $\sum_{j \ne i}
\p(p_i, p_j)$ is the sum $n - 1$ independent variables, all in the interval $[L, U]$, with mean $1$. Hence, by Hoeffding's inequality, for all
competencies $c$,
$\D^n[\sum_{j \ne i} \p(p_i, p_j) \ge (1 - \e)n \mid p_i = c] = 1 -
o(1/n)$, where the $o(1/n)$ term is independent of $c$. By the law of total
probability, this implies that even when $p_i$ is sampled as well, the
$1 - o(1/n)$ bound continues to hold. By a union bound over all $n$
voters, this holds for everybody with probability $1 - o(1)$. 

For the second property, note that the number of voters of type $\tau$
follows a $\bin(n, \pi_\tau)$ distribution. A simple application of 
Hoeffding's inequality implies that for this $\tau$,  $|\set{i \suchthat
  p_i \in S_\tau}| \le (1 + \e) \pi_\tau n$ (note that this holds even
in the extreme cases where $\pi_\tau = 0$ or $\pi_\tau = 1$). As the
number $B$ of types is fixed and independent of $n$, a union bound
over all $B$ types implies this holds for all $\tau$ with probability
$1 - o(1)$. 

Now fix some voter competencies $\vec{p}$ such that both properties hold.
We will first upper bound the probability a voter of type $\tau$ delegates to a voter of type $\tau'$. Hence, we can compare our branching process to one with these larger probabilities, and this will only dominate our original process.

To that end, since $|D_{t - 1}| = t - 1 \le t$ (recall that $D_{t-1}$
consists of the dead voters at time $t-1$), using the first
property, we have that for all $i \in [n]$, $$\sum_{j \in [n]
  \setminus(D_{t - 1} \cup \set{i})} \p(p_i, p_j) \ge (1 - \e) n - U
\cdot t.$$ Hence, as long as $t \le \e n/U$, $\sum_{j \in [n]
  \setminus(D_{t - 1} \cup \set{i})} \p(p_i, p_j) \ge (1 - 2\e)n$. 

Including the fact that $\p(p_i, p_j) \le \tilde{\p}(p_i, p_j)$ for all $p_i$ and $p_j$, we have that for all time steps $t\leq \e n/U,$
$$
    p \cdot \frac{\p(p_i, p_j)}{\sum_{k' \in [n] \setminus (D_{t - 1} \cup \set{k})} \p(p_i, p_{k'})} \leq \frac{p}{n} \cdot \frac{\Tilde{\p}(p_i, p_j)}{(1-2\e)}.
$$
Note that for sufficiently large $n$, $C(n) \le \e n/U$, so from now
on we restrict ourselves to such $n$. 

Further, note that by the second property, there will never be more than
$(1 + \e)\pi_{\tau} n$ neutral voters of type $\tau$. Hence, if we
take a voter of type $\tau'$ at time step $t \le C(n)$, the number of
children it will have of type $\tau$ will be stochastically dominated
by a $\bin((1 + \e)\pi_{\tau} n, \frac{p}{n} \cdot
\frac{\Tilde{\p}(p_i, p_j)}{(1-2\e)})$, and this is independent for
each $\tau$. As $n$ grows large, this distribution approaches a
$\pois(p\frac{(1+\e)}{1-2\e}\Tilde{\p}(\tau, \tau'))$. In particular,
this means that for sufficiently large $n$, it will be stochastically
dominated by a $\pois(p\frac{(1+\e)^2}{1-2\e}\Tilde{\p}(\tau, \tau'))$
distribution (note the extra $(1+\e)$ factor). Hence, if voter $i$ is
of type $\tau$, up to time $t \le C(n)$, $\mathfrak{X}^D_{\vec{p}, i}$
is dominated by $\mathfrak{X}^P_{M, \tau}$, so 
$$\P_{\D,M^S_{p,\p}.n}[\mathfrak{X}^D_{\vec{p}, i} \ge C(n)] \ge
\P_{\D,M^S_{p,\p}.n}[\mathfrak{X}^P_{M, \tau} \ge C(n)] = 1 -
o(1/n).$$ 
A union bound over all $n$ voters tells us this is true for all voters
simultaneously with probability $1 - o(1)$, as needed. 

\medskip

\emph{The Continuous General Delegation Model satisfies \eqref{constr:2}.} 
\smallskip

To show \eqref{constr:2} holds, we first show the following.

Let $\mu_\D$ be the mean of the competence distribution $\D$.
For a fixed $x$, let $\p^+_x(y)$ be the function $\p(x,y)\cdot y$.
We show that there
is some $c > 0$ such that for all $x \in [0, 1]$, 
\begin{equation}
    \label{eq:exp-phi}
        \E_{\D}[\p^+_x] \ge \mu_\D + c.
\end{equation}

Indeed, if we view $\E_{\D}[\p^+_x]$ as a
function of $x$ for $x \in [0, 1]$, first note that it is a
continuous function on a compact set, and hence it attains its
minimum. Further, for all $x \in [0, 1]$, since $\p(x, y)$ and $y$ are
both increasing functions of $y$, by the FKG inequality
\cite{fortuin1971correlation}, 

$$\E_{\D}[\p^+] > \E_{\D}[\p(x, \cdot)] \cdot \mu_\D = \mu_D,$$
since, by assumption, $\E_{\D}[\p(x, \cdot)] = 1$.
Hence, this attained minimum must be strictly larger than $\mu$,
implying \eqref{eq:exp-phi}. 

Since $\p(x, y)$ is normalized so that $\E_{y \sim \D}[\p(x, y)] =
1$, $\E_{y \sim \D}[\p(x, y) \cdot y]$ is the expected competence of
the voter to whom someone of competence $x$ delegates to (prior to
other competencies being drawn). Hence, \eqref{eq:exp-phi} tells us that
``on average'', all voters (regardless of competence) tend to delegate
to those with competence strictly above the mean. Ideally, we would
choose $\a \approx c/2$ and hope that some concentration result tells
us that the weighted 
competencies after
delegation will be strictly
above $\mu + c/2$ (the mean of all competencies will be close to $\mu$
by standard concentration results). However, proving this
concentration result is surprisingly subtle, as there are many dependencies
between different voter delegations. Indeed, if one voter with high
competence and many delegations chooses to delegate ``downwards''
(that is, to someone with very low competence), this can cancel out
all of the ``expected'' progress we had made thus far. Hence, the rest
of this proof involves proving concentration \emph{does} in fact
hold. We prove this by breaking up the process of sampling
instances into much more manageable pieces, where, in each, as long as
nothing goes ``too'' wrong, concentration will hold. 

In particular, we will prove that for all $\g > 0$, with high probability,
\begin{equation}
    \label{eq:gamma-gain}
    \sum_{i = 1}^n \weight_i(G) \cdot p_i - \sum_{i = 1}^n p_i \ge (c(1 - p) - \g) n.
\end{equation}
Fix such a $\g$.
As in the previous part, fix $\e > 0$ which will paramaterize our steps. We will later choose $\e$ sufficiently small to get our desired result (precisely $\e$ such that $6\e + \e^2 < \g$). By choosing $\g < c(1 - p)$, this value is positive, so we can choose $\a = \frac{c(1-p) - \g}{2}$ which proves \Cref{constr:2}

To that end, we define a sequence of six sampling steps that together
are equivalent to the standard
sampling process with respect to $\D$ and $M^S_{p, \p}$. In each step,
we will show that with high probability, nothing ``goes wrong'', and
conditioned on nothing going wrong in all these steps, we will get the $\alpha$ improvement that  we desire. The six steps are as follows: 
\begin{enumerate}
    \item Sample a set $M \subseteq [n]$ of voters that choose not to delegate. Each voter is included independently with probability $p$.\label{step:M}
    \item Sample competencies $p_i$ for $i \in [n] \setminus M$. Each $p_i$ is sampled i.i.d.\@ from $\D$.\label{step:del-pis}
    \item Sample competencies $p_j$ for $j \in M$. Each $p_j$ is sampled i.i.d.\@ from $\D$.\label{step:not-del-pis}
    \item Sample a set $R \subseteq [n] \setminus M$ of delegators
      that delegate to those in $M$. Each voter $i \in [n] \setminus
      M$ is included independently with probability $\frac{\sum_{j \in
          M} \p(p_i, p_j)}{\sum_{j \in [n] \setminus \set{i} \p(p_i,
          p_j)}}$, that is, the total $\p$ weight they put on voters
      in $M$ divided by the total $\p$ weight they put on all
      voters.\label{step:L} 
    \item Sample delegations of voters in $[n] \setminus (M \cup R)$. At this point, we are conditioning on such voters delegating, and when they do delegate, they do so to voters in $[n] \setminus M$. Hence, for each $i \in [n] \setminus (M \cup R)$, they delegate to $j \in [n] \setminus (M \cup \set{i})$ with probability $\frac{\p(p_i, p_j)}{\sum_{j' \in [n] \setminus (M \cup \set{i})} \p(p_i, p_{j'})}$.\label{step:not-L-dels}
    \item Sample delegations of voters in $R$. At this point, we are conditioning on such voters delegating to those in $M$. Hence, for each $i \in R$, they choose to delegate to $j \in M$ with probability $\frac{\p(p_i, p_j)}{\sum_{j' \in M} \p(p_i, p_{j'})}$.\label{step:L-dels}
\end{enumerate}
We now analyze each step, describing what could ``go wrong''.
Let $\cE_1, \ldots, \cE_6$ be the events that nothing goes
wrong in each of the corresponding steps. We define these events formally
below. Our goal is
to show that $\P_{\D,M^S_{p,\p},n}[\cE_1 \cap \cdots \cap \cE_6] = 1 - o(1)$.

$\bullet$ Let $\cE_1$ be the event that $(p - \e) \cdot n \le |M| \le
(p + \e) \cdot n.$ Note that $M$ is the sum of $n$ independent Bernouilli random variables with success probability $p.$ It
follows directly from a union bound over both variants of Hoeffding's
inequality that $$\P_{\D,M^S_{p,\p},n}[(p - \e) \cdot n \le |M| \le (p + \e)
  \cdot n] = 1 - o(1).$$ 

$\bullet$  Let $\cE_2$ be the event that $\sum_{i \in [n] \setminus M}
p_i \le n(\mu + \e) (1 - p + \e).$ Note that $\sum_{i \in [n]
  \setminus M} p_i$ is the sum of $n-|M|$ i.i.d.\@ random variables
with mean $\mu.$ Conditioning on event $\cE_1,$ $|M|$ is lower bounded
by $n(p - \e)$, implying that $n - |M| \le n(1 - p + \e)$ as well. It
follows from \Cref{lem:hoeffding} that $$\P_{\D,M^S_{p,\p},n}\left[\sum_{i \in [n]
    \setminus M} p_i \le n(\mu + \e) (1 - p + \e) \ \middle|
  \  \cE_1\right] = 1 - o(1)$$ which, combined with $\P_{\D,M^S_{p,\p},n}[\cE_1] = 1 -
o(1)$, proves that $\cE_1 \cap \cE_2$ occurs with probability $1 -
o(1)$. 

$\bullet$ 
Let $\cE_3$ be the event consisting of all instances $(\vec{p},G)$
such that
$$\frac{\sum_{j \in M} \p(p_i, p_j) \cdot p_j}{\sum_{j \in M} \p(p_i, p_j)} \ge \frac{(1 - \e)}{(1 + \e)} (\mu + c)$$
for all $i \in [n] \setminus M$.

We show $\cE_3$ occurs with high probability conditional on $\cE_1$
and $\cE_1$ (conditioning on $\cE_2$ is unnecessary. but
makes the final statement easier). Fix a set of voters $M$ and $p_i$
for $i \in [n] \setminus M$ satisfying $\cE_1$ and $\cE_2$. For each $i \in [n]
\setminus M$, we will show that with probability $1 - o(1/n)$, when we
sample the $p_j$s for $j \in M$, they satisfy 
\begin{equation}
    \sum_{j \in M} \p(p_i, p_j) \le |M|(1 + \e)
    \label{eq:exp_rand_den}
\end{equation}
and
\begin{equation}
    \sum_{j \in M} \p(p_i, p_j) \cdot p_j \ge |M|(1 - \e)(\mu + c).
    \label{eq:exp_rand_den2}
\end{equation}

\eqref{eq:exp_rand_den} follows from the fact that $\sum_{j \in M}
\p(p_i, p_j)$ is the sum of $|M|$ bounded independent random variables

with mean $\E_{y\sim\D}[\p(p_i, y)] = 1.$ By Hoeffding's inequlaity,
since $|M|$ is linear in $n$, $\sum_{j \in M} \p(p_i, p_j)$ is at most
$|M|(1 + \e)$ with probability $1 - o(1/n)$. 

\eqref{eq:exp_rand_den2} follows from the fact that $\sum_{j
\in M} \p(p_i, p_j) \cdot p_j$ is also the sum of $|M|$ bounded independent random variables with mean $\E_{\D}[\p^+_{p_i}]$.
Again, since we have conditioned on $\cE_1$, $|M|$ is lower
bounded by $(p-\e)n$, which by Hoeffding's inequality implies that
$\sum_{j \in M} \p(p_i, p_j)p_j$ is at least
$|M|(1-\e)\E_{\D}[\p_{p_i}]$ with probability $1-o(1/n)$. 

Finally, we can conclude via a union bound that $\frac{\sum_{j \in M}
  \p(p_i, p_j) \cdot p_j}{\sum_{j \in M} \p(p_i, p_j)} \ge
\frac{(1-\e)}{(1+\e)}(\mu + c)$ with probability $1-o(1/n)$ for any $i
\in [n] \setminus M$. Hence, by another union bound over the at most
$n$ voters $i \in [n] \setminus M$, $\frac{\sum_{j \in M} \p(p_i, p_j)
  \cdot p_j}{\sum_{j \in M} \p(p_i, p_j)} \ge
\frac{(1-\e)}{(1+\e)}(\mu + c)$ for all $i \in [n] \setminus M$ with
high probability. 

By the law of total probability, $\cE_3$ conditioned on $\cE_1$ and $\cE_2$ occurs with probability $1-o(1),$ which proves that $\cE_1 \cap \cE_2 \cap \cE_3$ occurs with probability $1 - o(1)$ by the chain rule.

$\bullet$  Let $\cE_4$ be the entire sample space. Nothing  can ``go wrong'' during this sampling step. So trivially, $\cE_1 \cap \cE_2 \cap \cE_3 \cap \cE_4$ occurs with probability $1 - o(1)$. 

$\bullet$ Let $\cE_5$ be the event that $\dels_i(G) \le C(n)$ for all
$i \in [n] \setminus M$ and $\totweight(G) \ge n - C(n)^2\log(n)$ in
the subgraph $G$ sampled (i.e., with delegations only from voters not in $R$ or $M$). We will show $\cE_5$ occurs with high probability
even when we sample a full delegation graph (that is, samples
delegations for all voters), which implies it continues to hold even
when we sample only some delegations (recall that at this step we have
only sampled delegations from voters in $[n] \setminus (M \cup R)$). 

The proof of this is very similar to the one in \Cref{thm:CBM}, with one extra step to allow for different $\p$ weights.

It was proved in the previous part of this proof that, for all voters
$i$, we have that  $\dels_i(G)
\le C(n)$ with probability $1-o(1)$ (not conditioned on anything) when
we sample entire delegation graphs,
so we can safely condition on this fact.
We now prove that $\P_{\D,M^S_{p,\p},n}[\totweight(G_n) \ge n -
  O(\log^3 n) \suchthat \dels_i(G) \le C(n)] = 1 - o(1)$.

We 
begin by bounding the number of voters that end up in cycles. Fix some
voter $i$, and let us begin by sampling their delegation tree.

Since we are conditioning on the tree having size at most  $C(n)$,
the most 
weight that voter $i$ can place on all of the voters in $i$'s delegation
tree is $U \cdot C(n)$. The minimum weight that $i$ can place on all voters is $L(n - 1)$. Hence, the probability that $i$ delegates to someone
in $i$'s tree conditional the delegation tree having size at most
$C(n)$ is at most $p \cdot \frac{U \cdot
  C(n)}{L \cdot (n - 1)}$.  
Since $i$ was arbitrary, this implies that the expected number of
voters in cycles can be at most $n \cdot p \cdot \frac{U \cdot C(n)}{L
  \cdot (n - 1)} \in O(\log n)$.

Applying Markov's inequality just as in the analogous proof in the
previous section, the
probability that more than $\log^2 n$ voters are in cycles is at most
$np\frac{UC(n)}{L(n - 1)\log^2 n} = O(1/\log n) = o(1)$. Further, the
total number of people that could delegate to  voters in cycles is at most $C(n)$  times the number of voters in cycles.
Hence, with probability $1 - o(1)$, there are at
most $C(n) \cdot \log^2 n$ voters delegating to those in cycles. This
implies the desired bound. Hence, we have proved that 
$
\P_{\D,M^S_{p,\p},n}[\cE_5] = 1 - o(1)
$. Since we have already shown that $\P_{\D,M^S_{p,\p},n}[\cE_1 \cap
  \cE_2 \cap \cE_3 \cap \cE_4] = 1 - o(1)$, a union bound implies that
$\cE_1 \cap \cE_2 \cap \cE_3 \cap \cE_4 \cap \cE_5$ occurs with
probability $1 - o(1)$ as well. 

$\bullet$ We now consider the sixth step. To define $\cE_6$, we
need some new notation. Fix competencies $\vec{p}$ and a partial
delegation graph $G$ such that $(\vec{p},G)$ is in the first five events. 
We define $Q_i$ for $i \in R$ to be the random variable representing
the competence of the voter to whom $i$ delegates. Since we know $i$
delegates to a voter in $M$, note that  
$$
Q_i(G) = p_j \text{ with probability }  \frac{\p(p_i, p_j)}{\sum_{j' \in
M} \p(p_i, p_{j'})} \text{ for all } j\in M. 
$$

Let $\cE_6$ be the event consisting of all instance $(\vec{p},G)$ such that
that $\sum_{i \in R} \dels_i(G) \cdot Q_i(G) \ge
\frac{(1 - \e)^2}{1 + \e} (\mu + c) (1 - p - 2\e) \cdot n$. 
We show that $\P_{\D,M^S_{p,\p},n}[\cE_6 \suchthat \cE_1 \cap
  \cdots \cap \cE_5] = 
1 - o(1)$. This, combined with the the fact $\P_{\D,M^S_{p,\p},n}[\cE_1 \cap
  \cdots \cap \cE_5] = 1 - o(1)$ (shown earlier), implies that
$\P_{\D,M^S_{p,\p},n}[\cE_1 \cap \cdots 
  \cap \cE_6] = 1 - o(1)$. 
It follows from the definition of $Q_i$ that
$$\E[Q_i] = \sum_{j \in M} \frac{\p(p_i, p_j)}{\sum_{j' \in M} \p(p_i, p_{j'})} \cdot p_j =  \frac{\sum_{j \in M} \p(p_i, p_j) \cdot p_j}{\sum_{j \in M} \p(p_i, p_j)}. $$

By conditioning on $\cE_3$, we have that $\E[Q_i] \ge \frac{(1 -
  \e)}{(1 + \e)} (\mu + c)$ for each $i \in R$. 
Hence, $\E[\sum_{i \in R} \dels_i(G) \cdot Q_i] \ge (n - |M| - C(n)^2\log(n)) \cdot \frac{1 + \e}{1 - \e} \cdot (\mu + c)$, since we
are conditioning on
$\cE_3$ and $\cE_5$. Further, for sufficiently large $n$,
$C(n)^2\log(n) \le \e n$; since we are conditioning on $\cE_1$, $|M|
\le (p + \e)n$, so we 
have that for sufficiently large $n$, 
$$\E[\sum_{i \in R} \dels_i(G) \cdot Q_i] \ge (1 - p - 2\e) \cdot
\frac{1 + \e}{1 - \e} \cdot (\mu + c) \cdot n \in \Omega(n).$$ 
Next, consider $\Var[\sum_{i \in R} \dels_i(G) \cdot
  Q_i]$. Since each $Q_i$ takes on values in $[0, 1]$, $\Var[Q_i] \le
1$. Further, each summand is independent, as each $Q_i$ is independent and we have fixed $G$, so we can can view $\dels_i(G)$ as a
constant. Hence, $\Var[\sum_{i \in R} \dels_i(G) \cdot Q_i] \le \sum_{i \in R} \dels_i(G)^2 \in o(n^2)$ since, for all $i$,  $\dels_i(G) \le
C(n) \in O(\log n)$ and $\sum_i \dels_i(G) \le n$. Hence, 
\begin{align*}
    &\phantom{\le} \P_{\D,M^S_{p,\p},n}[\sum_{i \in R} \dels_i(G) \cdot Q_i < \frac{(1 - \e)^2}{1 + \e} (\mu + c) (1 - p - 2\e) \cdot n]\\
    &\le  \P_{\D,M^S_{p,\p},n}[\sum_{i \in R} \dels_i(G) \cdot Q_i < (1 - \e)\E[\sum_{i \in R} \dels_i(G) \cdot Q_i]]\\
    &\le \frac{\Var[\sum_{i \in R} \dels_i(G) \cdot Q_i]}{\e^2 \cdot \E[\sum_{i \in R} \dels_i(G) \cdot Q_i]^2} \in o(1)
\end{align*}
where the second inequality is due to Chebyshev's inequality, which is $o(1)$ because the numerator is $o(n^2)$ and the denominator is $\Omega(n^2)$. This implies the desired result.

Finally, we show that for all instance $(\vec{p},G) \in \cE_1 \cap
\cdots \cap \cE_6$, 
\eqref{constr:2} holds, and hence so does \eqref{eq:gamma-gain}. We
have that $\sum_{i = 1}^n w_i(G) \cdot p_i = \sum_{i \in R} \dels_i(G) \cdot Q_i(G) +
  \sum_{j \in M} p_j$, because in $G$ each voter $i
\in R$ delegates all of their $\dels_i(G)$ votes to the voter in
$M$ with competence $Q_i(G)$. Hence, $\sum_{i = 1}^n w_i(G) \cdot p_i -
\sum_{i = 1}^n p_i = \sum_{i \in R} \dels_i(G) \cdot Q_i(G) - \sum_{i \in [n] \setminus (M \cup R)} p_i$. Since $(\vec{p},G) \in \cE_2$, we
have that $\sum_{i 
  \in [n] \setminus M} p_i \le n(\mu + \e) (1 - p + \e)$. Since
$(\vec{p},G) \in \cE_6$,
we have that $\sum_{i \in R} \dels_i(G) \cdot Q_i(G) \ge \frac{(1 -
  \e)^2}{1 + \e} (\mu + c) (1 - p - 2\e) \cdot n$. Hence, this
difference is at least 
\begin{align*}
    ((\mu + c) (1 - p - 2\e) - (\mu + \e) (1 - p + \e)) n
    &\ge (c( 1-p) -3\e \mu - 2\e c - (1 - p) \e - \e^2) n\\
    &\ge (c(1 - p) - 6 \e - \e^2)n
\end{align*}
where the second inequality holds because, $c, (1 - p), \mu \le 1$. By
choosing $\e$ such that $6\e + \e^2 \le \g$ ($\e = \min(\g / 7, 1)$
will do), \eqref{eq:gamma-gain} follows. 
\medskip

\emph{The Continuous General Delegation Model Satisfies \eqref{constr:3}}
\smallskip

We now show that there exists a distribution $\D$ and $\a > 0$
such that $\sum_{i = 1}^n  p_i + \a n \le n/2 \le \sum_{i = 1}^n
\weight_i(G_n) \cdot p_i -  \a  n$ with probability $1 - o(1).$ This
implies that the model $M^S_{p,\p},n$ satisfies probabilistic positive gain
by \Cref{lem:core}.

As in earlier arguments, let $\D_\eta = \U[0, 1-2\eta]$ for $\eta \in
[0, 1/2)$. Note that $$f(\eta) = \inf_{x \in [0, 1]} \set{\E_{ D_{\eta}}[\p^+_x]} \cdot (1 - p) - 3\eta / 2$$ is a continuous
  function of $\eta$. 

  Moreover,
$f(0) > 0$. Hence, for sufficiently small $\eta > 0$, $f(\eta) > 0$.

 Consider $\D_\eta$ for some $\eta > 0$ such that $f(\eta) > 0$.
 Let $\a = \min(\eta / 2, f(\eta)/2)$.
  Since $\mu_{\D_\eta} = 1/2 - \eta$, 
  by Hoeffding's inequality, $\sum_{i = 1}^n p_i \le (1/2 - \eta / 2) n
  \le n/2 - \a n$ with high probability. 

Next, note that we can choose $c = \inf_{x \in [0,
    1]}\set{\E_{D_{\eta}}[\p^+_x]}$ in order to satisfy 
  \eqref{eq:exp-phi}. Hence, by choosing $\g = f(\eta)  / 2$,
it follows from \eqref{eq:gamma-gain} that  
$$
\sum_{i = 1}^n \weight_i(G) \cdot p_i - \sum_{i = 1}^n p_i \ge (c(1 - p) - f(\eta) / 2)n = (3\eta / 2 + f(\eta) / 2)n \ge (3\eta / 2 + \a)n
$$
with high probability.
Further, by Hoeffding's inequality,
$\sum_{i = 1}^n p_i \ge (1/2 - 3\eta/2)n$ with high probability, so by
the union bound applied to these inequalities,
$$\sum_{i =1}^n \weight_i(G) \cdot p_i \ge n/2 + \a n$$
with high probability, as needed.
\newpage


\section{Groups Characteristics}\label{app:participation}



\begin{table}[ht]
  \caption{Qualitative group descriptions and sizes from regular experiment}
  \centering
  \begin{tabular}{clc}
    \toprule
    Group ID & Group Description & Group Size \\
    \midrule
    1 & Company Employees Present at a Workshop & 14 \\
    2 & Undergraduate Students Present in Class & 22 \\
    3 & Research Department Meeting & 19 \\
    4 & Company Employees Present at a Workshop & 27  \\
    5 & Participants at an Academic Conference & 36  \\
    6 & Participants at an Academic Conference & 50 \\
    \bottomrule
  \end{tabular}
  \label{tab:experiments_app}
\end{table}

\newpage

\section{Survey Materials and Flow}\label{app:survey}

This section describes the participant experience of the survey. 
\Cref{Fig:surveyflow} shows an example of the survey flow. The green boxes represent the pre and post-survey steps (providing informed consent, name, and optional background questions). In the first stage, participants performed tasks, deciding to either delegate (providing a name) or vote (answering the eight question). The upper red block exemplifies a task prompt (in which the options ``delegate'' and ``vote'' also appear in a random order). In the second stage, participants answer additional questions (those they delegated) and optional background questions. \Cref{tab:tasks} shows all the prompts given to participants for each of the tasks.  Finally, \Cref{Fig:screenshot} shows screenshots of the survey from the participant's perspective.

\begin{figure}[htb]
  \begin{center}
      \includegraphics[width=\linewidth]{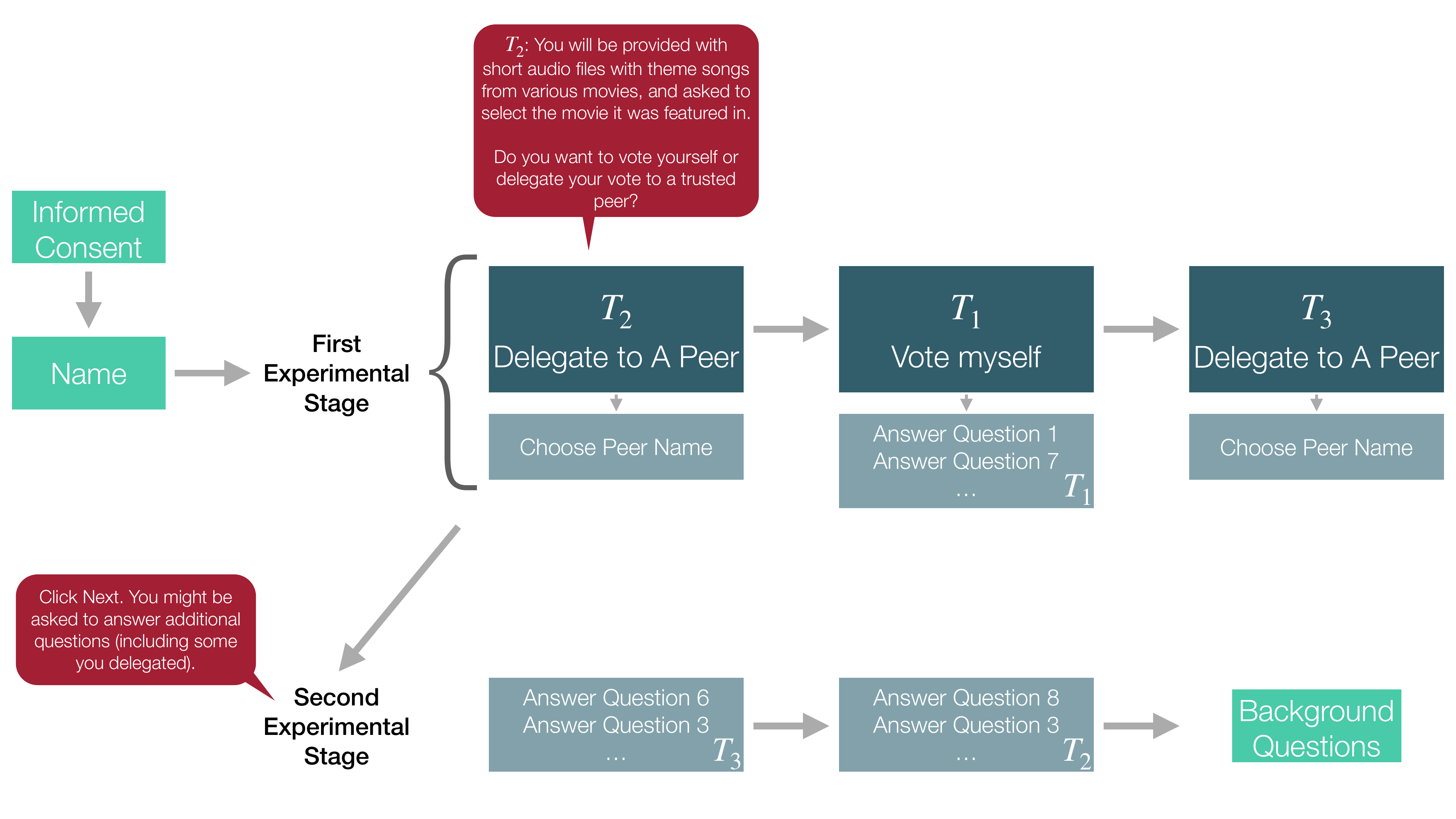}
  \end{center}
  
  \caption{Example of a survey flow with three tasks. }
  \label{Fig:surveyflow}
\end{figure}

\begin{table}[hbt]
  \caption{Prompts Presented at the beginning of each task}
  \small

  \centering
\begin{tabular}{p{0.5cm}p{12.5cm}p{2.6cm}}
\toprule
\textbf{ID} &\textbf{Task Prompts} & \textbf{Corresponding Experiment(s)} \\
\midrule
$T_1$& You will be shown images of architectural landmarks from around the world, and asked to select the country where the landmark is located. & $1, 2, 3, 4, 5, 6$ \\
$T_2$& You will be provided with short audio files with theme songs from various movies, and asked to select the movie it was featured in. & $1, 2, 3, 4, 5, 6$ \\
$T_3$& You will be given English idioms, and asked to identify their meaning. An idiom is a group of words that have a meaning not deducible from those of the individual words (e.g., rain cats and dogs, see the light). & $1, 2, 3, 4, 5, 6$ \\
\midrule
$T_4$& You will be given upcoming sport events (soccer and tennis games), and asked to predict the games' outcome? & $1$ \\
$T_5$& You will be given the names of tennis players, and asked to predict which round they will make it to in the Tennis French Open (Roland Garros), taking place in May-June 2022? & $2, 3, 4$ \\
$T_6$& You will be given the names of tennis players (women and men), and asked to predict which round they will make it to in the ongoing Wimbledon Tennis Tournament (The Championships, Wimbledon), taking place between June 27 and July 10, 2022. & $5$ \\
$T_7$& You will be given upcoming European men soccer games and asked to predict the games' outcome. & $6$ \\
\midrule
$T_8$& You will be shown images of flags from around the world, and asked to identify their country of origin. & $6$ \\

$T_9$& You will be shown 20 images of famous buildings from around the world, and asked to estimate the year in which the building was completed. & $6$ \\
$T_{10}$&You will be shown images of constellations and asked to identify them. & $6$ \\
$T_{11}$& You will be given headlines, and asked to identify the magazine that published the article, between The Economist and WIRED. & $6$ \\
$T_{12}$& You will be given words, and asked to identify the correct synonym corresponding to each word. & $6$ \\
$T_{13}$& You will be asked to listen to audio clips of classical compositions, and asked to identify the composer. & $6$ \\
$T_{14}$& You will be given the names of American states, and be asked to predict whether the majority of Congress members elected in that state will be Republican or Democrat. & $6$ \\
$T_{15}$& You will be given upcoming NBA games and asked to predict the games' outcome. & $6$ \\
\bottomrule
\end{tabular}
\label{tab:tasks}
\end{table}

\begin{figure}[htb]
\centering
\begin{subfigure}[b]{\textwidth}
\includegraphics[width=\linewidth]{Figures/Exp1.pdf}
\caption{Screenshot of a task prompt along with a choice to vote or delegate.}
\end{subfigure}

\vspace{1cm}

\begin{subfigure}[b]{\textwidth}
\includegraphics[width=\linewidth]{Figures/Exp2.pdf}
\caption{Screenshot of a specific question.}
\end{subfigure}

  \caption{Screenshots from the survey of both task prompts and specific questions.}
  \label{Fig:screenshot}
\end{figure}

The entire survey
can be found in the \href{https://osf.io/jnfcs}{OSF repository}.




\clearpage
\section{Item Response Rate Theory and Technical Details in Support of \Cref{comp}}\label{app:IRT}
In order to evaluate how delegation behavior relates to competence, we need to estimate participants' competence. We denote by $\eta_{i, t}$ the estimated competence of participant $i$ in task $t.$ A naive way to compute participants' competence per task would be to average their number of correct answers given on all $8$ questions of that task,  $\eta^{\text{naive}}_{i, t} = \frac{\sum_{r \in R_t} v_{i, r}}{|R_t|}.$ However, such a computation does not account for the varying difficulty of the questions.

Instead, we estimate $\eta_{i, t}$ using the Item Response Theory framework (IRT)~\citep{lalor2023py}, which provides a widely used parametric model to estimate competence and question difficulty from repeated measurements. To do this, we fit a one-parameter logistic model to estimate person $i$'s latent ability to 
do
task $t,$ $\eta_{i, t},$ as well as question $r$'s latent difficulty, $\theta_{r},$ where the probability that person $i$ is correct 
on
question $r$ depends on the person's ability at task $t$ and the question's difficulty.\footnote{We also fit a three-parameter logistic model estimating $ \Pr[v_{i, r}=1|\eta_{i, t},\theta_{r},c_{r},a_{r}] = c_{r} + \frac{1 - c_{r}}{1+\exp^{-a_{r}(\eta_{i, t} - \theta_{r})}},$ with $c_{r},$ the effect of guessing on question $r$ and $a_{r},$ the degree to which question $r'$ differentiate between participants. The resulting competence levels are highly correlated and we stick with the one-parameter model as a result.} Specifically,
we assume a generative process of
\begin{equation*}
    \Pr[v_{i, r}=1|\eta_{i, t},\theta_{r}] = \frac{1}{1+\exp^{-(\eta_{i, t} - \theta_{r})}}
\end{equation*}
and fit $\eta_{i, t}$ and $\theta_{r}$ to be consistent with the observed answers $v_{i, t}$. We fit these parameters in the canonical way using the Python package py-irt. See \citet{natesan2016bayesian} for more details.

While $\eta^{\text{naive}}_{i, t}$ takes on one of nine values (multiples of $1/8$), $\eta_{i, t}$ is a continuous variable that can take on arbitrary values in $\mathbb{R}.$ We normalize so that $\eta_{i, t} \in [0,1]$, and assume this to be the competence, the probability of being correct. Even using these different methods, the correlation between $\eta^{\text{naive}}_{i, t}$ and $\eta_{i, t}$ is above $94\%.$

\clearpage
\section{Delegation Graph Examples}\label{app:graph-examples}

\Cref{Fig:graphs_works,Fig:graphs_concentration,Fig:graphs_13g,Fig:graphs_13b} 
show representative examples of delegation graphs across different experiments and tasks. Nodes are labeled by the percentage of correct answers for that task, i.e., $\eta^{\text{naive}}_{i,t}.$ 
\Cref{Fig:graphs_works,Fig:graphs_13g} display examples of relatively little concentration of power with delegations reaching participants with relatively high competence, \Cref{Fig:graphs_concentration} displays some of the worst concentration of power observed across the $32$ instances and \Cref{Fig:graphs_13b} illustrates an example of relatively little concentration of power with delegations reaching participants with relatively low competence.

\begin{figure}[htb]
  \centering
  \includegraphics[width=.8\linewidth]{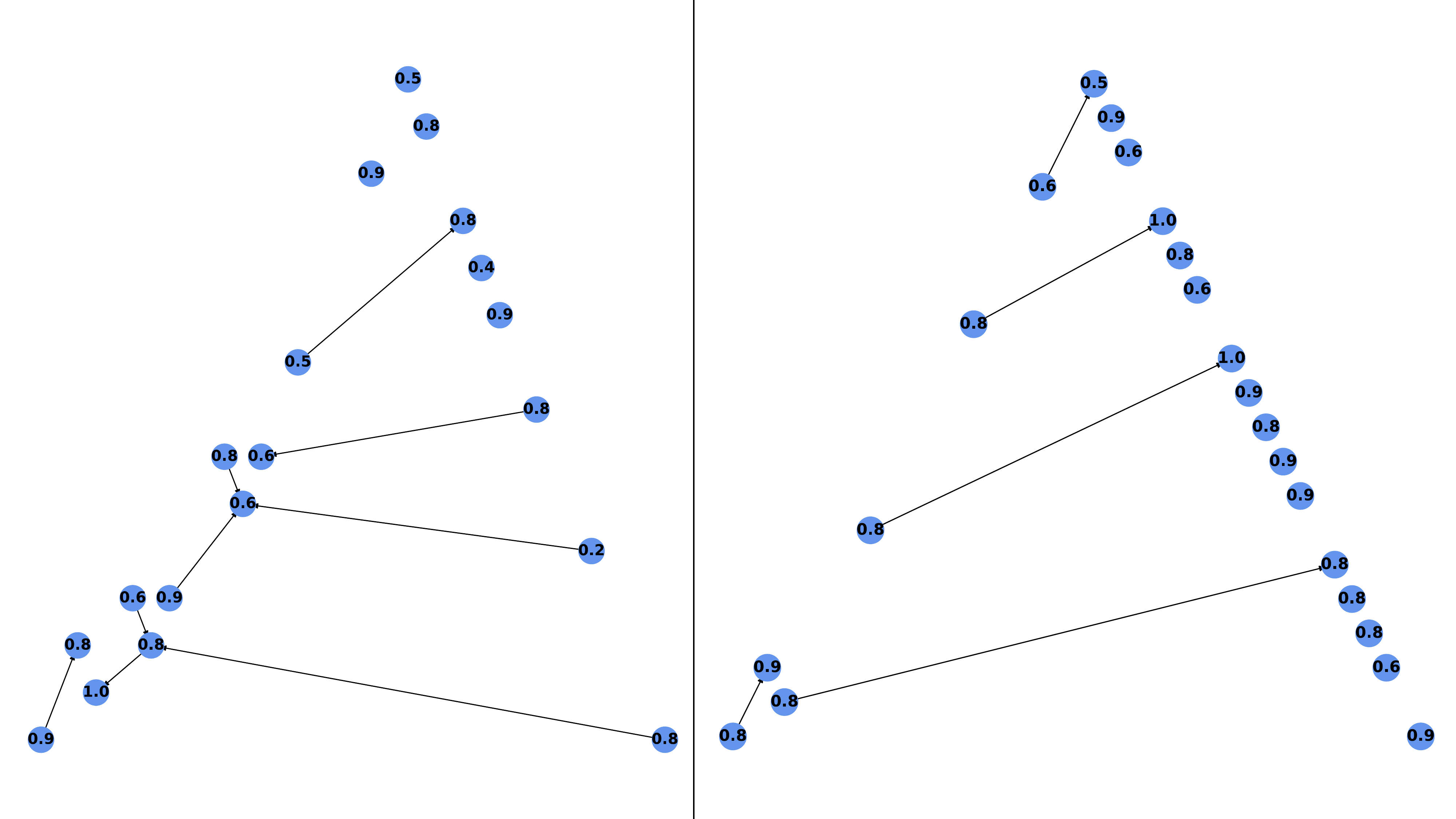}
  \caption{Delegation graphs for task $T_3$ from Experiment $2$ (left) and task $T_2$ from Experiment $3$ (right).}
  \medskip
  \small 
  \label{Fig:graphs_works}
\end{figure}

\begin{figure}[htb]
  \centering
  \includegraphics[width=0.4\linewidth]{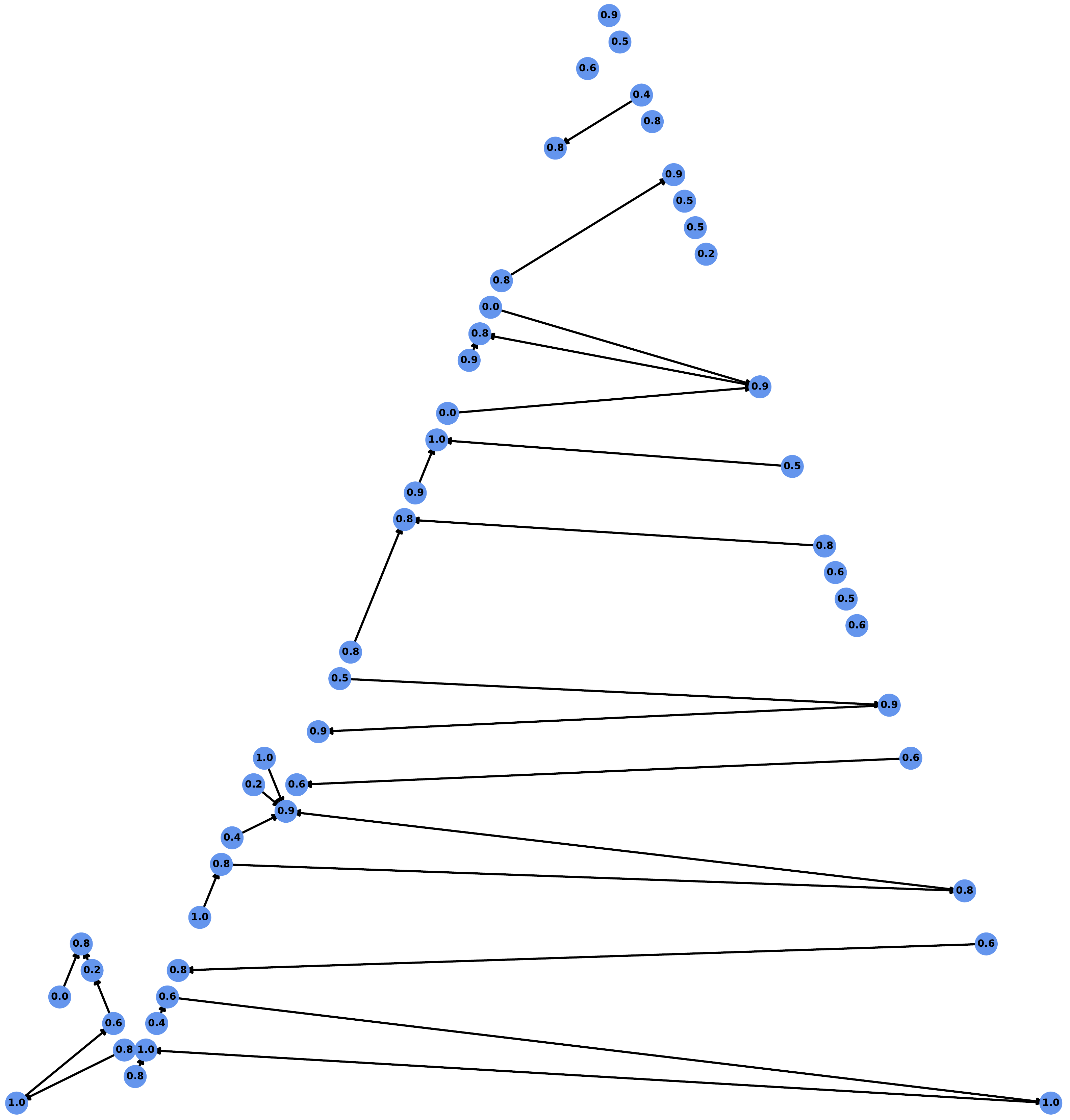}
  \caption{Delegation graph from experiment $6$ and task $T_7.$}
  \label{Fig:graphs_13g} 
\end{figure}

\begin{figure}[htb]
  \centering
  \includegraphics[width=.8\linewidth]{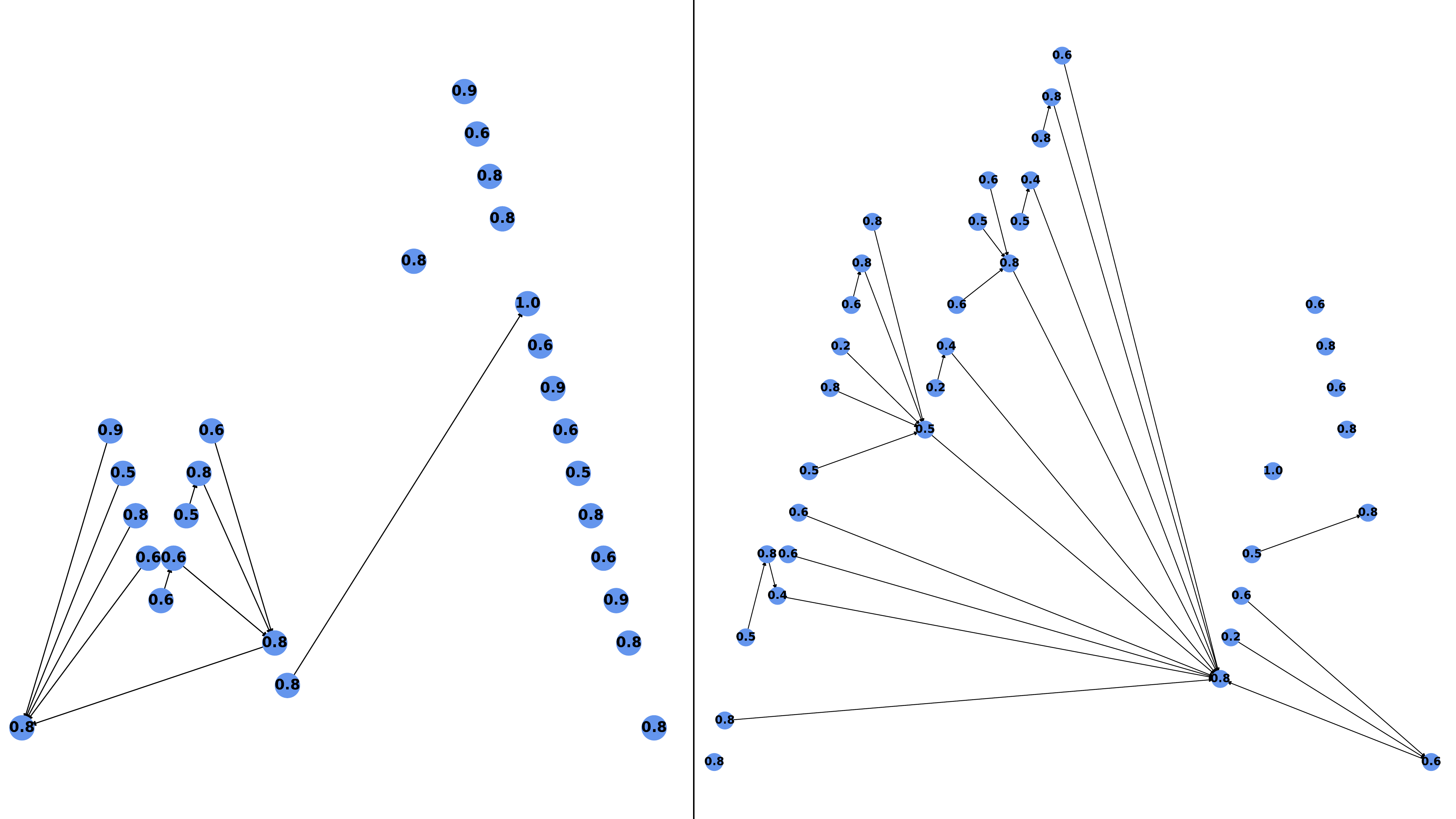}
  \caption{ Delegation graphs for task $T_1$ from Experiment $4$ (left) and task $T_5$ from Experiment $5$ (right). }
  \label{Fig:graphs_concentration}
\end{figure}




\begin{figure}[htb]
  \centering
  \includegraphics[width=0.4\linewidth]{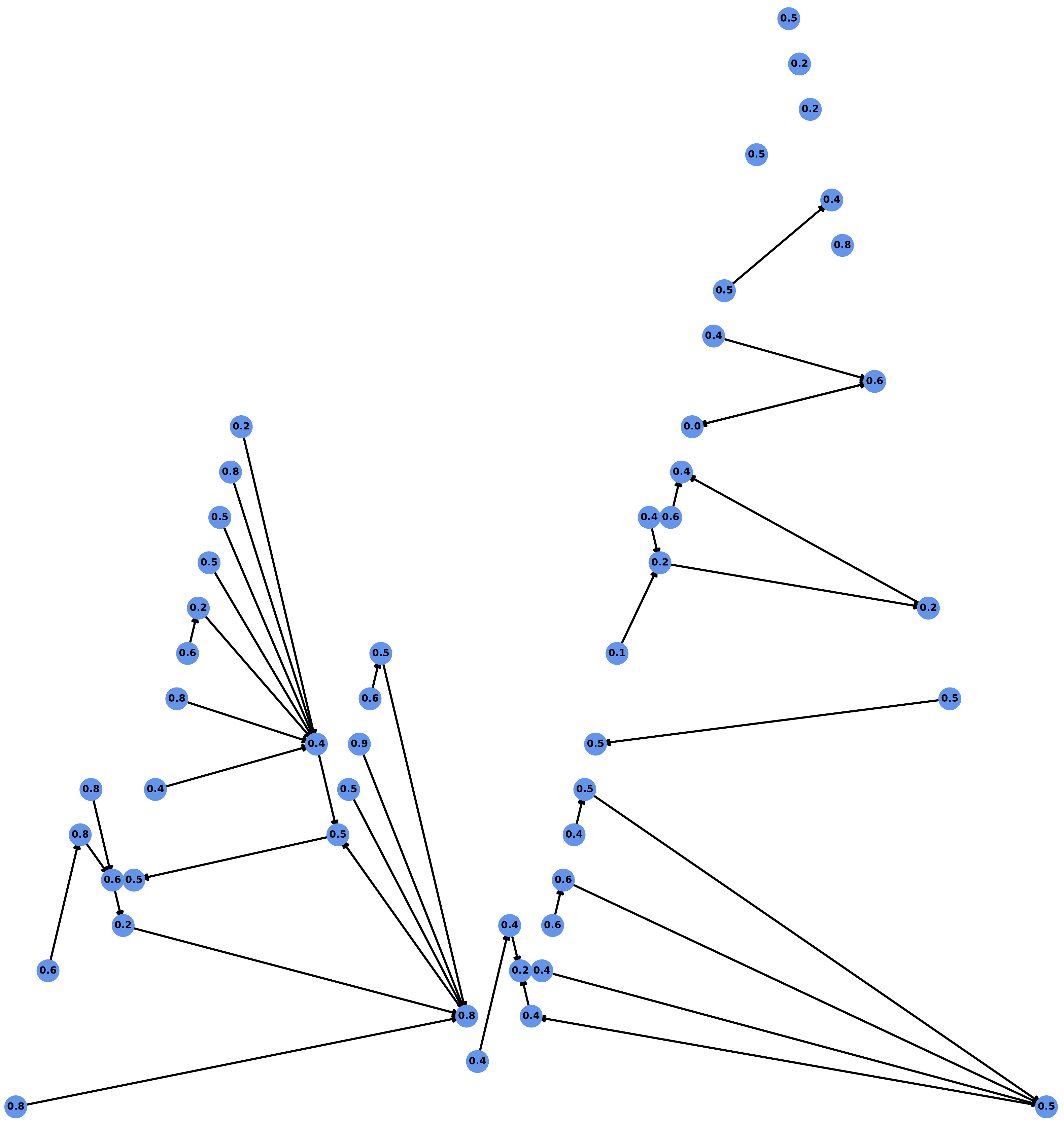}
  \caption{Delegation graph from experiment $6$ and task $T_{15}.$}
  \label{Fig:graphs_13b}
\end{figure}

\clearpage
\section{Pairwise Tukey Tests}\label{app:tukey}
\begin{figure}[h]
  \begin{center}
    \includegraphics[width=0.6\linewidth]{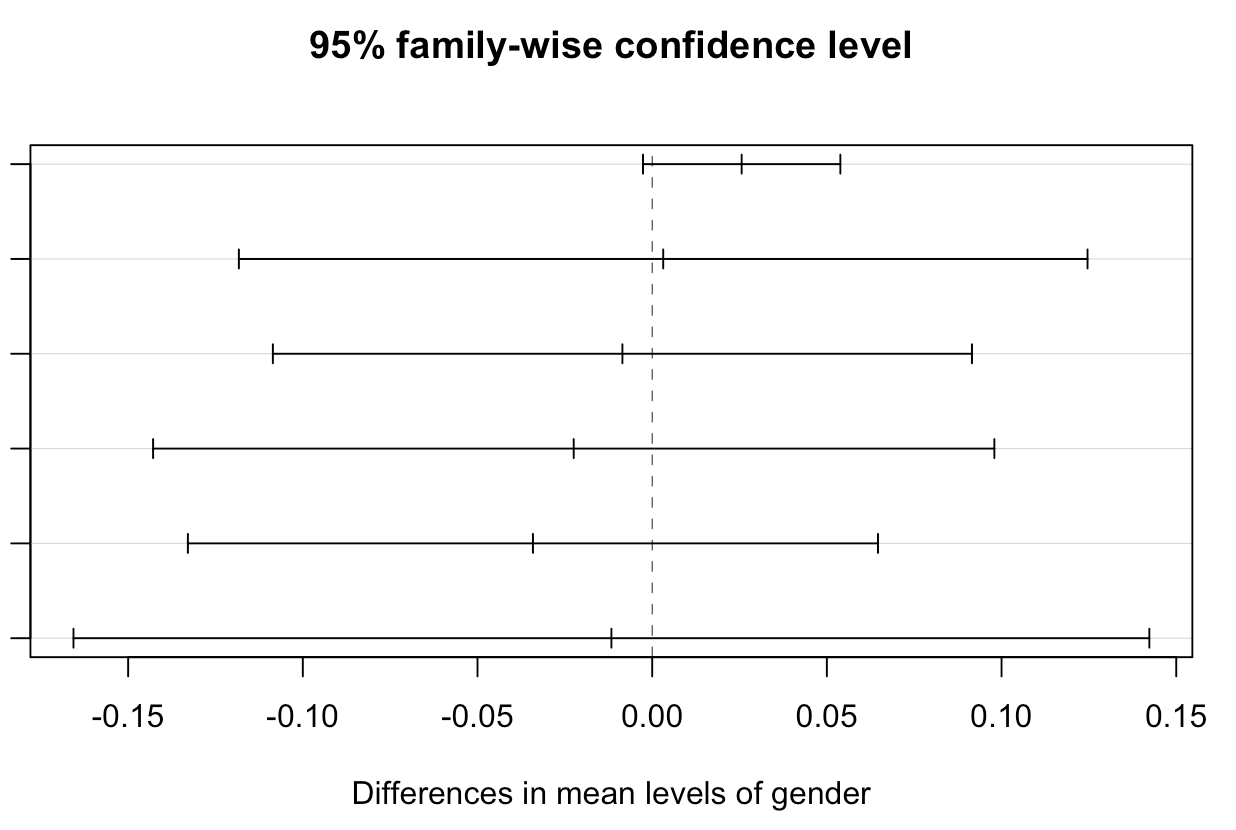}
  \includegraphics[width=0.6\linewidth]{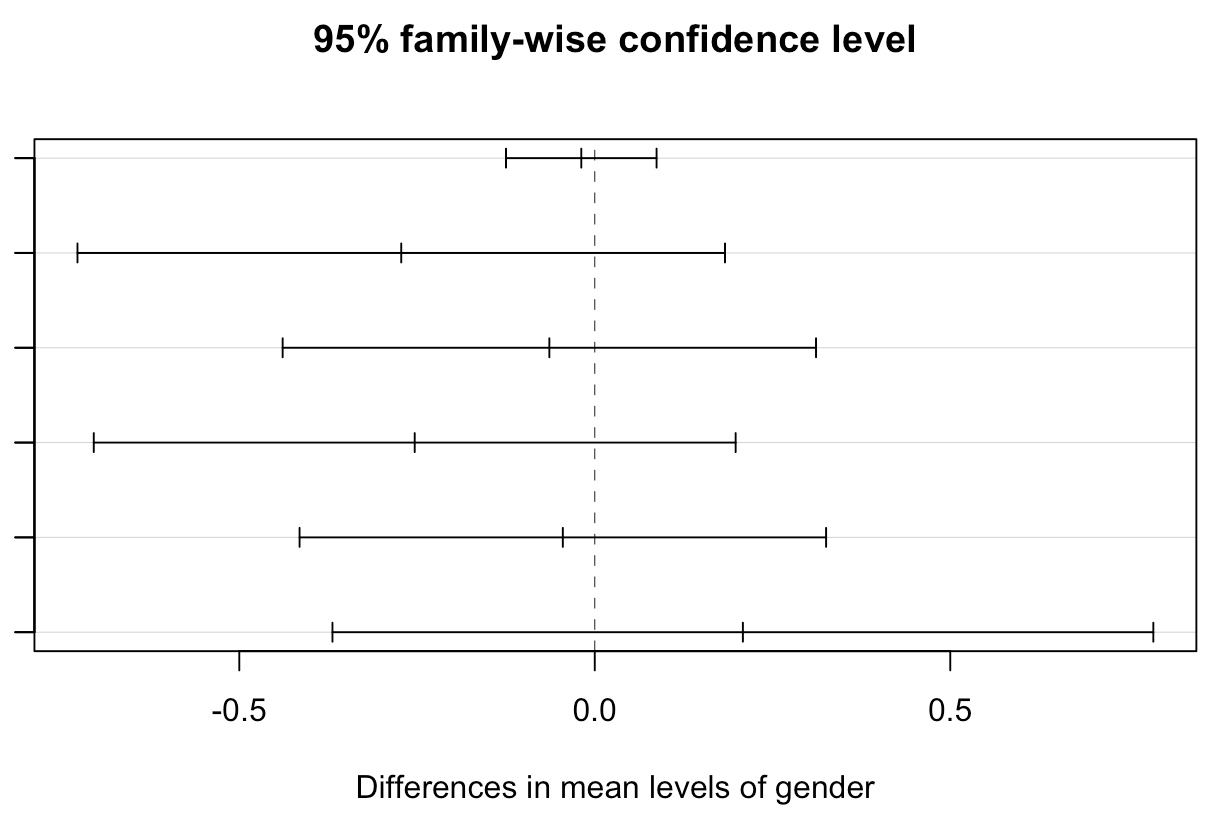}
  \end{center}
  \caption{Pairwise Tukey test across different gender regarding competence (top) and propensity to delegate (bottom). Pairwise test shows that competence and propensity to delegate is indistinguishable across gender.}
\end{figure}

\newpage
\section{Normality Assumptions for Regressions}\label{app:norm}
We check that the variables used in the different regressions are indeed normal, per the tests' assumptions.

\begin{figure}[H]
\begin{center}
      \includegraphics[width=0.4\linewidth]{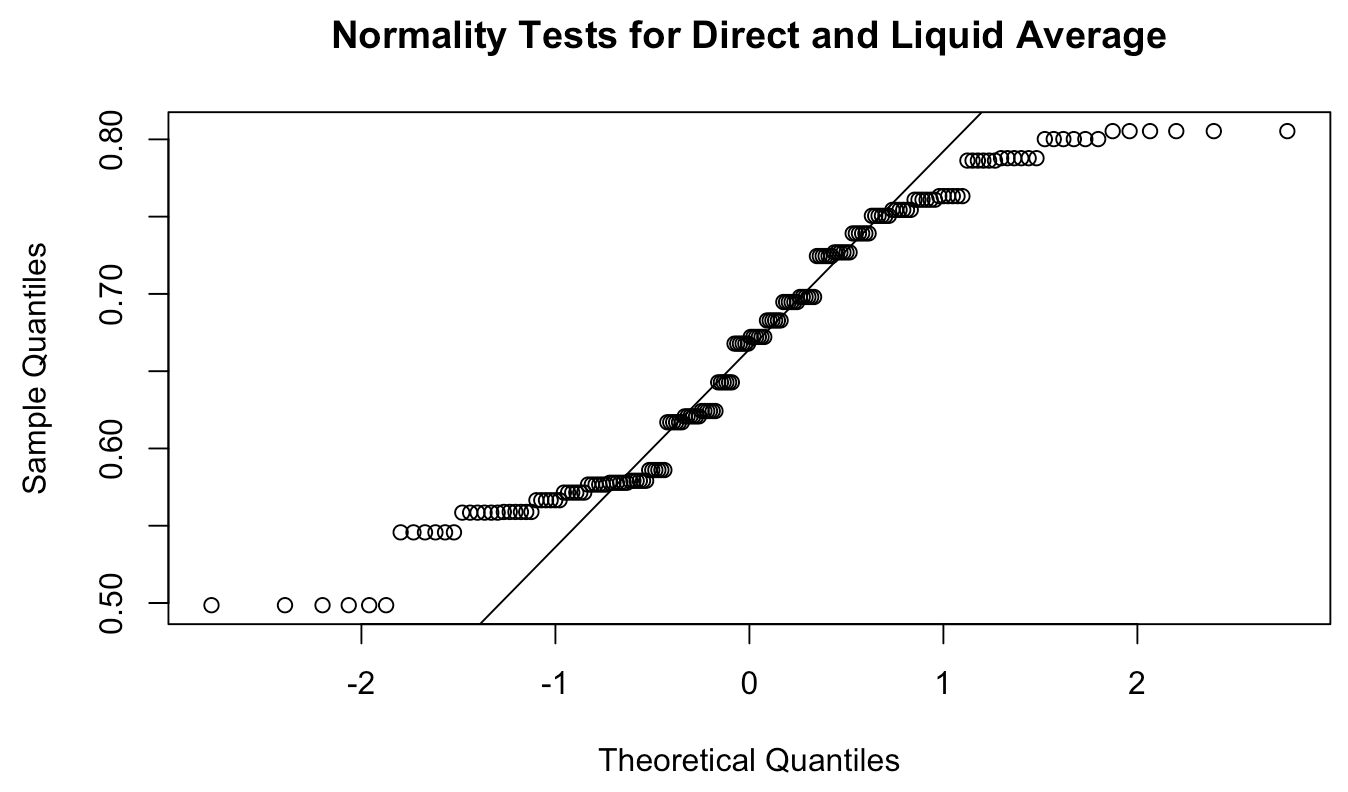}
  \includegraphics[width=0.4\linewidth]{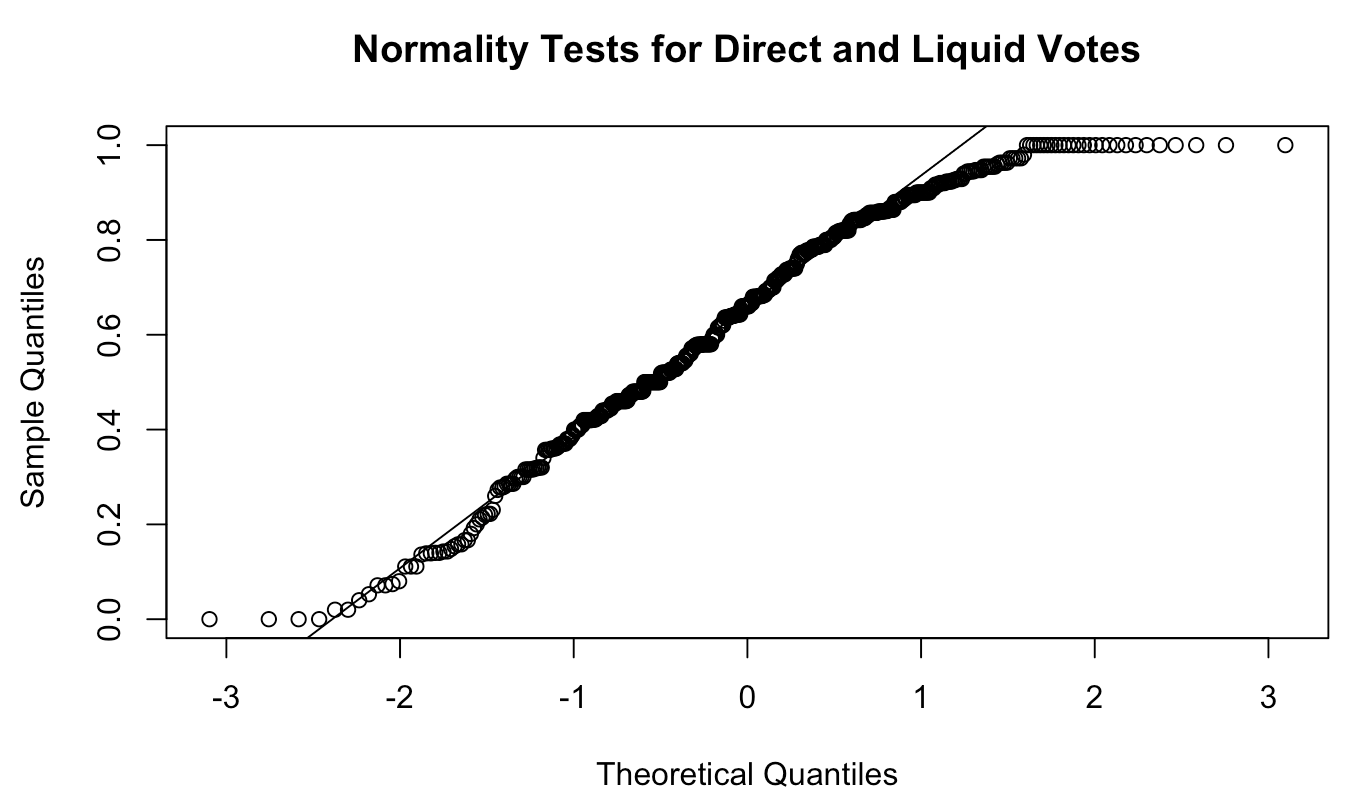}
  \includegraphics[width=0.4\linewidth]{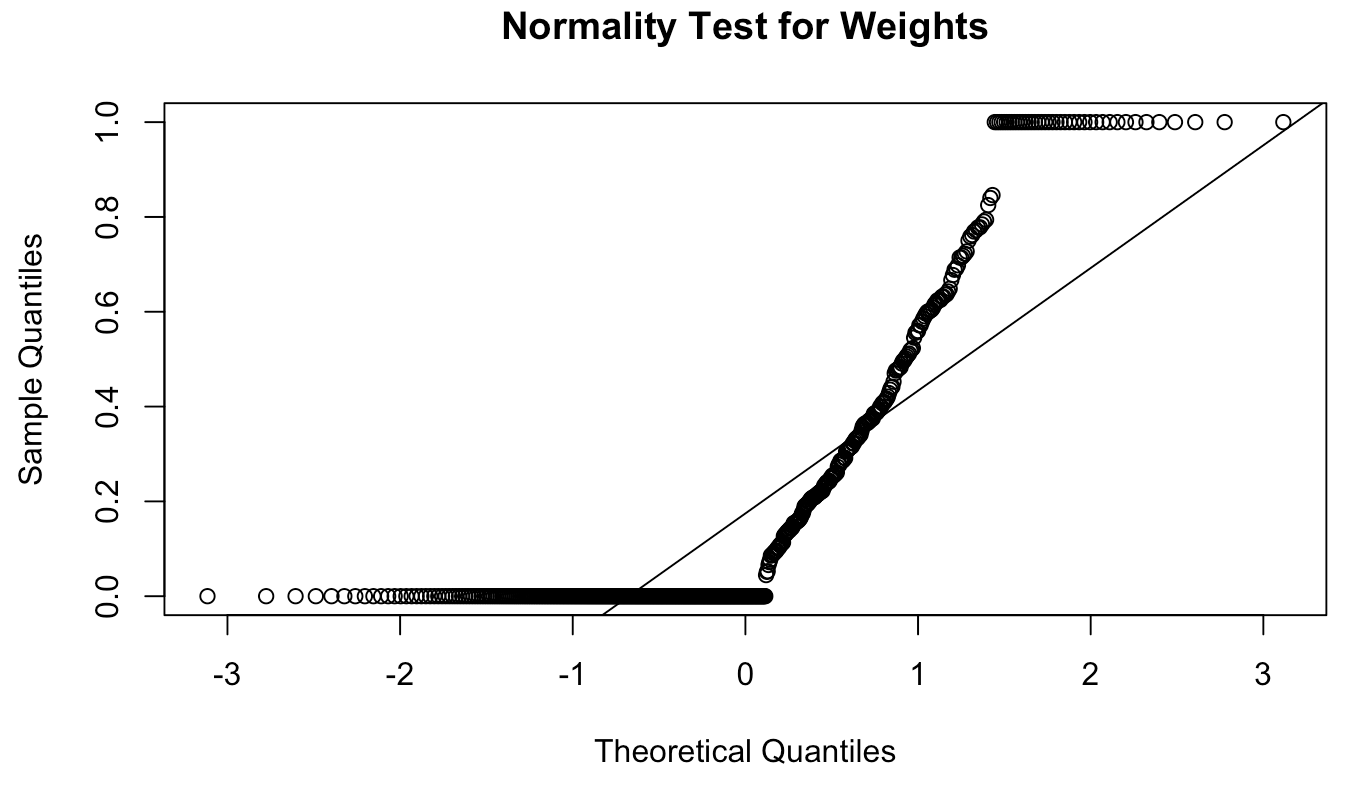}
  \includegraphics[width=0.4\linewidth]{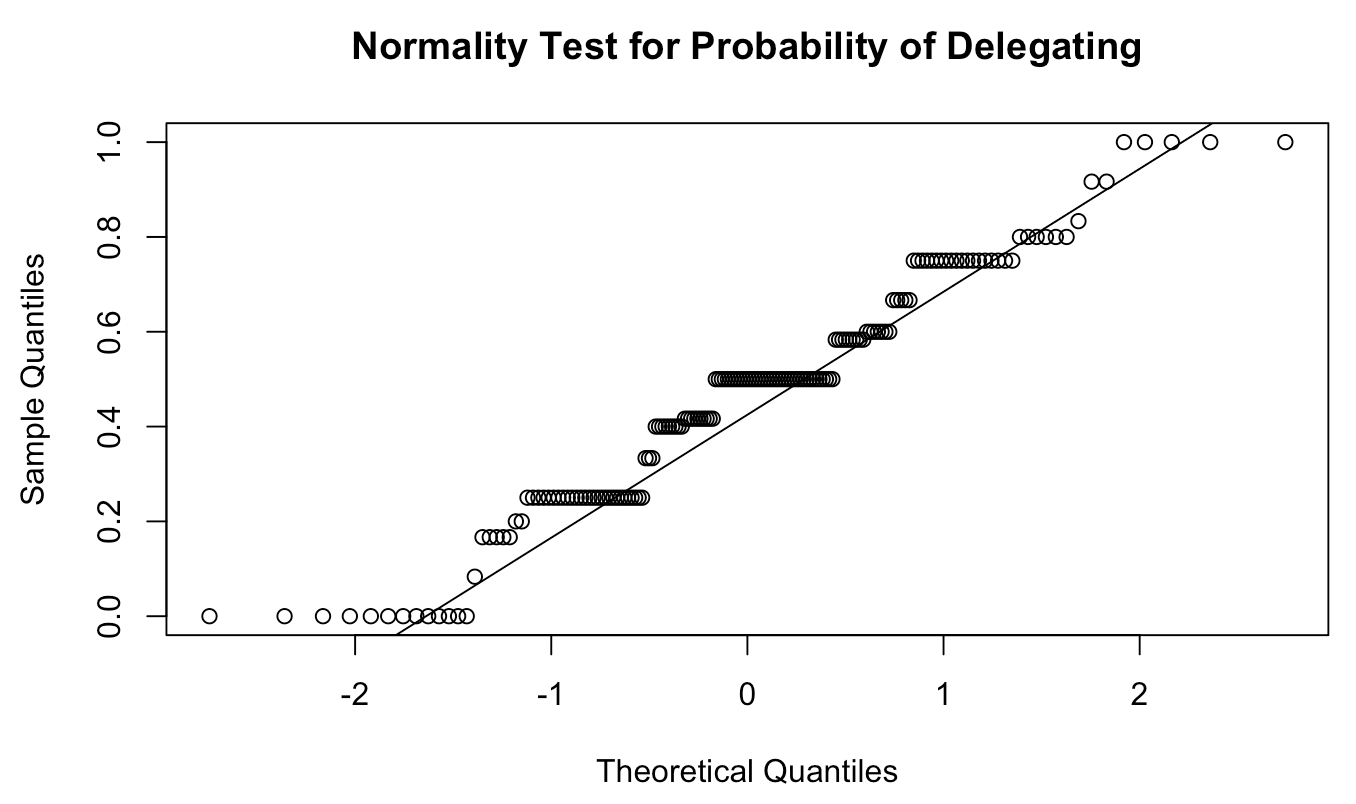}
  \includegraphics[width=0.4\linewidth]{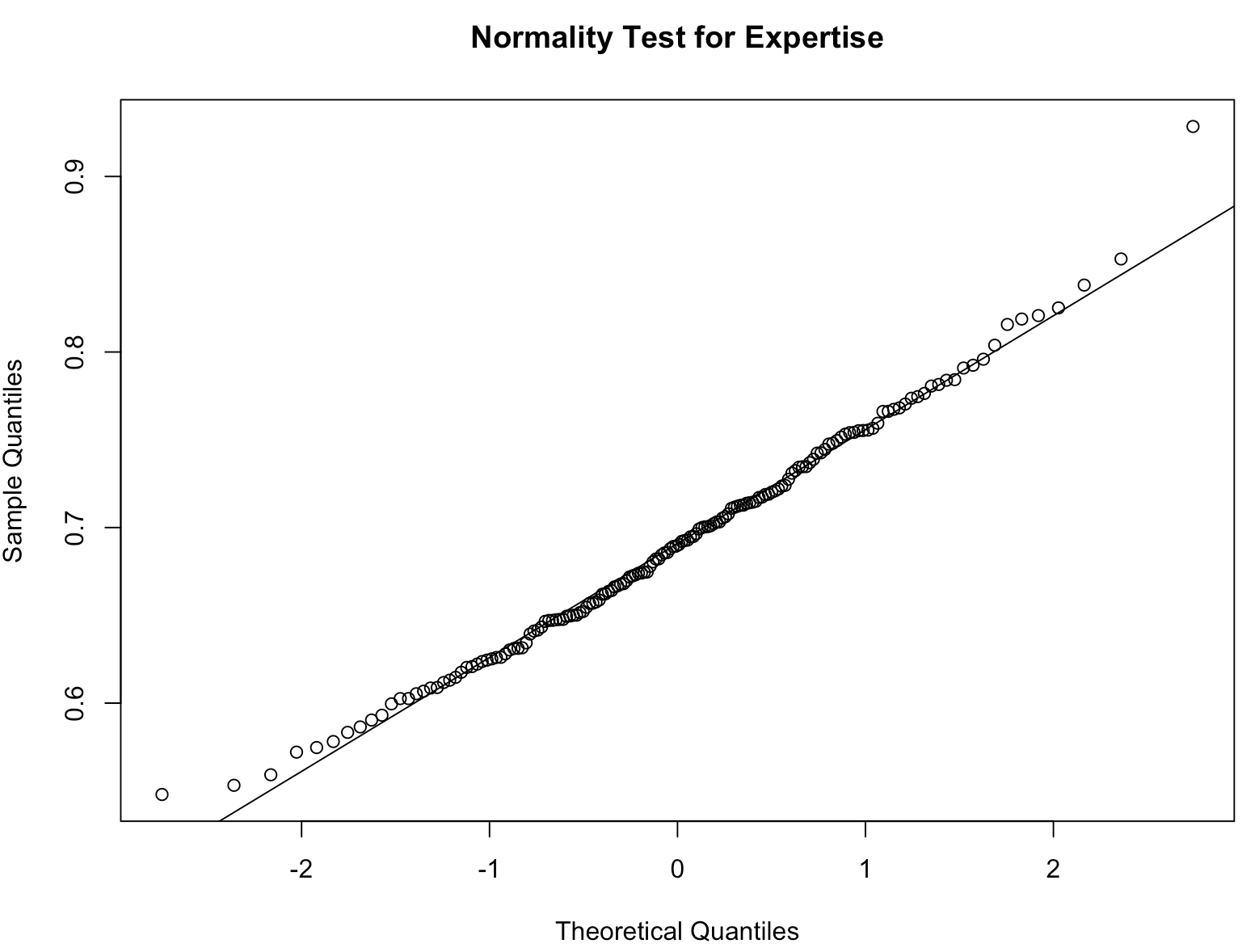}
\end{center}

  \caption{Normality Tests}
  \medskip
  \small From top to bottom and left to right, normality tests for the average competence in direct and liquid democracies, the average vote outcome in direct and liquid democracies, the estimated values of $\varphi$, the probability of delegating and the participants' competence.
\end{figure}

\clearpage
\section{Additional tests for \Cref{sec:method_q}}\label{app:q}

We run a few additional tests for understanding the relationship between the propensity to delegate and competence.
First, we repeat the same analysis we did in \Cref{sec:method_q}, but partitioning by task first. Results can be found in \Cref{table:resultsQ,tab:task-specific-6}. Note that tasks 8--15 only appeared in experiment 6, so fewer data points were available, resulting in much larger standard errors.

\begin{table}[htb]
\caption{Estimates of $\beta_q$ for tasks 1--7.}\label{table:resultsQ}

  \centering
{  \footnotesize\begin{adjustbox}{angle=0}\begin{tabular}{@{\extracolsep{5pt}}lcccccccc} 
\\[-1.8ex]\hline 
\hline \\[-1.8ex] 
& \multicolumn{1}{c@{}}{Overall Model} & 
\multicolumn{7}{c@{}}{Tasks Models}\\ 
\cmidrule(lr){2-2}
\cmidrule(lr){3-9}
& \multicolumn{1}{c@{}}{\textit{}}
& \multicolumn{7}{c@{}}{(Tasks)}
\\[1ex] 
& \mc{} & \mc{($T_1$)} & \mc{($T_2$)} & \mc{($T_3$)} & \mc{($T_4$)} & \mc{($T_5$)} & \mc{($T_6$)} & \mc{($T_7$)}\\ 
\midrule
Effect Size $\beta^q$  &  $-2.23^{****}$ & 
 $-2.05$ & $-1.96$ & $-4.06^{***}$ & $2.33$ & $0.21$ & $-8.04^{*}$ & $-1.50$ \\ 
  & (0.42) & 
  (1.58) & (1.50) & (1.25) & (4.05) & (1.57)& (4.88)& (0.23) \\ [1ex]
\midrule
  \begin{tabular}{@{}c@{}} Clustered S.E.\end{tabular} & \multicolumn{1}{c}{\textrm{i}}  & \multicolumn{1}{c}{\textrm{NA}} & \multicolumn{1}{c}{\textrm{NA}} &
  \multicolumn{1}{c}{\textrm{NA}} & \multicolumn{1}{c}{\textrm{NA}} &
  \multicolumn{1}{c}{\textrm{NA}}
   & \multicolumn{1}{c}{\textrm{NA}} &
  \multicolumn{1}{c}{\textrm{NA}}\\ 
\hline \\ [-1.8ex] 
\textit{Note:}  & \multicolumn{8}{r}{$^{*}$p$<$0.1; $^{**}$p$<$0.05; $^{***}$p$<$0.01} \\
\end{tabular}
\end{adjustbox}}

\end{table}

\begin{table}[htb]
\footnotesize
\caption{Estimates of $\beta_q$ for tasks 8--15. \label{tab:task-specific-6}}

  \centering
{\begin{tabular}{@{\extracolsep{5pt}}lcccccccc} 
\\[-1.8ex]\hline 
\hline \\[-1.8ex] 
& \multicolumn{6}{c@{}}{Tasks Models}\\ 
\cmidrule(lr){2-9}
& \multicolumn{1}{c@{}}{\textit{}}
& \multicolumn{5}{c@{}}{(Tasks)}
\\[1ex] 
& \mc{($T_8$)} & \mc{($T_9$)} & \mc{($T_{10}$)} & \mc{($T_{11}$)} & \mc{($T_{12}$)} & \mc{($T_{13}$)} & \mc{($T_{14}$)} & \mc{($T_{15}$)}\\ 
\midrule
Effect Size $\beta^q$  & $-5.13$ & $1.57$& $-3.87$& $-2.43$& $-2.50$ & $-3.62^{*}$& $3.70$& $-1.58$\\ 
  & (5.09) & (2.15) & (2.66) &  (2.38)&  (2.01)&  (1.88)&  (2.89)&  (1.57)\\ [1ex]

\midrule
  \begin{tabular}{@{}c@{}} Clustered S.E.\end{tabular} & \multicolumn{1}{c}{\textrm{NA}}  & \multicolumn{1}{c}{\textrm{NA}} & \multicolumn{1}{c}{\textrm{NA}} &
  \multicolumn{1}{c}{\textrm{NA}} & \multicolumn{1}{c}{\textrm{NA}} &
  \multicolumn{1}{c}{\textrm{NA}}
   & \multicolumn{1}{c}{\textrm{NA}} &
  \multicolumn{1}{c}{\textrm{NA}}\\ 
\hline \\ [-1.8ex] 
\textit{Note:}  & \multicolumn{8}{r}{$^{*}$p$<$0.1; $^{**}$p$<$0.05; $^{***}$p$<$0.01} \\
\end{tabular}}
\end{table}

Next, we consider adding fixed effects to the model.
This helps us understand the direct impact of competence, even when keeping fixed an individual (as their competence varies across tasks) and the task (which could intrinsically lead to different propensities of delegation). More formally, we fit the following generalized linear model
\begin{equation}
    \log\left(\frac{\Pr[\delta_{i, t}=1]}{1 - \Pr[\delta_{i, t}=1]}\right) = \alpha_0 + \alpha_i + \alpha_t + \beta^q \eta_{i, t} + \varepsilon_{i}
\label{equ:q1}
\end{equation}
where $\alpha_0$ is the population average across participants and tasks, $\alpha_i$ is the fixed effect for person $i$ and $\alpha_t$ is the fixed effect for task $t,$ and $\beta^q$ is the coefficient for competence. Results can be found in \Cref{tab:fixed-effects}.

\begin{table}[htb]
\caption{Estimates for $\beta^q$ accounting for fixed-effects.\label{tab:fixed-effects}}

  \centering
{  \footnotesize\begin{adjustbox}{angle=0}\begin{tabular}{@{\extracolsep{5pt}}lcccc} 
\\[-1.8ex]\hline 
\hline \\[-1.8ex] 
& \multicolumn{1}{c@{}}{Overall Model} & 
\multicolumn{3}{c@{}}{With Fixed Effects}\\ 
\cmidrule(lr){2-2}
\cmidrule(lr){3-5}
& \multicolumn{1}{c@{}}{\Cref{equ:q0}}
& \multicolumn{3}{c@{}}{\Cref{equ:q1}}
\\[1ex] 
& \mc{} & \mc{} & \mc{} & \mc{}\\ 
\midrule
Effect Size $\beta^q$  & $-2.23^{****}$ & 
 $-2.90^{****}$ & $-1.67^{****}$ & $-1.86^{****}$ \\ 
  & (0.42) & 
  (0.60) & (0.35) & (0.45)\\ [1ex]
\midrule
  \begin{tabular}{@{}c@{}}Fixed Effects \end{tabular} & \multicolumn{1}{c}{NA}  & \multicolumn{1}{c}{\textrm{i}} & \multicolumn{1}{c}{\textrm{t}} &
  \multicolumn{1}{c}{\textrm{t, i}}\\
\midrule
  \begin{tabular}{@{}c@{}} Clustered S.E.\end{tabular} & \multicolumn{1}{c}{\textrm{i}}  & \multicolumn{1}{c}{\textrm{i}} & \multicolumn{1}{c}{\textrm{i}} &
  \multicolumn{1}{c}{\textrm{i}}\\ 
\hline \\ [-1.8ex] 
\textit{Note:}  & \multicolumn{4}{r}{$^{****}$p$<$0.0001} \\
\end{tabular}\end{adjustbox}}

\end{table}

The results show that participant-specific characteristics (such as confidence) are at play in delegation decisions. We also see that within each task, the probability of delegating decreases more slowly with competence than in the model without 
fixed effect.
This implies that task-specific characteristics (such as difficulty) 
are
also at play in delegation decisions.





\clearpage
\section{Additional tests and details for \Cref{sec:method_phi}}\label{app:phi}

\subsection{Alternative Bucketing Strategies}\label{app:bucket-strategies}
We experiment with varying bucketing methods, to ensure that the results are robust. The strategies (other than $k$-means) are the following. An example illustrating the differences can be found in \Cref{Fig:buckets}.

\paragraph{Equal cut:}
With equal cut, we divide the $[0,1]$ line in $B$ buckets $c_\ell = [(\ell - 1)/B, \ell/B]$ for $\ell \in \{1, \ldots, B\}$ of equal size.
We vary the number of buckets $B$ from $3$ to $10$ to ensure robustness of the approach. When buckets are empty, we compute the weights on the existing types and re-normalize in the final stage. 

\paragraph{Quantile cut:} We cut the $[0,1]$ line in $B$ quantiles $c_{\ell}$ for $\ell \in [b]$ so that the number of competence values in each bucket is the same and we take the mean competence in the designated bucket to be the representative competence, $\eta_{\ell}.$ That is, $\eta_{\ell} = \frac{\sum_{i,t} \eta_{i, t}\mathbb{I}[\eta_{i, t} \in c_{\ell}]}{\sum_{i,t} \mathbb{I}[\eta_{i, t} \in c_{\ell}]}.$ In short, each competence $\eta_{i, t}$ is assigned to a bucket $\eta_{\ell}$ such that $\ell \leq \eta_{i, t} <l+1/b.$ We vary the number of buckets $B$ to ensure robustness of the approach (see Appendix~\ref{buckets-results}).

\paragraph{Gaussian Mixture Model:}
We assume that the competence level
$\eta_{i,t}$ 
is
drawn from $B$ Gaussian distributions that we intend to reconstruct. To do so, we maximize the log-likelihood $\log\Pr(\vec{\eta}) = \sum_{i,t}\sum_{k}^b \log\left(\pi(k)\mathcal{N}(\eta_{i,t}|\eta_k,\sigma^2)\right),$ where $\vec{\eta}$ is the vector of $\eta_{i,t},$ $\pi(k)$ is the probability of being in Gaussian $k,$ $\eta_k,\sigma^2$ are the mean and variance of the $k-$th Gaussian and $\mathcal{N}(x|\mu, s^2)$ denotes the probability density function of a Gaussian with mean $\mu$ and standard deviation $s$ evaluated at $x.$ This optimization cannot be solved in closed-form, and we use the Expectation-Maximization (EM) algorithm to estimate the Gaussian means $\eta_k,$ as well as the marginal probabilities $\Pr(k|\eta_{i,k}).$\footnote{See details here: \url{https://people.csail.mit.edu/rameshvs/content/gmm-em.pdf}.} Last, we find the number of Gaussian $B$ that maximizes the likelihood's cross-validation estimate. In turn, we obtain an assignment of competence $\eta_{i,t}$ to $B$ Gaussian and $B$ Gaussian's mean, that we denote $\eta_{\ell}$ for the $\ell-$th Gaussian. Each competence $\eta_{i, t}$ is assigned to a Gaussian $\eta_{\ell}$ based on $\Pr(k|\eta_{i,k}).$ 

\begin{figure}[htb]
  \centering
  \includegraphics[width=\linewidth]{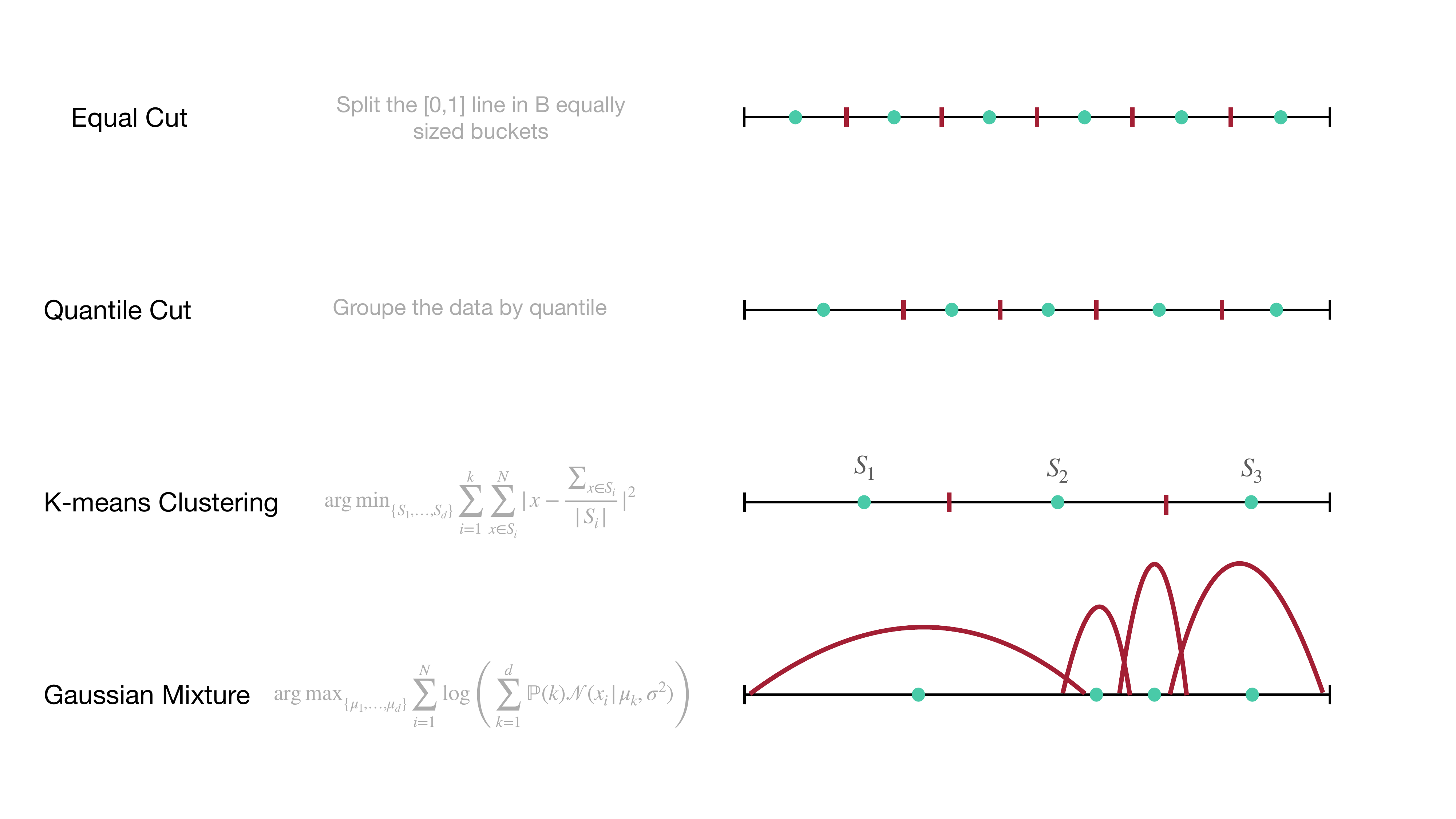}
  \caption{Alternative Bucketing Strategies Illustrated}
  \label{Fig:buckets}
\end{figure}

\subsection{Maximum Likelihood Estimation for $\varphi^\ell_{e, t}$}\label{app:mle-phi}
Fix a delegation graph, corresponding to experiment $e$ and task $t$. We are interested in reconstructing the most likely estimate of $\varphi_{e,t}^\ell(\eta_k)$ 
(we will drop $e$ and $t$ from the notation when they are clear from context). Let $z^\ell_k$ be the observed number of times that someone of type $\ell$ delegated to someone of type $k.$ Let $n_\ell$ be the number of people of type $\ell$, and let $\Tilde{n}_\ell$ be the number of people of type $\ell$ that delegated. 
Note that $\varphi$ is always invariant to scaling, so for consistency across experiments, we will always normalize 
SO
that $\sum_k \varphi^{\ell}_{e, t}(\eta_k) = 1$.

\begin{proposition}
    The maximum likelihood estimators for $\varphi_{e,t}^\ell(\eta_1), \ldots, \varphi_{e,t}^\ell(\eta_B)$ are\footnote{Or $0$ in if the denominator is $0$.}
    \[\varphi^\ell_{e, t}(\eta_k) = 
    \begin{cases}
        \frac{z^\ell_k}{n_k} & \text{ if } k \ne \ell\\
        \frac{z^\ell_k}{n_k - 1} & \text{ if } k = \ell
    \end{cases}
\]
up to arbitrary normalization constants. 
\end{proposition}
\proof{Proof}
    Fix $e$, $t$, and $\ell$. Let $\tilde{n}_k = n_k$ for $k \ne \ell$ and $\tilde{n}_\ell = n_\ell - 1$. Note that when a voter in bucket $\ell$ is considering whether or not to delegate, they see $\tilde{n}_k$ other voters of each type $k$. 
    Hence, for a parameter setting $\vec{\varphi} = (\varphi^{\ell}(\eta_1), \ldots, \varphi^{\ell}(\eta_k))$, the probability they delegate to a specific voter of type $k$ is $\frac{\varphi^\ell(\eta_k)}{\sum_{k'} \tilde{n}_{k'}\varphi^\ell(\eta_{k'})}$.
    In the data observed, a voter in bucket $\ell$ delegated to another in bucket $k$ a total of $z^\ell_k$ times. Hence, the likelihood of $\vec{\varphi}$ can be written as
    \[
        L(\vec{\varphi}) = \prod_k \left(\frac{\varphi^\ell(\eta_k)}{\sum_{k'} \tilde{n}_{k'}\varphi^\ell(\eta_{k'})}\right)^{z^{\ell}_k}
    \]
    
    Notice that this probability does not change by changing all the $\varphi$ values by the same constant factor, so when finding the most likely, it is without loss of generality to optimize over $\vec{\varphi}$ such that $\sum_{k'} \tilde{n}_{k'}\varphi^\ell(\eta_{k'}) = 1$. In such cases, the likelihood simplifies to
    \[
        L(\vec{\varphi}) = \prod_k \left(\varphi^\ell(\eta_k)\right)^{z^{\ell}_k}.
    \]
    Additionally, notice that optimizing over a constant times the likelihood results in the same optimization problem, so the MLE $\vec{\varphi}$ must also maximize, multiplying by $\prod_k (\tilde{n}_k)^{z^\ell_k}$,
    \[
        L'(\vec{\varphi}) = \prod_k \left(\tilde{n}_k \varphi^\ell(\eta_k)\right)^{z^{\ell}_k}.
    \]
    Finally, note that choosing $\vec{\varphi}$ such that $\sum_{k'} \tilde{n}_{k'}\varphi^\ell(\eta_{k'}) = 1$ so as to maximize $L'$ is equivalent to finding the MLE of a multinomial distribution with $B$ outcomes and probability $\tilde{n}_k\varphi^\ell(\eta_k)$ of ending up in outcome $k$. This is well known to be maximized with $\tilde{n}_k\varphi^\ell(\eta_k) \propto z_k$ 
    Therefore, up to arbitrary normalization factors, the MLE $\vec{\varphi}$ satisfy $\varphi^\ell(\eta_k) = \frac{z_k}{\tilde{n}_k}$.
\Halmos\endproof

\subsection{Kendall tau rank correlation coefficient }\label{app:phi-kendall-tau}
Let $R(\varphi_{e, t}^{\ell}(\eta_k))$ be the rank of $\varphi_{e, t}^{\ell}(\eta_k)$ amongst all the estimated weights and  $R(\eta_k)$ be the rank of $\eta_k$ 
among all competenceS. 
We then compute $$\tau = \frac{\sum_{k\in[N]}\mathbb{I}[R(\varphi_{e, t}^{\ell}(\eta_k)) = R(\eta_k)]-\sum_{k\in[N]}\mathbb{I}[R(\varphi_{e, t}^{\ell}(\eta_k)) \neq R(\eta_k)]}{N - N_{\varphi} - N_{\eta}}.$$ where $N_{\varphi}, N_{\eta}$ represent the number of ties in $R(\varphi_{e, t}^{\ell}(\eta_k))$ and $R(\eta_k)$ respectively. Next, we compare the z-score for 
the Kendall tau rank,
$z = \frac{3\tau\sqrt{N(N-1)^2(N-2)}}{\sqrt{8(N(N-1)+5)}}$ the values of a Gaussian distribution.\footnote{(Or, more precisely, we let the function stats.kendalltau in python's library scipy handle all of this for us.)} We further 
check the Kendall 
tau
rank correlation coefficient between $\varphi_{e, t}^{\ell}(\eta_k)$ and $\eta_k$ for a fixed $\ell.$

\subsection{Task-specific Effects}\label{app:phi-task-specific}

In \Cref{table:resultsphi_task}, we report the 
the Kendall tau correlation coefficients for each task to understand whether the observed behavior is constant across tasks. Note that for the first three tasks, there are $96$ data points per regression. For those, we observe a clear trend that $\varphi$ is increasing in its second coordinate across all three tasks. The following ones, however, rely on only $16$ data points.

\begin{table}[thb]
  \caption{Correlation and p-values Across Tasks}
  \small 

   \centering
   \vspace{0.5em}
  \begin{tabular}{ccc}
    \hline
    Task & Correlation$^{}$ & p-value \\
    \hline
    $T_1$ & 0.3647$^{***}$ & 0.0011 \\
    $T_2$ & 0.2621$^{**}$ & 0.0079 \\
    $T_3$ & 0.3830$^{****}$ & $<$ 0.0001 \\
    $T_4$ & -0.1540 & 0.6233 \\
    $T_5$ & -0.2055 & 0.1004 \\
    $T_6$ & -0.1372 & 0.5666 \\
    $T_7$ & 0.4768$^{*}$ & 0.0201 \\
    $T_8$ & -0.1925 & 0.5392 \\
    $T_9$ & -0.5231$^{**}$ & 0.0127 \\
    $T_{10}$ & -0.2490 & 0.2300 \\
    $T_{11}$ & 0.5427$^{**}$ & 0.0108 \\
    $T_{12}$ & 0.0321 & 0.9115 \\
    $T_{13}$ & 0.5644$^{**}$ & 0.0059 \\
    $T_{14}$ & 0.5768$^{**}$ & 0.0051 \\
    $T_{15}$ & -0.2689 & 0.1948 \\
    \hline
    
  \end{tabular}\label{table:resultsphi_task}
  
  \vspace{.5em}
  $^{*}$p$<$0.1; $^{**}$p$<$0.05; $^{***}$p$<$0.01; $^{****}$p$<$0.0001
\end{table}

\subsection{Alternative Bucketing Results}\label{app:buckets-results}

We run the same experiments as in \Cref{sec:method_phi} with $B \in \{3, 5, 7, 10\}$. \Cref{Fig:b3,Fig:b5,Fig:b7,Fig:b10} and \Cref{tab:b3,tab:b5,tab:b7,tab:b10} show the equivalent of \Cref{Fig:phiQ} and \Cref{phi:test_table} for these various values of $B$. Similarly, \Cref{Fig:b7_q} and \Cref{tab:b7_q} show these results for $B=7$ using quantile cut bucketing, but all different variations lead to qualitatively similar results.

\begin{figure}[htb]
  \centering
  \includegraphics[width=0.7\linewidth]{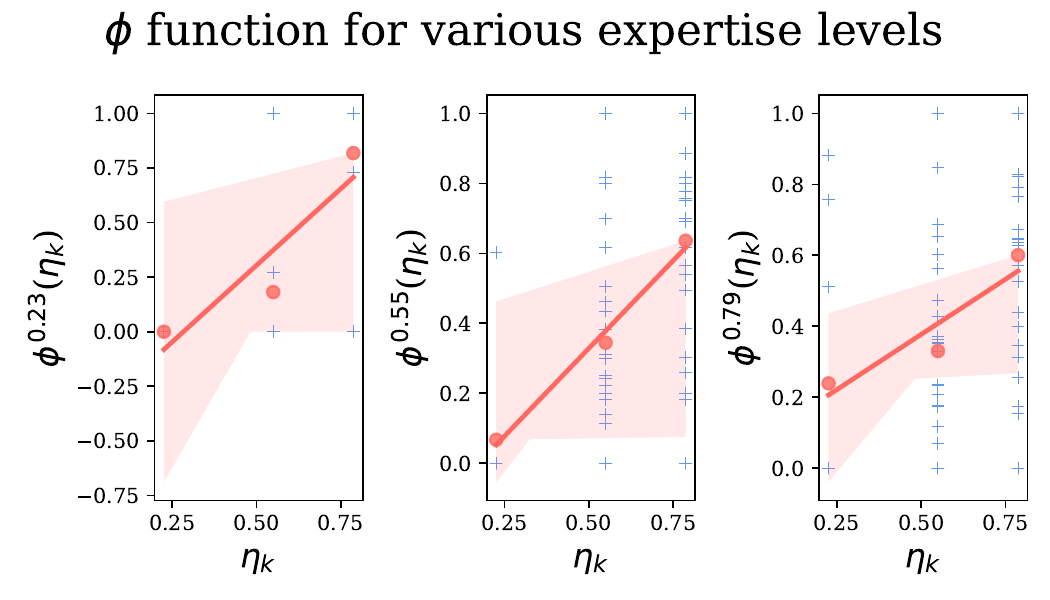}
  \includegraphics[width=0.5\linewidth]{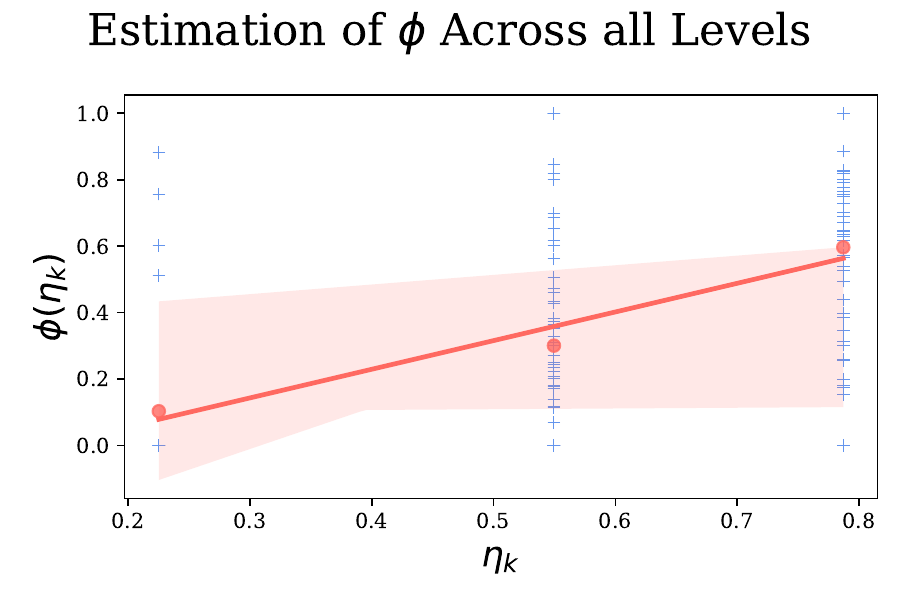}
  \caption{Estimation of $\varphi$ with $B=3$}
  
  \label{Fig:b3}
\end{figure}

\begin{table}[htb]
\caption{Estimation of $\varphi$ with $B=3$}\label{tab:b3}
  \centering
{  \footnotesize\begin{adjustbox}{angle=0}\begin{tabular}{@{\extracolsep{5pt}}lcccc} 
\\[-1.8ex]\hline 
\hline \\[-1.8ex] 
& \multicolumn{1}{c@{}}{Overall} & 
\multicolumn{3}{c@{}}{For fixed $\ell$}\\ 
\cmidrule(lr){2-2}
\cmidrule(lr){3-5} \\[-2ex] 
& & $c_1$ & \mc{$c_2$} & \mc{$c_3$}\\ 
\midrule
Correlation & $0.44^{****}$ & 
 $0.60^{*88*}$ & $0.43^{*888}$ & $0.34^{****}$\\ 
  P-value & $2\times 10^{-11}$ & 
  $2\times10^{-3}$& $2\times10^{-5}$ & $7\times10^{-4}$\\ \hline \\ [-1.8ex] \textit{Note:}  & \multicolumn{4}{r}{$^{*}$p$<$0.1; $^{**}$p$<$0.05; $^{***}$p$<$0.01; $^{****}$p$<$0.0001} 
\end{tabular}\end{adjustbox}}
\end{table}



\begin{figure}[htb]
  \centering
  \includegraphics[width=0.6\linewidth]{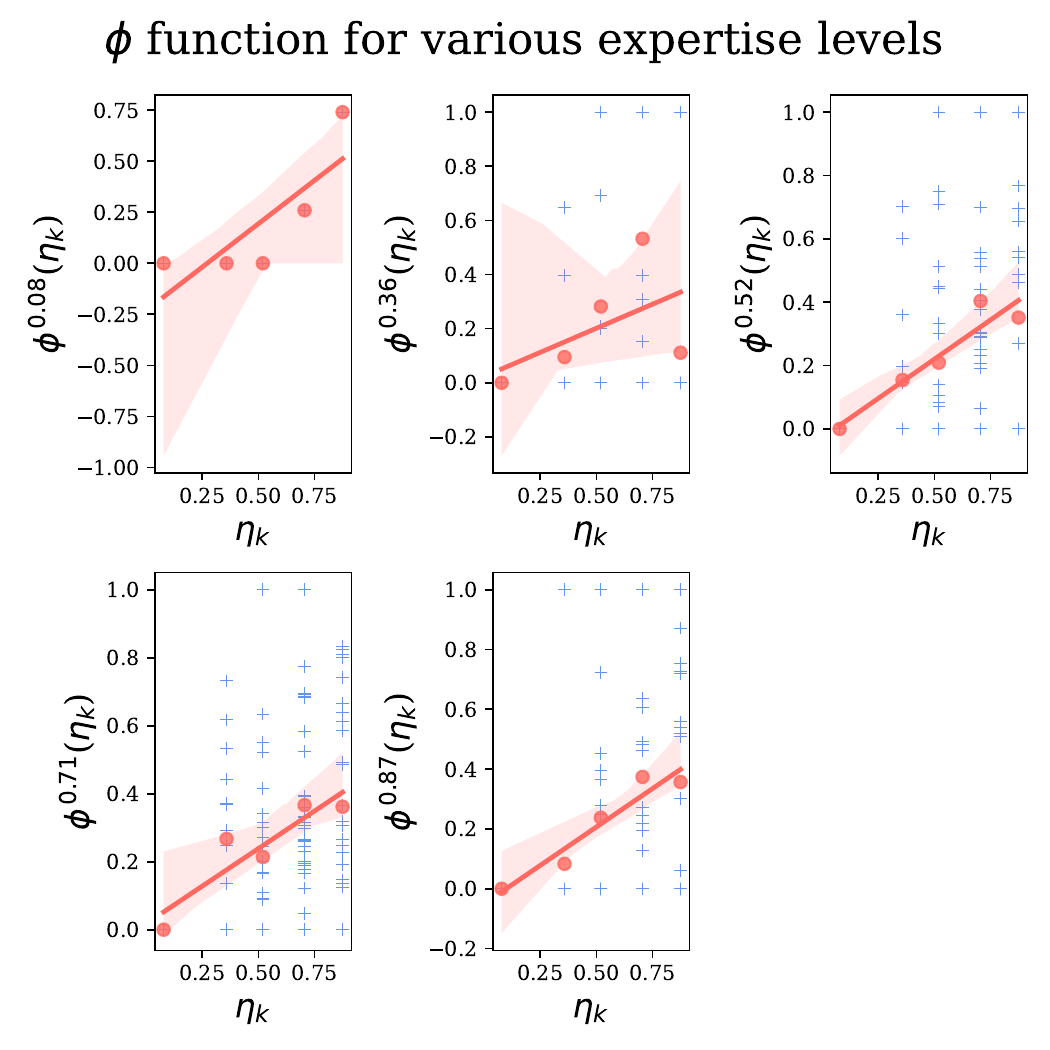}
  \includegraphics[width=0.4\linewidth]{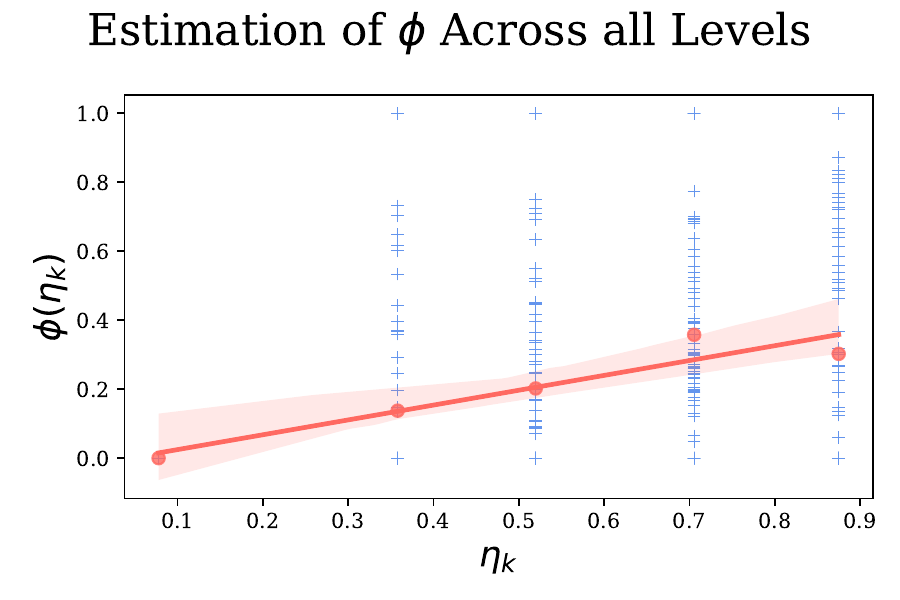}
  \caption{Estimation of $\varphi$ with $B=5$}
  
  \label{Fig:b5}
\end{figure}

\begin{table}[htb]
\caption{Estimation of $\varphi$ with $B=5$}\label{tab:b5}

  \centering
{  \footnotesize\begin{adjustbox}{angle=0}\begin{tabular}{@{\extracolsep{5pt}}lcccccc} 
\\[-1.8ex]\hline 
\hline \\[-1.8ex] 
& \multicolumn{1}{c@{}}{Overall} & 
\multicolumn{3}{c@{}}{For fixed $\ell$}\\ 
\cmidrule(lr){2-2}
\cmidrule(lr){3-7} \\[-2ex] 
& & $c_1$ & \mc{$c_2$} & \mc{$c_3$} & \mc{$c_4$}& \mc{$c_5$}\\ 
\midrule
Correlation & $0.13^{****}$ & 
 $0.14^{**}$ & $0.06$ & $0.15^{**}$ & $0.10$ & $0.21^{***}$\\ 
  P-value & $6\times 10^{-5}$ & 
  $6\times10^{-2}$& $3\times10^{-1}$ & $4\times10^{-2}$ & $10^{-1}$ & $5\times10^{-3}$\\ \hline \\ [-1.8ex] \textit{Note:}  & \multicolumn{6}{r}{$^{*}$p$<$0.1; $^{**}$p$<$0.05; $^{***}$p$<$0.01; $^{****}$p$<$0.0001} 
\end{tabular}\end{adjustbox}}
\end{table}

\begin{figure}[htb]
  \centering
  \includegraphics[width=0.7\linewidth]{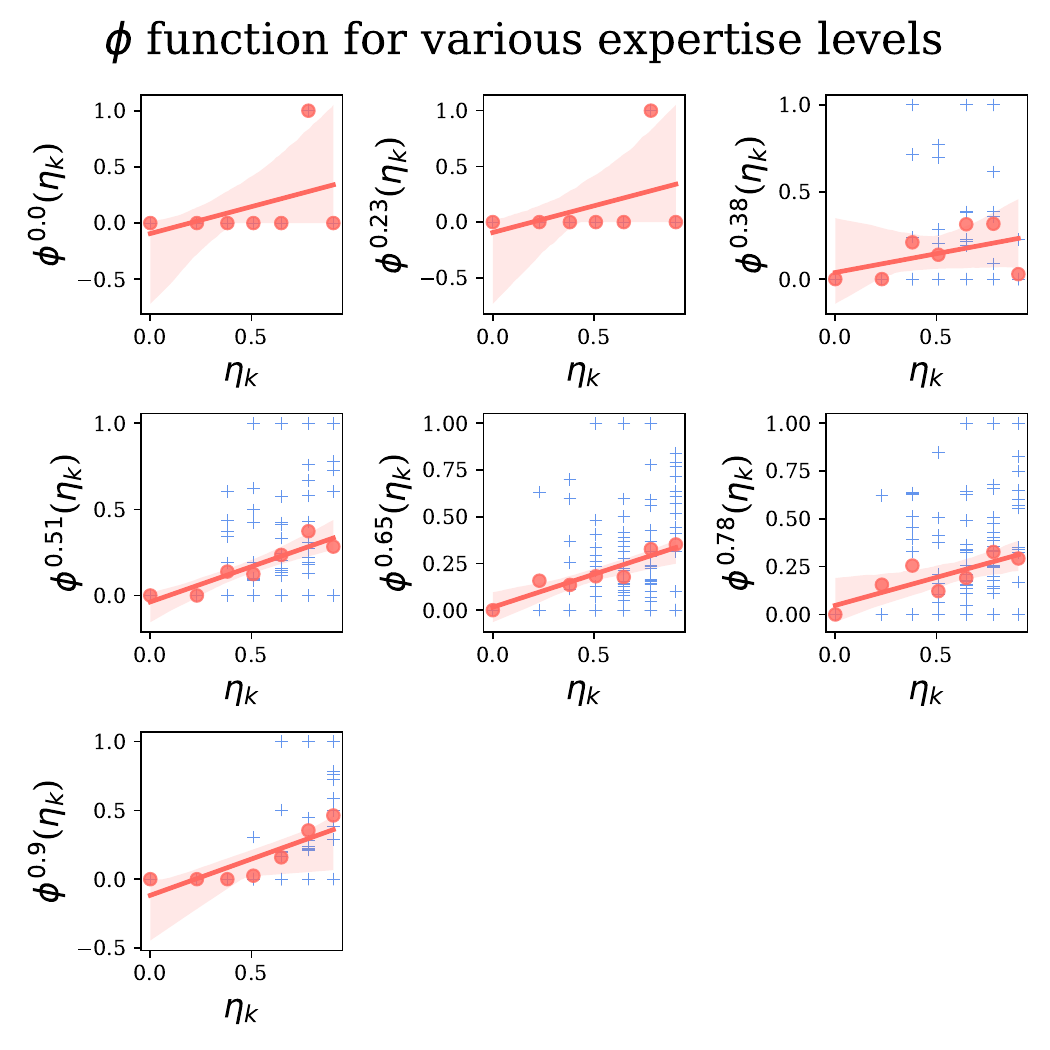}
  \includegraphics[width=0.4\linewidth]{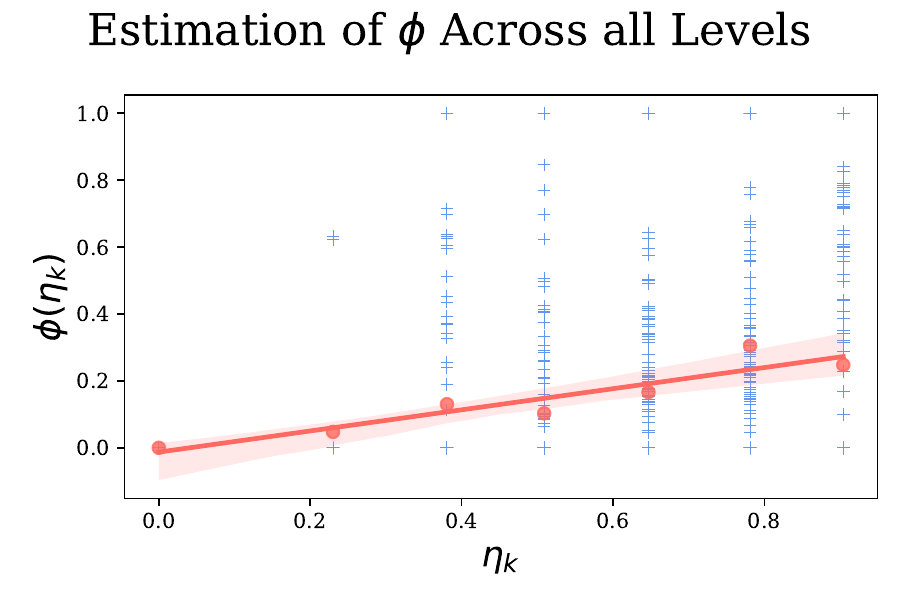}
  \caption{Estimation of $\varphi$ with $B=7$}
  
  \label{Fig:b7}
\end{figure}

\begin{table}[htb]
\caption{Estimation of $\varphi$ with $B=7$}\label{tab:b7}

  \centering
{  \footnotesize\begin{adjustbox}{angle=0}\begin{tabular}{@{\extracolsep{5pt}}lcccccccc} 
\\[-1.8ex]\hline 
\hline \\[-1.8ex] 
& \multicolumn{1}{c@{}}{Overall} & 
\multicolumn{3}{c@{}}{For fixed $\ell$}\\ 
\cmidrule(lr){2-2}
\cmidrule(lr){3-9} \\[-2ex] 
& & $c_1$ & \mc{$c_2$} & \mc{$c_3$} & \mc{$c_4$}& \mc{$c_5$}& \mc{$c_6$}& \mc{$c_7$}\\ 
\midrule
Correlation & $0.23^{****}$ & 
 $0.37$ & $0.36^{**}$ & $0.05$ & $0.18^{***}$ & $0.25^{****}$ & $0.16$& $0.45^{****}$\\ 
  P-value & $10^{-11}$ & 
  $3\times10^{-1}$& $10^{-2}$ & $6\times10^{-1}$ & $10^{-1}$ & $2\times10^{-4}$ & $2\times10^{-2}$ & $3\times10^{-6}$ \\ \hline \\ [-1.8ex] \textit{Note:}  & \multicolumn{8}{r}{$^{*}$p$<$0.1; $^{**}$p$<$0.05; $^{***}$p$<$0.01; $^{****}$p$<$0.0001} 
\end{tabular}\end{adjustbox}}
\end{table}

\begin{figure}[htb]
  \centering
  \includegraphics[valign=c, width=0.65\linewidth]{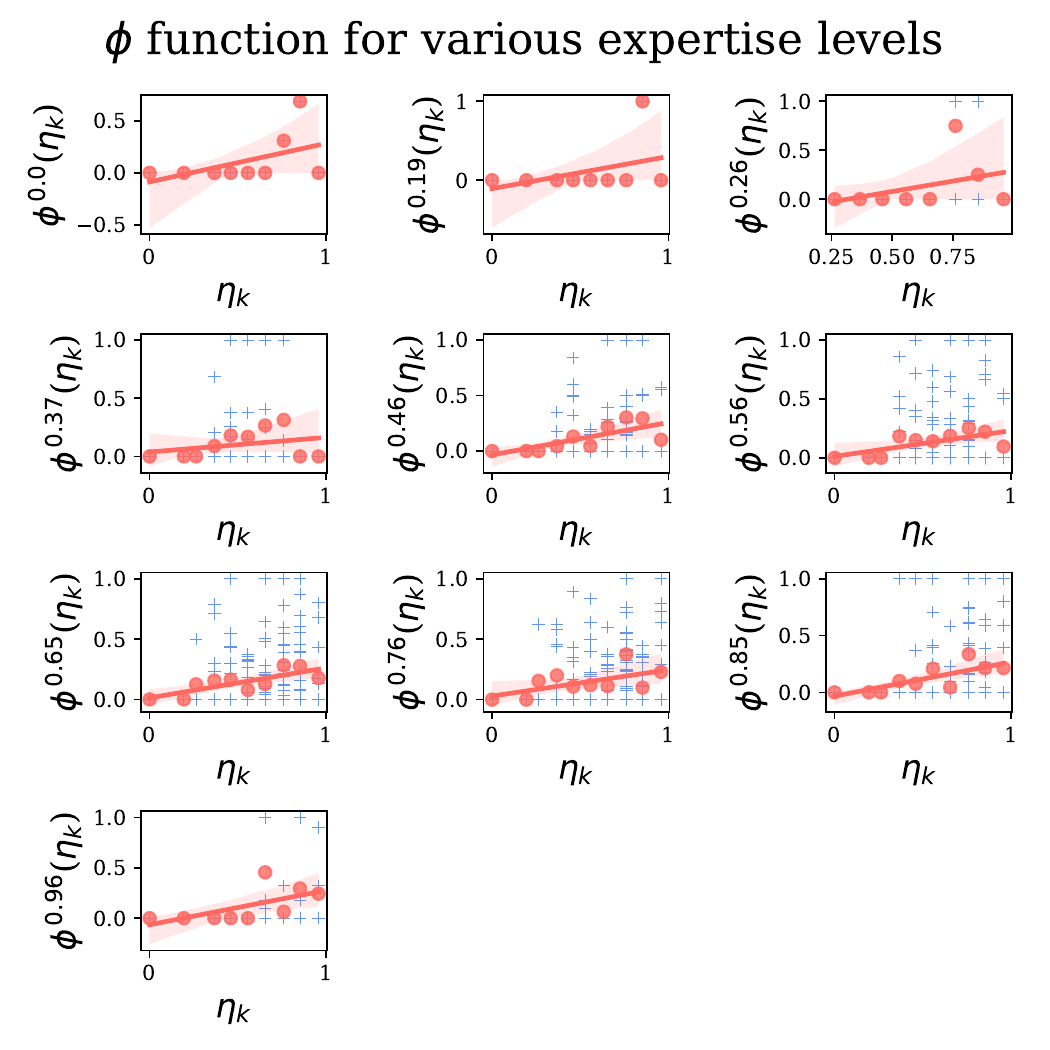}
  \includegraphics[valign=c, width=0.3\linewidth]{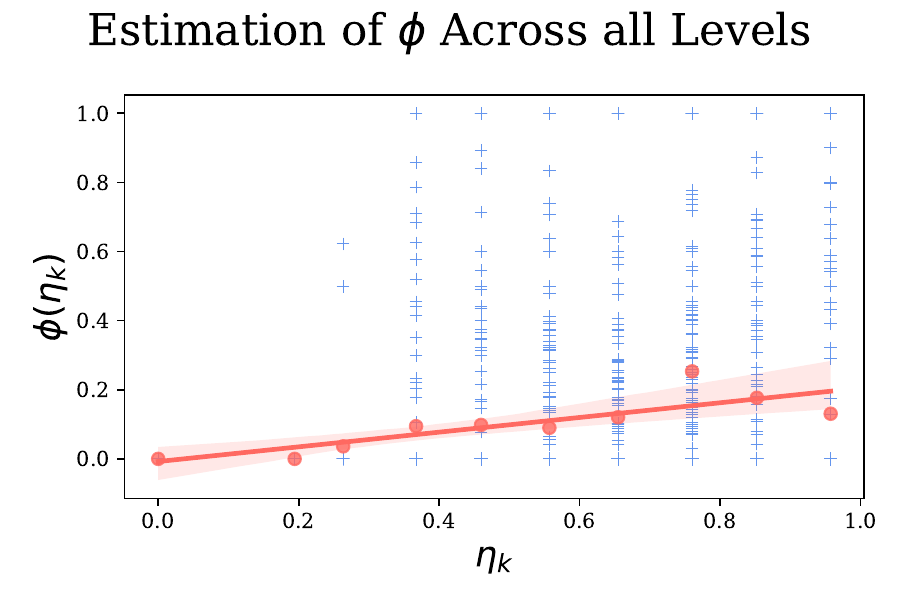}
  \caption{Estimation of $\varphi$ with $B=10$}
  
  \label{Fig:b10}
\end{figure}

\begin{table}[htb]
\caption{Estimation of $\varphi$ with $B=10$}\label{tab:b10}
  \centering
{  \footnotesize\begin{adjustbox}{angle=90}\begin{tabular}{@{\extracolsep{5pt}}lccccccccccc} 
\\[-1.8ex]\hline 
\hline \\[-1.8ex] 
& \multicolumn{1}{c@{}}{Overall} & 
\multicolumn{3}{c@{}}{For fixed $\ell$}\\ 
\cmidrule(lr){2-2}
\cmidrule(lr){3-12} \\[-2ex] 
& & $c_1$ & \mc{$c_2$} & \mc{$c_3$} & \mc{$c_4$}& \mc{$c_5$}& \mc{$c_6$}& \mc{$c_7$}& \mc{$c_8$}& \mc{$c_9$}& \mc{$c_{10}$}\\ 
\midrule
Correlation & $0.14^{****}$ & $0.47$ & $0.35$ & $0.35^{**}$ & $0.00$ & $0.16^{**}$ & $0.03$ & $0.15^{**}$ & $0.10^{*}$ & $0.21^{***}$ & $0.30^{**}$\\ 
  P-value &$1.2073\times 10^{-7}$ & $0.1111$ & $0.2453$ & $0.0388$ & $0.9763$ & $0.0291$ & $0.6754$ & $0.0135$ & $0.0957$ & $0.0026$ & $0.0418$ \\ \hline \\ [-1.8ex] \textit{Note:}  & \multicolumn{11}{r}{$^{*}$p$<$0.1; $^{**}$p$<$0.05; $^{***}$p$<$0.01; $^{****}$p$<$0.0001} 
\end{tabular}\end{adjustbox}}
\end{table}

\begin{figure}[htb]
  \centering
  \includegraphics[valign=c, width=0.7\linewidth]{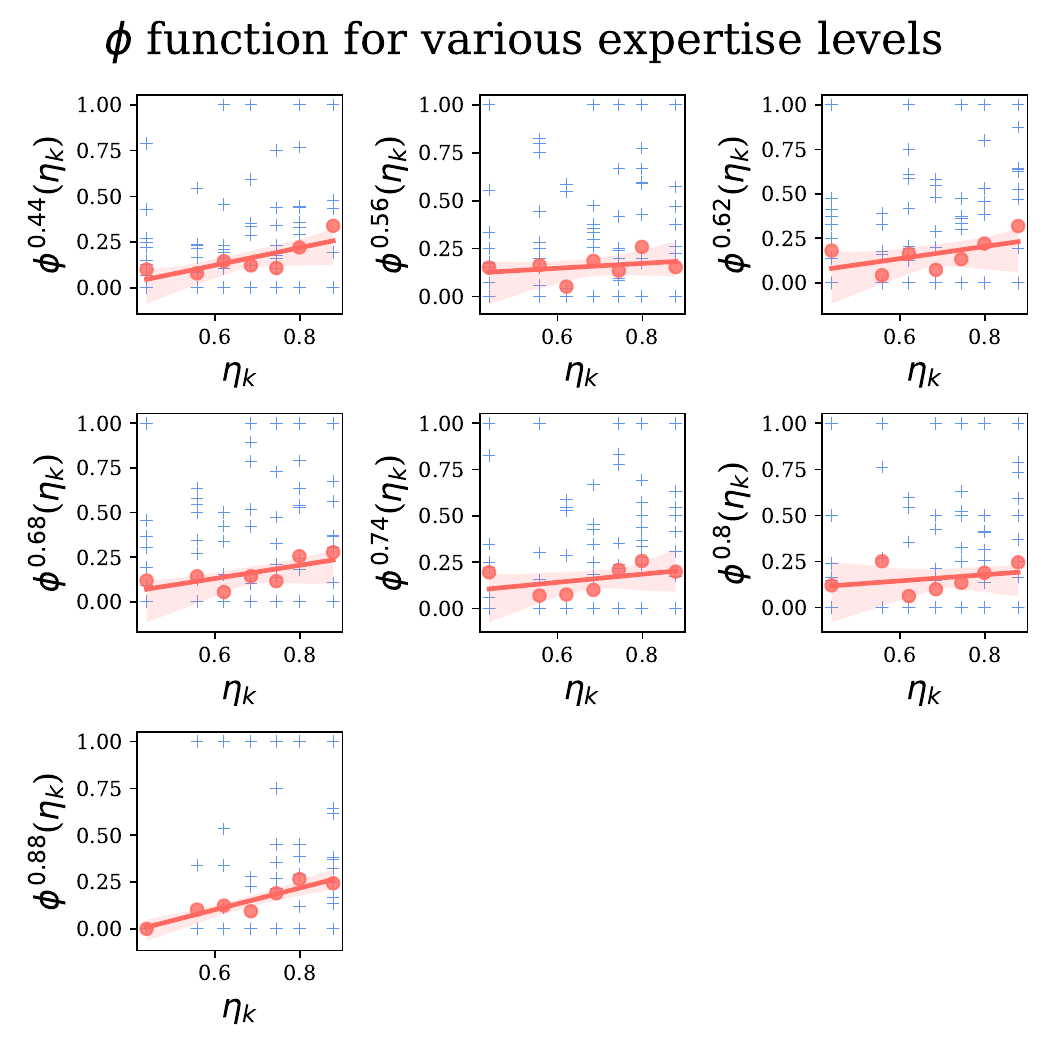}
  \includegraphics[valign=c, width=0.4\linewidth]{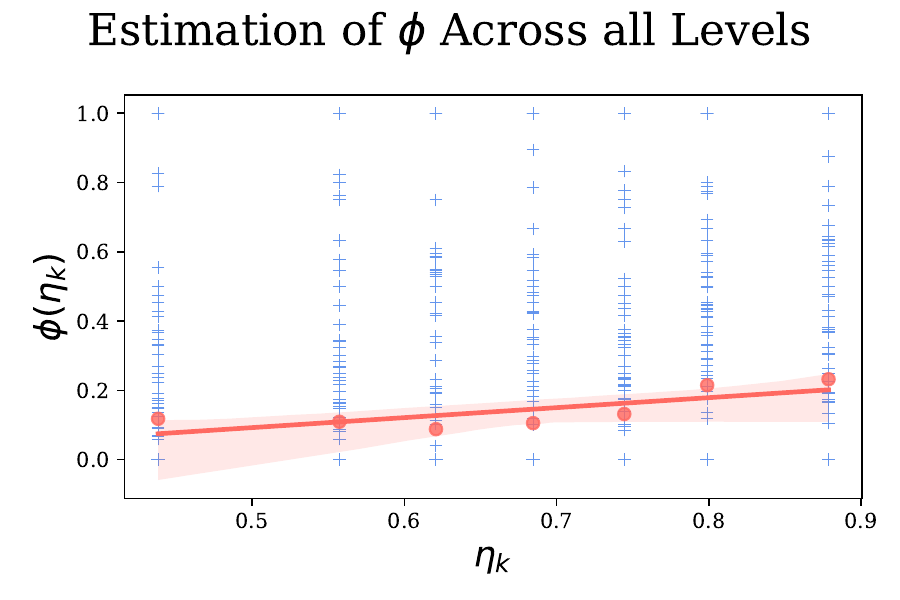}
  \caption{Estimation of $\varphi$ with $B=7$ using quantile bucketting.}
  
  \label{Fig:b7_q}
\end{figure}

\begin{table}[htb]
\caption{Estimation of $\varphi$ with $B=7$ using quantile bucketting.}\label{tab:b7_q}

  \centering
{  \footnotesize\begin{adjustbox}{angle=0}\begin{tabular}{@{\extracolsep{5pt}}lcccccccc} 
\\[-1.8ex]\hline 
\hline \\[-1.8ex] 
& \multicolumn{1}{c@{}}{Overall} & 
\multicolumn{3}{c@{}}{For fixed $\ell$}\\ 
\cmidrule(lr){2-2}
\cmidrule(lr){3-9} \\[-2ex] 
& & $c_1$ & \mc{$c_2$} & \mc{$c_3$} & \mc{$c_4$}& \mc{$c_5$}& \mc{$c_6$}& \mc{$c_7$}\\ 
\midrule
Correlation & $0.10^{****}$ & 
 $0.07$ &  $0.03$ & $0.08$ &  $0.10^{*}$& $0.11^{*}$ & $0.09$ & $0.29^{***}$\\ 
  P-value & $3\times 10^{-5}$ & 
  $0.25$& $0.62$ & $0.16$ & $0.099$ &$0.063$&$0.17$&$0.0003$ \\ \hline \\ [-1.8ex] \textit{Note:}  & \multicolumn{8}{r}{$^{*}$p$<$0.1; $^{**}$p$<$0.05; $^{***}$p$<$0.01; $^{****}$p$<$0.0001} 
\end{tabular}\end{adjustbox}}
\end{table}

\clearpage
\section{Additional Tests and Statistics}\label{app:additional}

\subsection{Robustness Check on Attention}\label{app:rob}
In order to check whether asking respondents to answer the questions associated with the tasks they delegated biased their answers, we run a robustness check on the time spent on the survey as a function of the delegation frequency.

First, note that 
the
design’s key components are that (i) respondents delegate 
in response to a
prompt (e.g., 
historical landmark) without seeing the questions so that (ii) we collect the liquid vote 
along with how they would have voted had they not delegated (this information is also used to assess competence).
Subsequently, participants are 
told
“you may be asked to answer a few additional questions.” Importantly, we do not say why, and participants answer questions they have not seen before. 
We deliberately designed
the survey that way 
in order
to minimize any biasing effects 
that might make a 
participant 
less motivated to answer questions they delegated. 

Second, we 
computed
the time participants spent on the experiment (standardized per experiment to account for the fact that different experiments had different questions) as a function of the proportion of time 
that
they delegate, to see whether they are significantly likely to spend less time on the test if they 
delegate
more (as one could hypothesize 
that
they would be less motivated and then answer the second phase questions more rapidly). We 
found
no relation between time spent on the survey and 
the likelihood delegation
(with a confidence interval for the coefficient of [-0.505, 0.155] and a coefficient of -0.175).

\subsection{Frequency of Correctness}\label{app:frequency}

We begin by directly comparing how frequently liquid and direct democracy were correct. The corresponding data (separated by task) can be found in \Cref{fig:compare-correct}. Although our experiments find evidence for sufficient conditions for DNH and PG, we do not observe statistically significant gains for LD over DD (but we do observe a positive difference in the point estimates). This is still consistent with DNH and PG holding (which both only kick in the limit, and even on these finite samples, LD is doing no worse). However, even if we believe such a gain could exist, it could not appear for a variety of reasons: it’s plausible we simply didn’t have enough statistical power to detect the improvement (we have $32$ comparisons) or that the improvement appears in the large limit and there were not enough group sizes. We leave the study of this to future work.

\begin{figure}[htb]
  \begin{center}
      \includegraphics[width=0.5\linewidth]{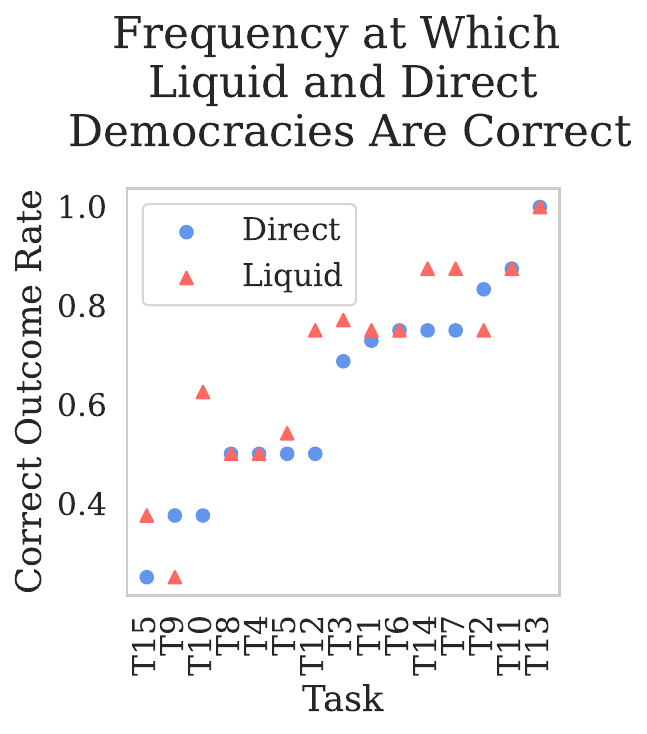}
  \end{center}
  \caption{Frequency of Correctness for Liquid and Direct Democracies}
  \label{fig:compare-correct}
\end{figure}

On average, we see an overall improvement, with a few exceptions for some tasks. To understand the statistical significance of this improvement, we fit the following generalized linear model. We let $o^L_{e, t}$ and $o^D_{e, t}$ be the proportion of questions answered correctly using liquid and direct democracy in experiment $e$ and task $t$. More formally, $o_{e,t}^L = \sum_{r} \mathbb{I}[\sum_{i\in[N_e]} v_{i, r}^\ell > N_e/2]/8$ and $o_{e,t}^D = \sum_{r} \mathbb{I}[\sum_{i\in[N_e]} v_{i, r}^D > N_e/2]/8.$ We then fit the following generalized linear model
\begin{equation*}
    o^j_{e, t} = \alpha_0 + \alpha_e + \alpha_t + \beta^{LvD}\gamma_j + \e_{e, t}
\end{equation*}
where $\gamma_j = \mathbb{I}[j = L]$, i.e., it is an indicator that this proportion came from liquid democracy. Once fitted, $\beta^{LvD}$ here represents the impact of using liquid democracy over direct democracy, while taking into account fixed effects from tasks and experiments (i.e., some tasks may be easier or harder, and some groups may overall perform better than others). We find $\beta^{\text{LvD}}=0.0313$ with $s.e.=0.025,$ $t=1.229$ and $p=0.23.$ In other words, liquid democracy's average proportion of correct answers is $3$ points above that of direct democracy, but this increase is not significant. 

\subsection{Increase in Competence}\label{app:increase}
Next, we test liquid democracy's ability to increase the average competence. We let $\eta_{e,t}^L = \frac{\sum_{i \in [N_e]}w_{i, t}\eta_{i,t}}{N_e}$ (resp. $\eta_{e,t}^D = \frac{\sum_{i \in [N_e]}\eta_{i,t}}{N_e}$) to denote the average competence 
after (resp., before)
delegation. We fit an analogous generalized linear model to the proportion of correctness, namely
\[
    \eta^j_{e, t}  = \alpha_0 + \alpha_e + \alpha_t + \beta^{increase}\gamma_j + \e_{e, t}.
\]
We find $\beta^{\text{increase}}=0.031$ with $s.e.=0.006,$ $t=4.78$ and $p=0.000004.$ In other words, across all tasks and experiments, the mean average competence post delegation is $3\%$ higher than the mean average competence without delegation, and this result is quite statistically significant.

\subsection{Maximum Weight}\label{app:maxweight}
Next, we consider the maximum weight received by any individual. Note that this weight should always be considered with respect to the size of the group considered. Boxplots of the observed maximum weights can be found in \Cref{Fig:weight}. Along with this, we plot a line for $N_e/2$, the ``dictator'' line. Note that any voter with weight more than $N_e/2$ essentially has complete control over the outcome. 

\begin{figure}[htb]
  \begin{center}
      \includegraphics[width=0.9\linewidth]{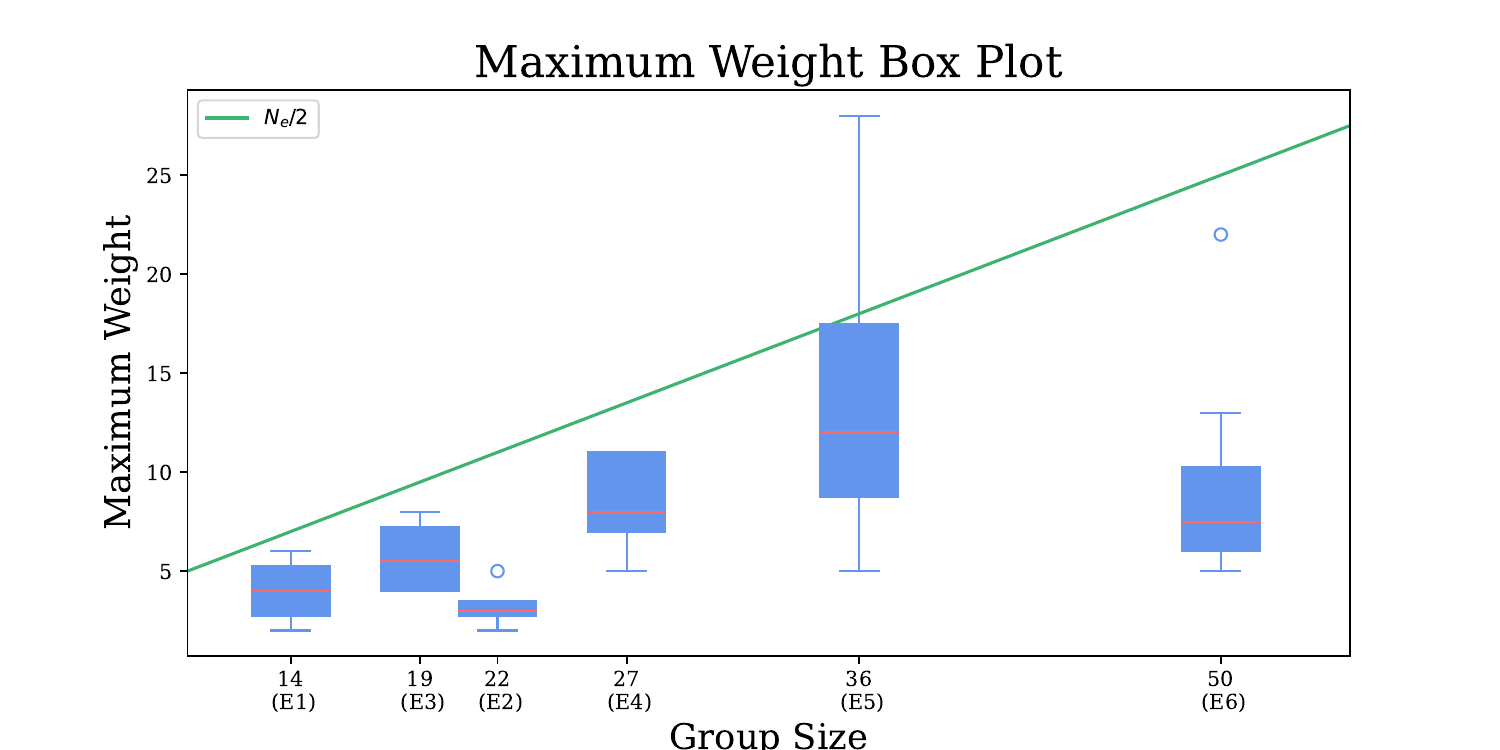}
  \end{center}
  \caption{Distribution of maximum weights received in tasks across experiments.}
  \label{Fig:weight}
\end{figure}

competence after
delegation. For each experiment $e$ and task $t,$ we use $m_{e,t} = \max_{i \in [N_e]} w_{i, t}$ to denote the maximum weight, and $\eta_{e,t}^L = \frac{\sum_{i \in [N_e]}w_{i, t}\eta_{i,t}}{N_e}$ (resp. $\eta_{e,t}^D = \frac{\sum_{i \in [N_e]}\eta_{i,t}}{N_e}$) to denote the average competence post (resp. pre) delegation.

\subsection{Coalition Analysis}\label{app:coalition}
For this section, we will use powers of coalitions as a different way to measure concentration of power. We use the standard terminology of calling voters that do not delegate, \emph{gurus}. Suppose the gurus have weights $w_1 \ge w_2 \ge \cdots$. We will say the \emph{power} of a coalition of gurus $S$ is the number of votes they control, i.e., $\sum_{i \in S} w_i$.  Note that the maximum weight is exactly the most power a single guru holds. However, we can also consider other measures such as the largest amount of power that $k$ gurus hold. This could potentially be as large as $k$ times the max weight, but in certain instances could be much smaller. Of paticular interest is the minimal coalition size that constitutes a majority, i.e., the minimal $k$ such that $\sum_{i = 1}^k w_i \ge n/2$. Note that in the case of a dictatorship, this size is just $1$, while in the case of direct democracy, it is $n/2$. Values among this spectrum give another measure of how concentrated power is.

\Cref{Fig:wcdf} shows this first property, i.e., in each experiment and each task, for any fixed $k$, what proportion of the votes do the top $k$ gurus control. \Cref{Fig:box_guru} shows the second, that is, for each group, what the fewest number of gurus that collectively control half the votes. Note that, anecdotally, the size of the smallest coalition that 
controls half the votes is about the square root of the 
group size. This is based only on the six mean numbers measured in \Cref{Fig:box_guru}. No rigorous conclusions can be drawn for this, but further empirical inquiry on this matter would be 
an interesting direction for future work.

Finally, as a way to understand the unequalness of this power, we plot the smallest majority coalition size against the total number of gurus. This can be found \Cref{Fig:gurus}. Overall, while this data is interesting, it is difficult to draw conclusions about how harmful this level of concentration of power is. Our theoretical results hold asymptotically, and either more data or larger group sizes would be needed to truly understand the implications.

\begin{figure}[htb]
  \centering
  \includegraphics[width=\linewidth]{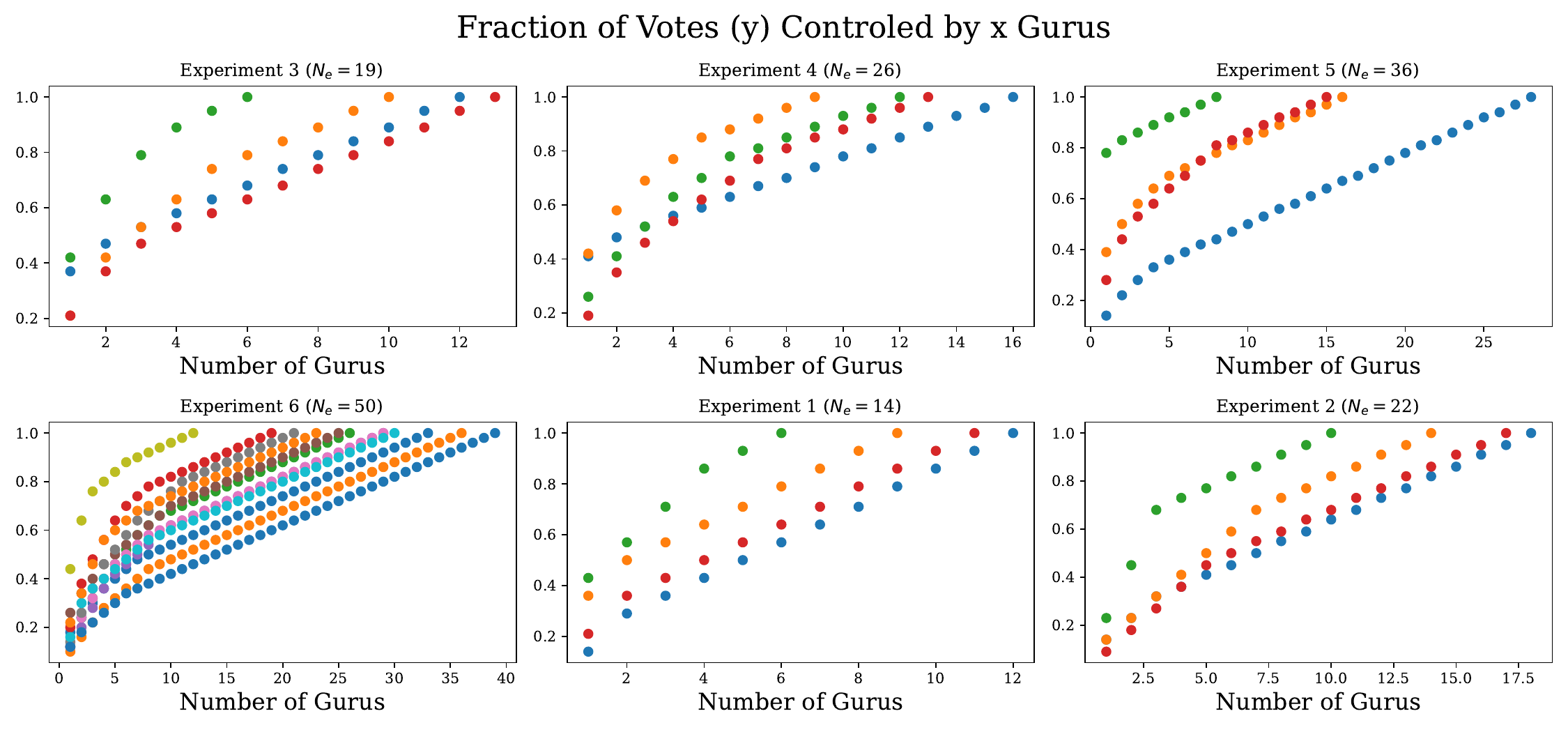}
  \caption{The fraction of votes gathered by 
  the
  smallest coalition that maximizes total weight. The x axis represents the number $k$ of voters with the $k$ largest weights. The y axis represents the fraction of votes they 
gathered altogether. Each color represents a task (the first five experiments comprised $4$ tasks, whil the sixth comprised $13$ tasks.}
  \label{Fig:wcdf}
\end{figure}

\begin{figure}[htb]
  \centering
  \includegraphics[width=0.7\linewidth]{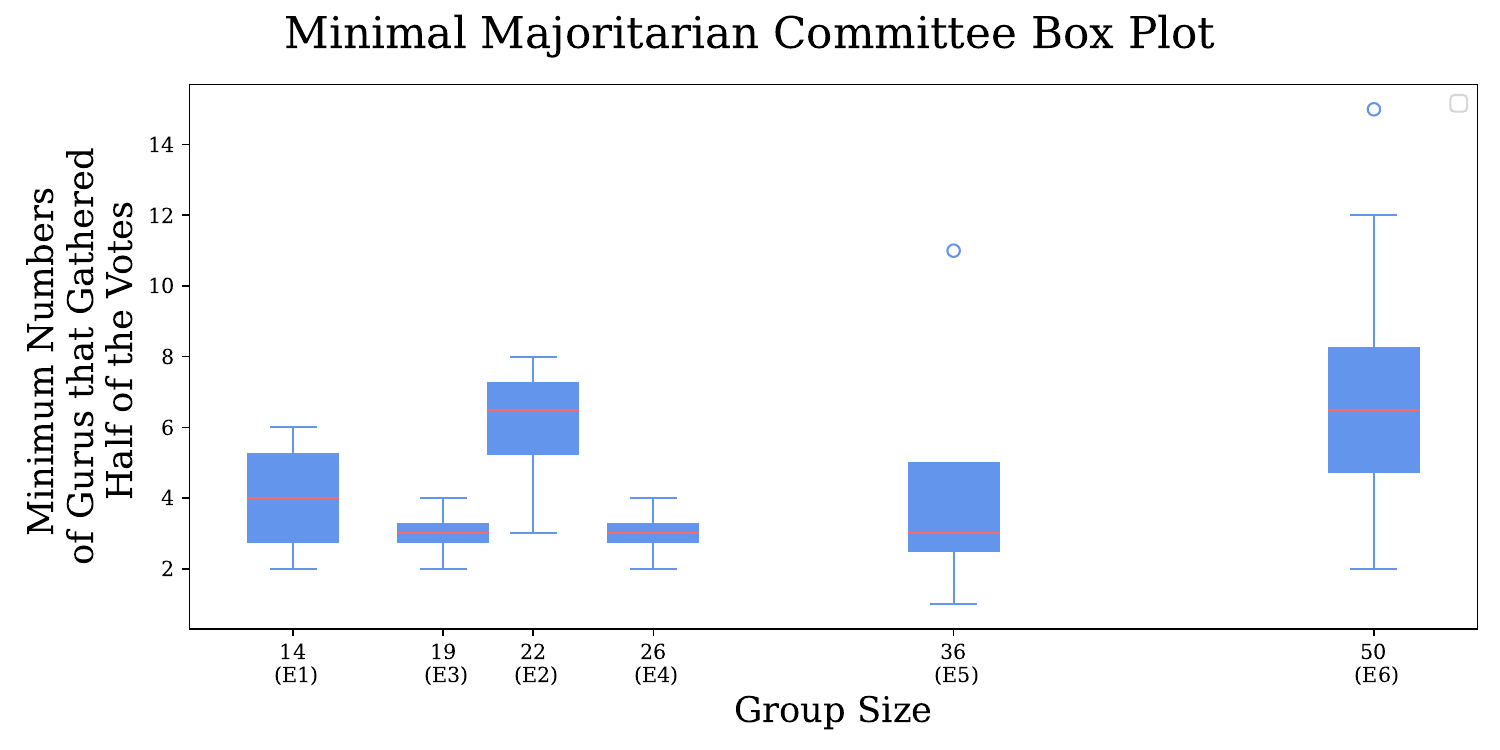}
  \caption{Smallest coalition of gurus that control half the votes. These are box-plots with medians, quantiles and outliers.}
  \label{Fig:box_guru}
\end{figure}

\begin{figure}[htb]
  \centering
  \includegraphics[width=0.7\linewidth]{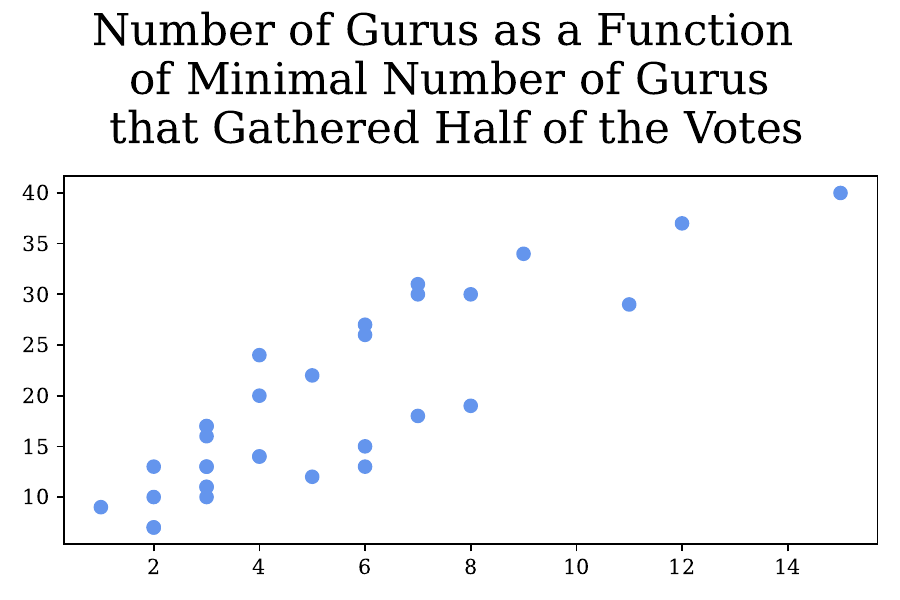}
  \caption{Number of Gurus as a Function of the Size of the Smallest Potentially 
  Majority
  Coalition (SPMC)}
  \label{Fig:gurus}
\end{figure}

\clearpage
\section{Pre-study}\label{app:pre}
We provide in this section the results of the same analyses 
in
the pre-study, that comprised $6$ experiments and a similar design that was used to inform the survey flow and material of the main study.

\subsection{Update in the Design for the Main Study}
In the main study, the spatial reasoning task was dropped as participants were almost always correct. Ambiguous questions (for which multiple answers were true) taken from \citet{simoiu2019studying} were rephrased to enforce a clear objective answer. Questions that were mislabeled in \citet{simoiu2019studying} were removed. The survey flow was also changed so that more questions could be answered in the same amount of time, and all tasks were set to have the same number of questions $|R_t|=8.$

\subsection{Recruitment}
The six groups with which the pre-study was run are described below. There were a total of $102$ participants that participated in the surveys between March 21st and April 5th, 2022. Of the participants across all experiments, 29\% were native English speakers, 16\% were female, 4\% non-binary, and 80\% were male.

\begin{table}[ht]
  \caption{Qualitative group descriptions and sizes from the pre-study}
  \centering
  \begin{tabular}{clc}
    \toprule
    Group ID & Group Description & Group Size \\
    \midrule
    1 & Research Group & 11 \\
    2 & Graduate and Undergraduate Class & 12 \\
    3 & Graduate Class & 32\\
    4 & Sports Team & 14\\
    5 & Financial Association & 18\\
    6 & Group of employees, students and faculty & 15\\
    \bottomrule
  \end{tabular}
  \label{tab:pre-studys}
\end{table}

\subsection{Material}
The tasks used in the pre-study are described below, \Cref{tab:tasks_pre} details the tasks and \Cref{tab:commands_pre} some of the questions within the tasks.

\begin{table}
  \caption{Task Descriptions}
\centering
\small
\begin{tabular}{p{1.7cm}p{1.9cm}p{9cm}p{2.6cm}}
\toprule
\textbf{ID} & \textbf{Number of Questions} &\textbf{Task Prompts} & \textbf{Corresponding Experiment(s)} \\
\midrule
Landmark & 5 & You will be shown images of architectural landmarks from around the world, and asked to select the country where the landmark is located. & $1, 2, 3, 4, 5, 6$ \\
Movie& 5 & You will be provided with short audio files with theme songs from various movies, and asked to select the movie it was featured in. & $1, 2, 3, 4, 5, 6$ \\
English& 5 & You will be given English idioms, and asked to identify their meaning. An idiom is a group of words that have a meaning not deducible from those of the individual words (e.g., rain cats and dogs, see the light). & $1, 2, 3, 4, 5, 6$ \\
Spatial& 3& You will be asked to watch a short video of the Cups and Balls magic trick, and identify the location of the ball at the end of the trick.
 & $1, 2, 3, 4, 5, 6$ \\
\midrule
Sports& 7& You will be given US college basketball teams, and asked to predict which round they will make it to in the NCAA Tournament, taking place in March 2022?& $1,2$
\\
Sports& 7&  You will be given upcoming soccer games, and asked to predict the games' outcome?
 & $3, 4, 5$ \\
Sports& 7& You will be given upcoming sport events (soccer and tennis games), and asked to predict the games' outcome?
 & $6$ \\
\bottomrule
\end{tabular}
\label{tab:tasks_pre}
\end{table}

\begin{table}[htb]
  \caption{Survey Material}
    \centering
    \scriptsize
  \begin{tabular}{lp{10cm}l}
    \toprule
    Task & Prompt & Answer\\
    \midrule
    \multirow{5}*{Landmark} &  This landmark is located in Italy. & False \\
    &  This landmark is located in Turkey. & True \\
    &  This landmark is located in Myanmar. & False \\
    &  This landmark is located in France. & False  \\
    &  This landmark is located in Brazil. & False \\
    \midrule
    \multirow{7}*{Movie} &  This music was featured as a theme song in the movie The Hobbit. & False\\
    &  This music was featured as a theme song in the movie The Empire of Sun. & False \\
    &  This music was featured as a theme song in the movie Gravity. & True \\
    &  This music was featured as a theme song in the movie Goodfellas. & False\\
    &  This music was featured as a theme song in the movie The Pianist. & False \\
    &  This music was featured as a theme song in the movie A Passage through India. & False\\
    &  This music was featured as a theme song in the movie The Schindler's List. & True \\
    \midrule
    \multirow{5}*{English} &  ``A man of straw" means ``A very active person". & False \\
    &  ``To drive home" means ``To emphasize". & True \\
    &  ``To smell a rat" means ``To suspect foul dealings". & True \\
    &  ``To end in smoke" means ``To excite great applause". & False \\
    &  ``To catch a tartar" means ``To deal with a person who is more than one's match". & False \\
    \midrule
    \multirow{5}*{Prediction for Experiments 1-2} &  The US college basketball team West Virginia Mountaineers will make it to the Elite Eight in the 2022 NCAA Tournament. & False \\
    &  The US college basketball team Michigan State Spartans will make it to the First Round in the 2022 NCAA Tournament. & True \\
    &  The US college basketball team Syracuse Orange will win the 2022 NCAA Tournament. & False \\
    &  The US college basketball team Purdue Boilermakers will make it to the 2nd round in the 2022 NCAA Tournament. & True \\
    &  The US college basketball team Arizona Wildcats will make it to the Elite Eight in the 2022 NCAA Tournament. & False \\
    \midrule
    \multirow{5}*{Prediction for Experiments 3-5} &  Galatasaray SK will beat FC Barcelona during the Europa League game on March 17th. & False \\
    &  Olympic de Marseille and OGC Nice will tie during the French League game on March 20th. & False \\
    &  VFL Wolfsburg will beat Bayer 04 Leverkusen during the German League game on March 20th. & False \\
    &  Salernitana will lose against Juventus during the Italian League game on March 20th. & True \\
    &  FC Barcelona and Real Madrid CF will tie during the Spanish League game on March 20th. & False \\
    \midrule
    \multirow{5}*{Prediction for Experiment 6} &  Eintracht Frankfurt will beat FC Barcelona during the Europa League game on April 7th. & False \\
    &  Olympic de Lyon and West Ham United will tie during the Europa League game on April 7th. & True \\
    &  Brazil will lose to Spain during the Women's International Friendly game on April 7th. & False \\
    &  Neither Rafael Nadal nor Novak Djokovic will qualify for the ATP Masters 1000 Monte Carlo Final on April 17th. & NA \\
    &   Stefanos Tsitsipas will win the ATP Masters 1000 Monte Carlo Tournament on April 17th. & NA \\
    \midrule
    \multirow{3}*{Spatial Reasoning} &  The object is located in the middle cup at the end of the trick. & False \\
    &  The object is located in the middle cup at the end of the trick. & False \\
    &  The object is located in the right cup at the end of the trick. & True \\
    \bottomrule
  \end{tabular}
  \label{tab:commands_pre}
\end{table}

\subsection{Assessing competence}
We run the IRT framework and find a correlation of $97\%$ between the naive competence $\eta^{\text{naive}}_{i,t}$ and the competence computed accounting for task difficulty $\eta_{i,t}.$ Note that, while the competence distribution for the main experiment is normal (Appendix~\ref{app:norm}), the competence distribution in the pre-study is not, due to the spatial reasoning task for which almost all participants are correct.
We show the distribution of competence computed with both the naive and IRT frameworks
in \Cref{Fig:expertise_distribution_pre}.
\begin{figure}[htb]
  \centering
    \includegraphics[width=0.4\linewidth]{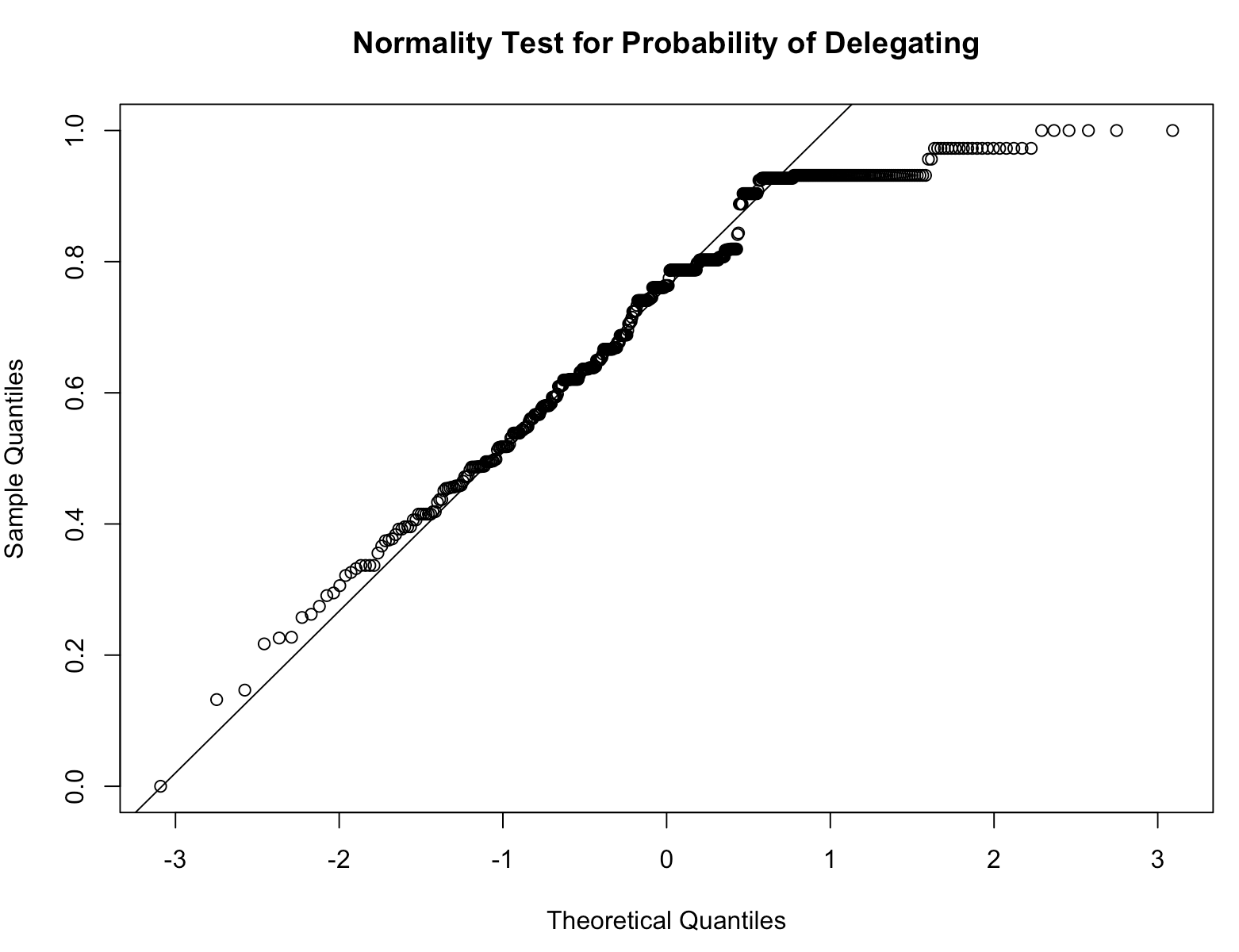}
  \includegraphics[width=0.5\linewidth]{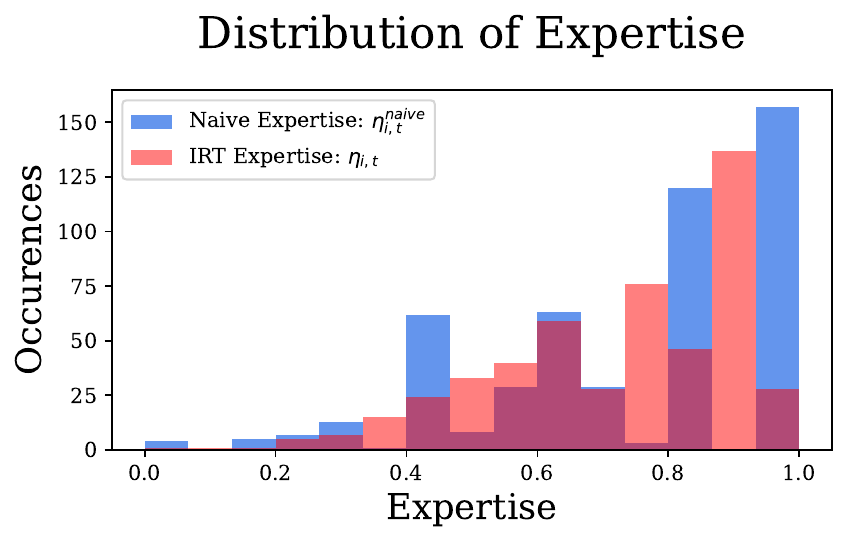}
  \caption{Normality Test for competence in pre-study (left) and Distribution of competence per delegation behavior (right)}
  
  \label{Fig:expertise_distribution_pre}
\end{figure}

\subsection{Delegation Statistics}
We collected $505$ delegation data points, one per participant per task. Of those, $28\%$ were delegations and $57\%$ are direct participants that did not receive any delegation besides their own. Among the delegates, $21\%$ received only one delegation besides their own (hence had weight $2$ in the decision), $11\%$ received two delegations besides their own and just about $1\%$ received five or more delegations besides their own, showing little sign of concentration of power.

\Cref{Fig:graphs} gives examples of some delegation graphs, where nodes labeled by their naive competence $\eta^{naive}_{i, t} = \sum_{i\in [N_e]} v^{D}_{i, e, t} / R_t$. 
The top left plot shows an example of a successful delegation chain to the right, where an expert from experiment $6$ and task \textit{landmarks}, with $\eta_{i, t}=1$ was identified by six other participants either directly or transitively through a local expert $j$ with $\eta_{j, t}=0.8$. On the right, a smaller chain 
shows
two participants delegating to a more competent expert, who in turn delegates to a non-expert. 

Over the course of the experiments, we observed only two delegation cycles of size two (where A delegates to B, who delegates to A), both in Experiment $3$ with $N_3=32.$

Note that in the pre-study, only $27\%$ of the tasks were delegated, which is much less than the $47\%$ delegation rate of the main study.

\begin{figure}[htb]
\begin{center}
      \includegraphics[width=\linewidth]{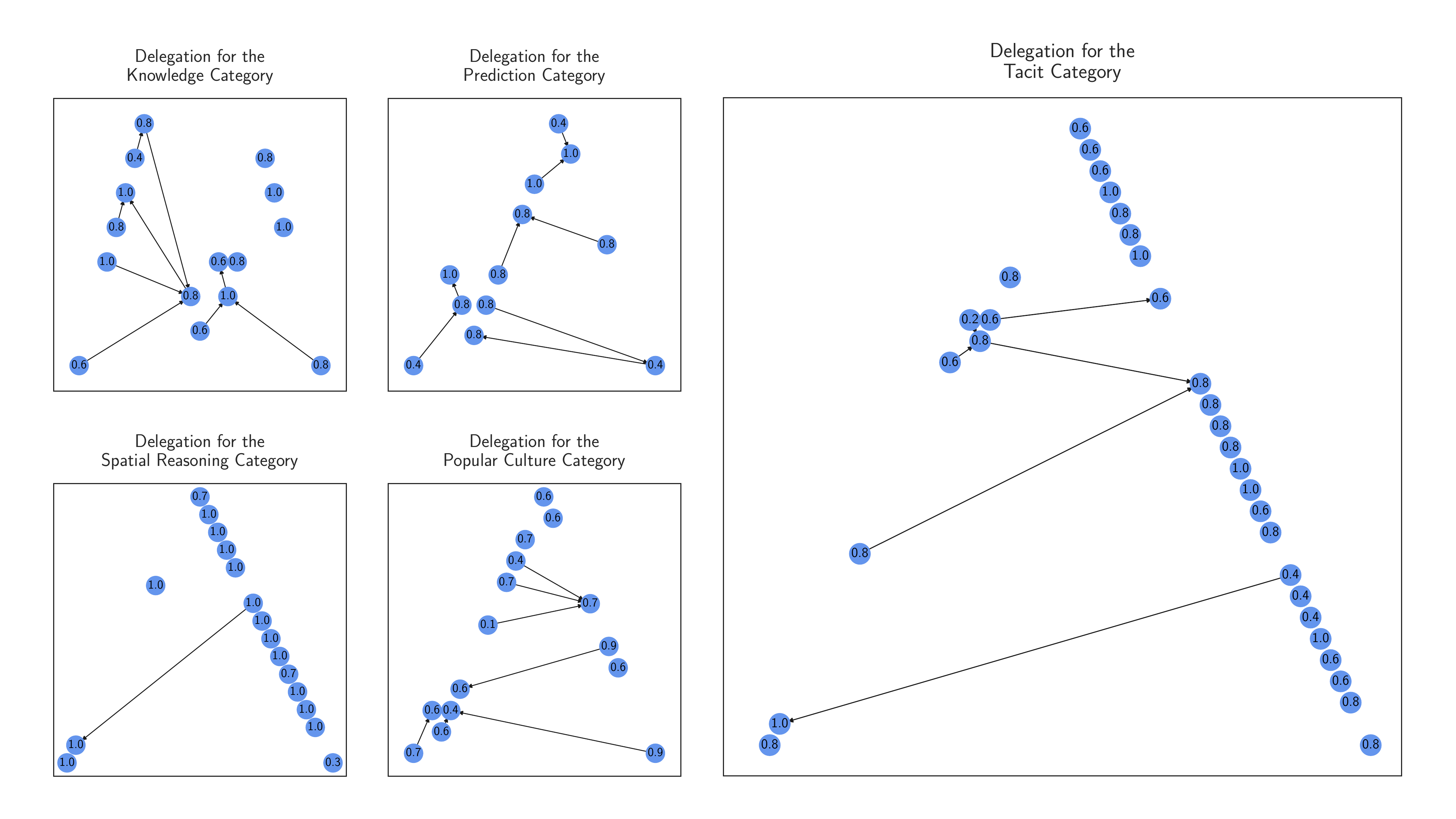}
\end{center}

  \caption{Examples of delegatoin graphs for each of the task categories}
  \label{Fig:graphs}
\end{figure}

\subsection{Estimating $q$ and $\varphi$}
  


Next, we estimate the $q$ and $\varphi$ functions using the same methods applied on the main study. \Cref{table:resultsQ_pre} shows the equivalent results of \Cref{table:resultsQ,tab:task-specific-6} for the pre-study.  We again, using the Kneedle algorithm, find the optimal number of clusters to be $4$. Statistics about the clusters can be found in \Cref{tab:bucket-stats-pre}. \Cref{Fig:phi_batch1} and \Cref{table:resultsPHI_pre_kmeans} show the equivalent information to \Cref{Fig:phiQ} and \Cref{phi:test_table}. We overall find qualitatively very similar results. We can conclude that both $q$ is decreasing and $\varphi$ is increasing in its second coordinate in statistically significant ways.

\begin{table}[htb]
\caption{Estimates of $\beta_q$ \label{table:resultsQ_pre}}
  \centering
  \footnotesize
{\begin{tabular}{@{\extracolsep{5pt}}lcccccccc} 
\\[-1.8ex]\hline 
\hline \\[-1.8ex] 
& \multicolumn{1}{c@{}}{Overall Model} & 
\multicolumn{7}{c@{}}{Tasks Models}\\ 
\cmidrule(lr){2-2}
\cmidrule(lr){3-9}
& \multicolumn{1}{c@{}}{\textit{}}
& \multicolumn{7}{c@{}}{(Tasks)}
\\[1ex] 
& \mc{} & \mc{($T_1$)} & \mc{($T_3$)} & \mc{($T_2$)} & \mc{($T_4$)} & \mc{($T_5$)} & \mc{($T_6$)} & \mc{($T_7$)}\\ 
\midrule
Effect Size $\beta^q$  & $-2.45^{****}$ & $-4.44^{**}$ & $0.02$ & $-5.11^{***}$ & $0.48$ & $-3.52$ & $-3.10^{**}$ & $-3.11$ \\ 
  & (0.50) & (2.24) & (1.69) & (1.50) & (2.28) & (2.19) & (1.34) & (3.81) \\ [1ex]
\midrule
 \begin{tabular}{@{}c@{}}Fixed Effects 
 \end{tabular} & \multicolumn{1}{c}{NA}  & \multicolumn{1}{c}{NA} & \multicolumn{1}{c}{NA} &
  \multicolumn{1}{c}{NA} & \multicolumn{1}{c}{NA} &
  \multicolumn{1}{c}{NA}& \multicolumn{1}{c}{NA} &
  \multicolumn{1}{c}{NA} \\
\midrule
  \begin{tabular}{@{}c@{}} Clustered S.E.\end{tabular} & \multicolumn{1}{c}{\textrm{i}}  & \multicolumn{1}{c}{\textrm{NA}} & \multicolumn{1}{c}{\textrm{NA}} &
  \multicolumn{1}{c}{\textrm{NA}} & \multicolumn{1}{c}{\textrm{NA}} &
  \multicolumn{1}{c}{\textrm{NA}}
   & \multicolumn{1}{c}{\textrm{NA}} &
  \multicolumn{1}{c}{\textrm{NA}}\\ 
\hline \\ [-1.8ex] 
\textit{Note:}  & \multicolumn{8}{r}{$^{*}$p$<$0.1; $^{**}$p$<$0.05; $^{***}$p$<$0.01; $^{****}$p$<$0.0001} \\
\end{tabular}}
\end{table}



\begin{table}[htb]
    \centering
    \caption{Bucket Descriptions}
    \label{tab:bucket-stats-pre}
    \begin{tabular}{cccc}
    \toprule
        Bucket & Interval & Mean competence & Proportion of participants \\
        \midrule
        $c_1$ &$[0.00, 0.50]$ &$0.40$&$15\%$ \\
        $c_2$ &$[0.51, 0.69]$&$0.61$& $ 26\%$\\
        $c_3$ & $[0.70, 0.84]$&$0.78$& $27\%$\\
        $c_4$ & $[0.89, 1.00]$&$0.94$& $33\%$\\
        \bottomrule
    \end{tabular}
\end{table}


\begin{figure}[htb]
  \includegraphics[valign=c, width=0.5\linewidth]{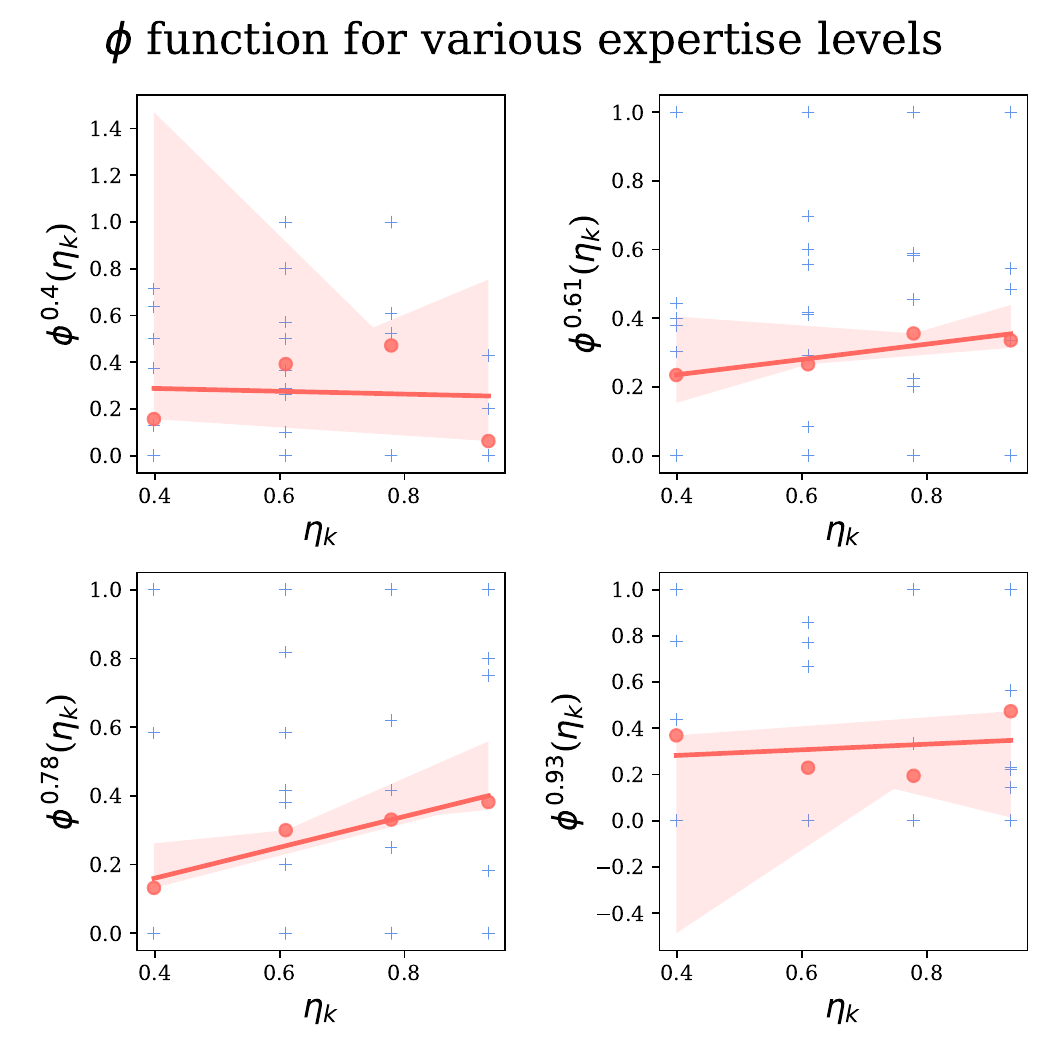}
   \includegraphics[valign=c, width=0.5\linewidth]{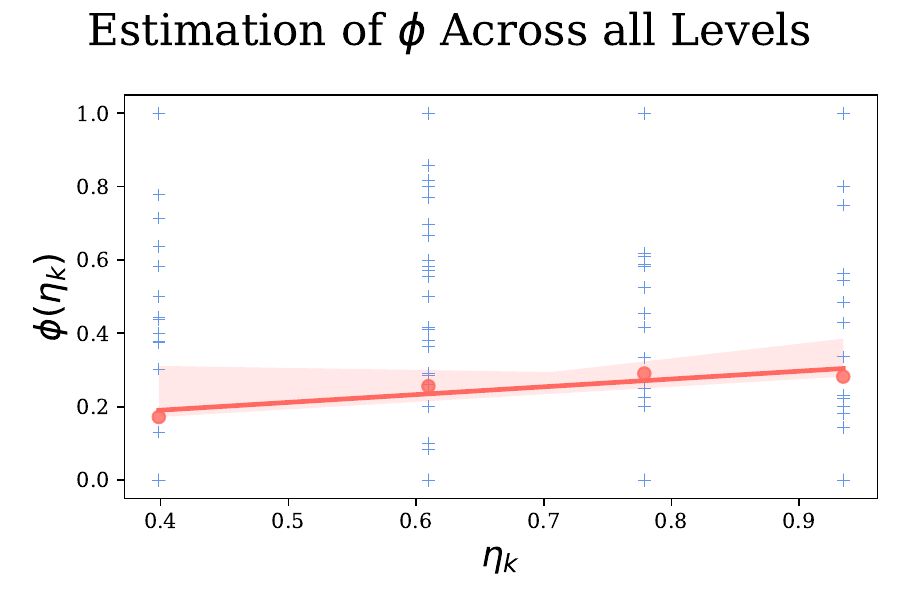}

    \caption{Pooled estimates of $\varphi^{\ell}_{e, t}$, both for each bucket individually, and grouped together. The blue crosses show the values computed for $\varphi_{e,t}^{\ell}(\eta_k)$. The pink dots show the average across all values for that $\eta_k$, and the pink lines correspond to a linear regression over the mean values. We observe increasing trends across the board, with slope (coefficient of determination) being $-0.05 (0.005), 0.18 (0.81), 0.39 (0.91)$ and $0.10 (0.05)$ respectively, for individual buckets, and $0.21 (0.81)$ for the pooled test.}
    
  \label{Fig:phi_batch1}
\end{figure}

\begin{table}[htb]
\caption{Summary of correlation statistics}\label{table:resultsPHI_pre_kmeans}

  \centering
{  \footnotesize\begin{adjustbox}{angle=0}\begin{tabular}{@{\extracolsep{5pt}}lccccc} 
\\[-1.8ex]\hline 
\hline \\[-1.8ex] 
& \multicolumn{1}{c@{}}{Overall} & 
\multicolumn{3}{c@{}}{For fixed $\ell$}\\ 
\cmidrule(lr){2-2}
\cmidrule(lr){3-6} \\[-2ex] 
& & $c_1$ & \mc{$c_2$} & \mc{$c_3$} & \mc{$c_4$}\\ 
\midrule
Correlation & $0.099^{*}$ & 
 $-0.009$ & $0.079$ & $0.20^{*}$ & $0.19$\\ 
  P-value & $8\times 10^{-1}$ & 
  $0.93$& $0.46$ & $9\times10^{-1}$ & $0.11$\\ \hline \\ [-1.8ex] \textit{Note:}  & \multicolumn{5}{r}{$^{*}$p$<$0.1; $^{**}$p$<$0.05; $^{***}$p$<$0.01; $^{****}$p$<$0.0001} 
\end{tabular}\end{adjustbox}}
\end{table}

\end{APPENDIX}
\end{document}